\theoremstyle:=definition,remark,plain\do{%
        \expandafter\g@addto@macro\csname th@\theoremstyle\endcsname{%
            \addtolength\thm@preskip\parskip
            }%
        }
\let\R\relax
\newcommand*{\R}{\mathbb{R}}
\newcommand*{\Z}{\mathbb{Z}}
\newcommand*{\1}{\mathds{1}}
\let\span\relax
\DeclareMathOperator{\span}{span}
\DeclareMathOperator{\conv}{conv}
\DeclareMathOperator{\PM}{PM}
\DeclareMathOperator{\avg}{avg}
\DeclareMathOperator*{\argmin}{argmin}
\DeclarePairedDelimiter{\norm}{\lVert}{\rVert}
\DeclarePairedDelimiter{\card}{\lvert}{\rvert}
\DeclarePairedDelimiter{\abs}{\lvert}{\rvert}
\providecommand{\given}{}
\DeclarePairedDelimiterX{\set}[1]\{\}{\renewcommand\given{\nonscript\:\delimsize\vert\nonscript\:\mathopen{}}#1}
\let\Pr\relax
\DeclarePairedDelimiterXPP{\Pr}[1]{\mathbb{P}}[]{}{\renewcommand\given{\nonscript\:\delimsize\vert\nonscript\:\mathopen{}}#1}
\DeclarePairedDelimiterXPP{\Ex}[1]{\mathbb{E}}[]{}{\renewcommand\given{\nonscript\:\delimsize\vert\nonscript\:\mathopen{}}#1}
\DeclarePairedDelimiter{\dotp}{\langle}{\rangle}
\newclass{\quasiNC}{quasi\text{-}NC}
\newcommand*{\ds}[2][S]{\dotp{\1_{\delta(#1)}, #2}}
\newcommand*{\dv}[1]{\ds[v]{#1}}
\newcommand*{\eqtwo}{\stackrel{2}{\equiv}}
\newcommand*{\bcard}[1]{\overline{[#1]}}
\newcommand*{\cF}{\mathcal{F}}
\theoremstyle{plain}
\newtheorem{theorem}{Theorem}
\newtheorem{lemma}{Lemma}
\newtheorem{corollary}{Corollary}
\theoremstyle{definition}
\newtheorem{definition}{Definition}
\theoremstyle{remark}
\newtheorem{remark}{Remark}
\newcommand*{\Pic}[1]{\ifhavetikz\tikz \pic{#1};\fi}
\definecolor{Black}{RGB}{0, 0, 0}
\definecolor{LightGray}{RGB}{216, 216, 216}
\definecolor{Gray}{RGB}{127, 127, 127}
\definecolor{Orange}{RGB}{237, 125, 49}
\definecolor{LightOrange}{RGB}{251, 229, 214}
\definecolor{Yellow}{RGB}{255, 192, 0}
\definecolor{Blue}{RGB}{91, 155, 213}
\definecolor{LightBlue}{RGB}{222, 235, 247}
\definecolor{Green}{RGB}{112, 173, 71}
\definecolor{LightGreen}{RGB}{226, 240, 217}
\definecolor{Navy}{RGB}{68, 114, 196}
\definecolor{LightNavy}{RGB}{218, 227, 243}
\newcommand*{\Orange}[1]{\textcolor{Orange}{#1}}
\tikzstyle{node}=[circle, line width=1, draw=Black, fill=Gray, inner sep=3]
\tikzstyle{edge}=[line width=1, color=Gray]
\tikzstyle{highlight}=[line width=2, color=Orange]
\tikzstyle{set}=[line width=1, color=Black, dashed]
\tikzstyle{fillset}=[color=Black, fill=Gray, line width=1]
\tikzset{use path/.code=\tikz@addmode{\pgfsyssoftpath@setcurrentpath#1}}
\tikzset{
	graph/.pic={
		\foreach \v/\coord in {{a/(-2, -1)}, {b/(-2, 1)}, {c/(-1, 0)}, {d/(2, -1)}, {e/(2, 1)}, {f/(1, 0)}} \node[node] (\v) at \coord {};
		\foreach \u/\v in {a/d, b/e, c/f, a/b, b/c, c/a, d/e, e/f, f/d} \draw[edge] (\u) -- (\v);
		\draw[set] (-3, 0) to[out=90, in=180] (-2, 1.5) to[out=0, in=90] (-0.5, 0) to[out=-90, in=0] (-2, -1.5) to[out=180, in=-90] cycle;
		\foreach \u/\v/\pos in {b/e/above, b/c/above, e/f/above, a/b/left, d/e/right} \draw (\u) edge[edge] node[pos=0.5, \pos, Black] {$\frac{1}{3}$} (\v);
		\foreach \u/\v/\pos in {a/d/below, c/f/above}
			\path (\u) edge[highlight] node[pos=0.5, \pos, Black] {$\frac{1}{3}\Orange{-\epsilon}$} (\v);
		\foreach \u/\v/\pos in {a/c/right, d/f/left}
			\path (\u) edge[highlight] node[pos=0.5, \pos, Black] {$\frac{1}{3}\Orange{+\epsilon}$} (\v);
		\node at (-3, 1.5) {$S$};
	}
}
\tikzset{
	contracted/.pic={
		\foreach \v/\coord in {{a/(-2, 0)}, {b/(-2, 0)}, {c/(-2, 0)}, {d/(2, -1)}, {e/(2, 1)}, {f/(1, 0)}} \node[node] (\v) at \coord {};
		\foreach \u/\v in {a/d, b/e, c/f, a/b, b/c, c/a, d/e, e/f, f/d} \draw[edge] (\u) -- (\v);
		\foreach \u/\v/\pos in {a/d/below, a/e/above, a/f/right, d/e/right, e/f/above, f/d/above}
			\path (\u) edge[edge] node[pos=0.5, Black, \pos] {$\frac{1}{3}$} (\v);
		\node at (-2.3, 0.3) {$S$};
	}
}
\tikzset{
	matching-polytope-base/.pic={
		\begin{scope}[scale=2]
			\coordinate (O) at (0.5, 0.5, 0.5);
			\coordinate (U1) at (1, 0.5, 0.5);
			\coordinate (U2) at (0.5, 1, 0.5);
			\coordinate (V) at (1, 1, 0.5);
			\coordinate (A1) at (0, 0, 0);
			\coordinate (A2) at (1, 0, 0);
			\coordinate (A3) at (1, 1, 0);
			\coordinate (A4) at (0, 1, 0);
			\coordinate (B1) at (0, 0, 1);
			\coordinate (B2) at (1, 0, 1);
			\coordinate (B3) at (1, 1, 1);
			\coordinate (B4) at (0, 1, 1);
			\coordinate (C1) at (1, 0.6, 0);
			\coordinate (C2) at (0.6, 1, 0);
			\coordinate (D1) at (1, 0.6, 1);
			\coordinate (D2) at (0.6, 1, 1);
			\coordinate (E1) at (-0.5, 1.5, -0.5);
			\coordinate (E2) at (-0.5, 1.5, 1.5);
			\coordinate (F2) at (-0.5, 1.5, 1);
			\coordinate (E4) at (1.5, -0.5, -0.5);
			\coordinate (E3) at (1.5, -0.5, 1.5);
			\coordinate (F3) at (1.5, -0.5, 1);
			\coordinate (M1) at (0, 0, 0.5);
			\coordinate (M2) at (-0.5, -0.5, 0.5);
		\end{scope}
	}
}
\tikzset{
	bipartite-matching-polytope0/.pic={
		\pic{matching-polytope-base};
		\begin{scope}[fill=Gray, opacity=0.3]
			\fill (A1) -- (A2) -- (A3) -- (A4) -- cycle;
			\fill (A1) -- (A2) -- (B2) -- (B1) -- cycle;
			\fill (A1) -- (A4) -- (B4) -- (B1) -- cycle;
		\end{scope}
	}
}
\tikzset{
	bipartite-matching-polytope1/.pic={
		\begin{scope}[line width=1, Black, dashed, opacity=0.6]
			\foreach \u/\v in {A1/A4, A1/A2, A1/B1}
				\draw (\u) -- (\v);
		\end{scope}
	}
}
\tikzset{
	bipartite-matching-polytope2/.pic={
		\begin{scope}[fill=Gray, opacity=.3]
			\fill (B1) -- (B2) -- (B3) -- (B4) -- cycle;
			\fill (B3) -- (A3) -- (A4) -- (B4) -- cycle;
			\fill (B3) -- (A3) -- (A2) -- (B2) -- cycle;
		\end{scope}
		\begin{scope}[line width=1, Black, opacity=1.0]
			\foreach \u/\v in {A2/A3, A3/A4, B1/B2, B2/A2, B1/B4, B4/A4, B3/B2, B3/B4, B3/A3}
				\draw (\u) -- (\v);
		\end{scope}
	}
}
\tikzset{
	bipartite-matching-polytope/.pic={
		\pic{bipartite-matching-polytope0};
		\pic{bipartite-matching-polytope1};
		\pic{bipartite-matching-polytope2};
	}
}
\tikzset{
	bipartite-matching-polytope01/.pic={
		\pic{bipartite-matching-polytope0};
		\pic{bipartite-matching-polytope1};
	}
}
\tikzset{
	matching-polytope0/.pic={
		\pic{matching-polytope-base};
		\begin{scope}[fill=Gray, opacity=0.3]
			\fill (A1) -- (A2) -- (C1) -- (C2) -- (A4) -- cycle;
			\fill (A1) -- (A2) -- (B2) -- (B1) -- cycle;
			\fill (A1) -- (A4) -- (B4) -- (B1) -- cycle;
		\end{scope}
	}
}
\tikzset{
	matching-polytope1/.pic={
		\begin{scope}[line width=1, Black, dashed, opacity=0.6]
			\foreach \u/\v in {A1/A4, A1/A2, A1/B1}
				\draw (\u) -- (\v);
		\end{scope}
	}
}
\tikzset{
	matching-polytope2/.pic={
		\begin{scope}[fill=Gray, opacity=.3]
			\fill (B1) -- (B2) -- (D1) -- (D2) -- (B4) -- cycle;
			\fill (B2) -- (D1) -- (C1) -- (A2) -- cycle;
			\fill (B4) -- (D2) -- (C2) -- (A4) -- cycle;
			\fill (C1) -- (D1) -- (D2) -- (C2) -- cycle;
		\end{scope}
		\begin{scope}[line width=1, Black, opacity=1.0]
			\foreach \u/\v in {B1/B2, B2/A2, B1/B4, B4/A4, A2/C1, C1/C2, C2/A4, B2/D1, D1/D2, D2/B4, C1/D1, C2/D2}
				\draw (\u) -- (\v);
		\end{scope}
	}
}
\tikzset{
	matching-polytope/.pic={
		\pic{matching-polytope0};
		\pic{matching-polytope1};
		\pic{matching-polytope2};
	}
}
\tikzset{
	matching-polytope01/.pic={
		\pic{matching-polytope0};
		\pic{matching-polytope1};
	}
}
\tikzset{
	bipartite-matching-polytope-seq/.pic={
		\pic at (-2.5, 0) {bipartite-matching-polytope01};
		\node at (O) {\LARGE $\Orange{\bullet}$};
		\draw[Yellow, line width=2, ->] (O) -- (U1);
		\pic at (-2.5, 0) {bipartite-matching-polytope2};
		\node at (0, 0.5) {\Huge $+$};
		\pic at (1.5, 0) {bipartite-matching-polytope01};
		\node at (O) {\LARGE $\Orange{\bullet}$};
		\draw[Yellow, line width=2, ->] (O) -- (U2);
		\pic at (1.5, 0) {bipartite-matching-polytope2};
		\node at (0, -1) {$\underbrace{\hspace{15em}}$};
		\node at (0, -1.6) {\Huge $\Downarrow$};
		\pic at (-0.5, -4.2) {bipartite-matching-polytope01};
		\node at (O) {\LARGE $\Orange{\bullet}$};
		\draw[Yellow, line width=2] (O) -- (barycentric cs:C1=1,D1=1,C2=1,D2=1);
		\draw[Yellow, line width=2, ->] (barycentric cs:C1=1,D1=1,C2=1,D2=1) -- (V);
		\pic at (-0.5, -4.2) {bipartite-matching-polytope2};
	}
}
\tikzset{
	matching-polytope-seq/.pic={
		\pic at (-2.5, 0) {matching-polytope01};
		\node at (O) {\LARGE $\Orange{\bullet}$};
		\draw[Yellow, line width=2, ->] (O) -- (U1);
		\pic at (-2.5, 0) {matching-polytope2};
		\node at (0, 0.5) {\Huge $+$};
		\pic at (1.5, 0) {matching-polytope01};
		\node at (O) {\LARGE $\Orange{\bullet}$};
		\draw[Yellow, line width=2, ->] (O) -- (U2);
		\pic at (1.5, 0) {matching-polytope2};
		\node at (0, -1) {$\underbrace{\hspace{15em}}$};
		\node at (0, -1.6) {\Huge $\Downarrow$};
		\pic at (-0.5, -4.2) {matching-polytope01};
		\node at (O) {\LARGE $\Orange{\bullet}$};
		\draw[Yellow, line width=2] (O) -- (barycentric cs:C1=1,D1=1,C2=1,D2=1);
		\pic at (-0.5, -4.2) {matching-polytope2};
		\draw[Yellow, line width=2, ->] (barycentric cs:C1=1,D1=1,C2=1,D2=1) -- (V);
	}
}
\tikzset{
	hyperplane/.pic={
		\pic{matching-polytope-base};
		\fill[Blue, opacity=0.9] (B4) -- (A4) -- (A2) -- (B2) -- (F3) -- (E4) -- (E1) -- (F2) -- cycle;
		\draw[Orange, line width=2, ->] (M1) -- (M2) node[Black, pos=1.2] {$w$};
		
		\pic{matching-polytope0};
		\pic{matching-polytope1};
		\draw[Orange, line width=2] (O) -- (M1);
		\fill[Blue, opacity=0.9] (A4) -- (B4) -- (B2) -- (A2) -- cycle;
		\node at (O) {\LARGE $\Orange{\bullet}$};
		\draw[Yellow, line width=2, ->] (O) -- (U1);
		\draw[Yellow, line width=2, ->] (O) -- (U2);
		\pic{matching-polytope2};
		\fill[Blue, opacity=0.9] (E2) -- (F2) -- (F3) -- (E3) -- cycle;
		\node[rotate=45] at (barycentric cs:E1=7,E2=7,E3=1,E4=1) {$\dotp{w, x}=\text{const}$};
		\node (minimizer) at (barycentric cs:C1=1,C2=1,D1=1,D2=1) {\LARGE $\bullet$};
		\node (text) at (barycentric cs:O=1,V=-2) {minimizer of $\dotp{w,x}$};
		\draw[dashed, ->] (text) edge[out=-120, in=30] (minimizer);
	}
}
\tikzset{
	shrinking1/.pic={
		\foreach \u/\coord in {{a/(0, 0)}, {b/(2, 0)}, {c/(1, 0.5)}, {d/(0, -2)}, {e/(-0.5, -1)}, {f/(-1, -2)}, {g/(1, -2)}, {h/(2, -1)}, {i/(1, -1)}, {j/(2.3, -2.3)}}
			\node[node] (\u) at \coord {\u};
		\foreach \u/\v in {a/b, b/c, c/d, d/e, e/f, f/d, d/g, g/h, h/i, i/a, a/e, h/b, i/g, g/j, h/j}
			\draw[edge] (\u) -- (\v);
		\draw[set] (-1.5, -2) to[out=90, in=-180] (-0.5, -0.5) to[out=0, in=90] (0.5, -2) to[out=-90, in=-90] cycle;
	}
}
\tikzset{
	shrinking2/.pic={
		\foreach \u/\coord in {{a/(0, 0)}, {b/(2, 0)}, {c/(1, 0.5)}, {def/(-0.8, -1.3)}, {g/(1, -2)}, {h/(2, -1)}, {i/(1, -1)}, {j/(2.3, -2.3)}}
			\node[node] (\u) at \coord {\u};
		\foreach \u/\v in {a/b, b/c, c/def, def/g, g/h, h/i, i/a, a/def, h/b, i/g, g/j, h/j}
			\draw[edge] (\u) -- (\v);
		\draw[set] (0.7, -0.7) to[out=45, in=60] (2.5, -1) to[out=-120, in=30] (1, -2.5) to[out=-150, in=-135] cycle;
		\draw[set] (-1.2, -2.3) to[out=-145, in=-135] (-0.4, 0.5) to[out=45, in=50] (1.6, 0.5) to[out=-130, in=35] cycle;
	}
}
\tikzset{
	shrinking3/.pic={
		\foreach \u/\coord in {{acdef/(0, -0.5)}}
			\node[node, color=Orange, fill=LightOrange] (\u) at \coord {\u};
		\foreach \u/\coord in {{b/(2, 0)},  {ghi/(1.2, -1.2)}, {j/(2.3, -2.3)}}
			\node[node] (\u) at \coord {\u};
		\foreach \u/\v in {acdef/b, acdef/ghi, ghi/b, ghi/j}
			\draw[edge] (\u) -- (\v);
	}
}
\tikzset{
	shrinking/.pic={
		\pic{shrinking1};
		\node at (3, -1) {\Huge $\Rightarrow$};
		\pic at (5, 0) {shrinking2};
		\node at (8.2, -1) {\Huge $\Rightarrow$};
		\pic at (9.5, 0) {shrinking3};
	}
}
\tikzset{
	sameblock/.pic={
		\node[node] (a1) at (-1, -1) {};
		\node[node] (a2) at (1, -1) {};
		\node[node] (b1) at (-1, 1) {};
		\node[node] (b2) at (1, 1) {};
		\coordinate (a0) at (-1, -2); 
		\coordinate (a3) at (1, -2);
		\coordinate (b0) at (-1, 2);
		\coordinate (b3) at (1, 2);
		\draw (a0) edge[highlight] node[pos=0.5, Black, right] {$-\epsilon_1$} (a1);
		\path (a1) edge[highlight] node[pos=0.5, Black, above] {$+\epsilon_1$} (a2);
		\path (a2) edge[highlight] node[pos=0.5, Black, left] {$-\epsilon_1$} (a3);
		\path (a3) edge[dashed, out=-120, in=-60] (a0);
		\path (b0) edge[highlight] node[pos=0.5, Black, right] {$-\epsilon_2$} (b1);
		\path (b1) edge[highlight] node[pos=0.5, Black, below] {$+\epsilon_2$} (b2);
		\path (b2) edge[highlight] node[pos=0.5, Black, left] {$-\epsilon_2$} (b3);
		\path (b3) edge[dashed, out=120, in=60] (b0);
		\draw[set] (-2, 0) to[out=90, in=180] (0, 2) to[out=0, in=90] (2, 0) to[out=-90, in=0] (0, -2) to[out=180, in=-90] cycle;
		\node at (0, 0) {$S$};
	}
}
\title{Planar Graph Perfect Matching is in NC}
\author[1]{Nima Anari}
\author[2]{Vijay V.~Vazirani}
\affil[1]{Computer Science Department, Stanford University,\footnote{Part of this work was done while the first author was visiting the Simons Institute for the Theory of Computing. It was partially supported by the DIMACS/Simons Collaboration on Bridging Continuous and Discrete Optimization through NSF grant CCF-1740425.} \textsf{anari@cs.stanford.edu}}
\affil[2]{Computer Science Department, University of California, Irvine,\footnote{Supported in part by NSF grant CCF-1216019.} \textsf{vazirani@ics.uci.edu}}
\date{}
\begin{document}
	\maketitle
	\begin{abstract}
Is perfect matching in \NC? That is, is there a deterministic fast parallel algorithm for it? This has been an outstanding
open question in theoretical computer science for over three decades, ever since the discovery of \RNC~matching algorithms.
Within this question, the case of planar graphs has remained an enigma: On the one hand, counting the
number of perfect matchings is far harder than finding one (the former is \#\P-complete and the latter is in \P),
and on the other, for planar graphs, counting has long been known to be in \NC~whereas finding one has
resisted a solution.

In this paper, we give an \NC~algorithm for finding a perfect matching in a planar graph.
Our algorithm uses the above-stated fact about counting matchings in a crucial way. Our main new
idea is an \NC~algorithm for finding a face of the perfect matching 
polytope at which $\Omega(n)$ new conditions, involving constraints of the polytope,
are simultaneously satisfied. Several other ideas are also
needed, such as finding a point in the interior of the minimum weight face of this polytope and
finding a balanced tight odd set in \NC.
\end{abstract}

\section{Introduction}

Is perfect matching in \NC? That is, is there a deterministic parallel algorithm that computes a perfect matching in a graph
in polylogarithmic time using polynomially many processors? This has been an outstanding
open question in theoretical computer science for over three decades, ever since the discovery of \RNC~matching algorithms \cite{KUW, MVV}.
Within this question, the case of planar graphs had remained an enigma: For general graphs, counting the number of perfect matchings 
is far harder than finding one: the former is \#\P-complete \cite{Valiant} and the latter is in \P~\cite{Edmonds}.
However, for planar graphs, a polynomial time algorithm for counting perfect matchings 
was found by Kasteleyn, a physicist, in 1967 \cite{Kasteleyn}, and an \NC~algorithm follows easily\footnote{For a formal proof, in a slightly more general 
context, see \cite{K33}.},
given an \NC~algorithm for computing the determinant of a matrix, which was obtained by Csanky \cite{Csanky} in 1976.
On the other hand, an \NC~algorithm for finding a perfect matching in a planar graph has resisted a solution. In this paper, we provide such an algorithm.

An \RNC~algorithm for the decision problem, of determining if a graph has a perfect matching, was obtained by Lov{\'a}sz \cite{Lo},
using the Tutte matrix of the graph. The first \RNC~algorithm for the search problem, of actually finding a perfect matching, was
obtained by Karp, Upfal, and Wigderson \cite{KUW}. This was followed by a somewhat simpler algorithm due to Mulmuley, Vazirani,
and Vazirani \cite{MVV}. 

The matching problem occupies an especially distinguished position in the theory of algorithms: 
Some of the most central notions and powerful tools within this theory were discovered in the 
context of an algorithmic study of this problem, including the notion of polynomial time solvability 
\cite{Edmonds} and the counting class \#\P~\cite{Valiant}.
The parallel perspective has also led to such gains: The first \RNC~matching algorithm led to a fundamental understanding of the
computational relationship between search and decision problems \cite{KUW2} and the second algorithm yielded the Isolation
Lemma \cite{MVV}, which has found several applications in complexity theory and algorithms.
Considering the fundamental insights gained by an algorithmic study of
matching, the problem of obtaining an \NC~algorithm for it has remained a premier open question ever since the 1980s.

The first substantial progress on this question was made by Miller and Naor in 1989 \cite{MillerN}.
They gave an \NC~algorithm for finding a perfect matching in bipartite planar graphs using a flow-based approach.
In 2000, Mahajan and Varadarajan gave an elegant way of using the \NC~algorithm for counting perfect matchings to finding one, 
hence giving a different \NC~algorithm for bipartite planar graphs \cite{MahajanV}. Our algorithm is inspired by their approach.

In the last few years, several researchers have obtained \quasiNC~algorithms for matching and its generalizations; such algorithms
run in polylogarithmic time though they require $O(n^{\log^{O(1)} n})$ processors.
These algorithms achieve a partial derandomization of the Isolation Lemma for the specific problem 
addressed.
Several nice algorithmic ideas have been discovered in these works and our algorithm has benefited from some of these; in turn, it will not be surprising if some of our ideas turn out to be
useful in the resolution of the main open problem.
First, Fenner, Gurjar, and Thierauf gave a \quasiNC~algorithm for perfect matching in bipartite graphs \cite{FennerGT}, followed
by the algorithm of Svensson and Tarnawski for general graphs \cite{svensson2017matching}. Algorithms were also found for the generalization 
of bipartite matching to the linear matroid intersection problem by Gurjar and Thierauf \cite{GurjarT}, and to a further generalization 
of finding a vertex of a polytope with faces given by totally unimodular constraints, by Gurjar, Thierauf, and Vishnoi \cite{GurjarTV}.

Our main theorem is the following.
\begin{theorem}
\label{thm:main}
	There is an \NC~algorithm which given a planar graph, returns a perfect matching in it, if it has one.
\end{theorem}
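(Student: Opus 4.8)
The plan is to build a recursive \NC~procedure which, given a planar graph $G$ with a perfect matching, either recognizes that it is already holding one or shrinks the instance — a constant-factor gain in the number of edges, in the number of vertices, or a split into two balanced pieces — so that the recursion tree has depth $O(\log n)$ and every level runs in \NC; since \NC~is closed under such composition, this proves \Cref{thm:main}. The engine is the reduction of matching \emph{counting} to determinant/Pfaffian evaluation (Kasteleyn plus Csanky): for any planar minor $H$ of $G$ obtained by deletions and contractions we can decide in \NC~whether $H$ has a perfect matching and, from the counts $\#\PM(H)$ and $\#\PM(H-u-v)$, compute the barycenter $x$ of the uniform distribution over perfect matchings, whose coordinate $x_e$ is just the fraction $\#\PM(H-u-v)/\#\PM(H)$ of perfect matchings using $e=(u,v)$. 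Running the same determinant computation over a polynomial ring with an indeterminate weight $\lambda^{w_e}$ on each edge and reading off the lowest-order coefficients yields, still in \NC, the barycenter $x^w$ of the uniform distribution over minimum-$w$-weight perfect matchings, for any integer weights $w$ of polynomial magnitude. The point of a barycenter is that it lies in the \emph{relative interior} of the smallest face of the perfect matching polytope $\PM(H)$ containing the relevant matchings, so the constraints tight at $x$ (resp.\ $x^w$) — the edges $e$ with coordinate $0$ and the odd sets $S$ with $x(\delta(S))=1$ — are exactly those tight on that whole face.

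A call first preprocesses: delete every edge that lies in no perfect matching (detected by $\#\PM(G-u-v)=0$), break into connected components, and then, using the barycenter $x$ together with an \NC~uncrossing step to turn the tight odd sets at $x$ into a laminar family, contract all tight odd sets (which by classical matching-polytope theory leaves interior barycenters) and shrink degree-two vertices. Deletions and contractions preserve planarity, and afterwards $x>0$ on every surviving edge and no nontrivial odd-set constraint is tight, i.e.\ $x$ lies in the interior of $\PM(G')$ for the resulting graph $G'$; if $x$ is integral we output it and stop.

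The core of the algorithm is the reduction step applied to such an interior point: find, in \NC, polynomially bounded integer weights $w$ so that the minimum-weight-face barycenter $x^w$ makes $\Omega(n)$ \emph{new} polytope constraints tight, pass to this strictly smaller face, and recurse according to which of three unavoidable situations (read off after another uncrossing of the newly tight odd sets) occurs. Either (i) $\Omega(n)$ edges now have $x^w_e=0$, so we delete them and recurse on a graph with a constant fraction fewer edges; or (ii) there are $\Omega(n)$ pairwise disjoint tight odd sets, so we contract all of them and recurse on a graph with a constant fraction fewer vertices; or (iii) there is a \emph{balanced} tight odd set $S$, both $S$ and $V\setminus S$ of size $\Omega(n)$ — then pick a cross edge $e^{\ast}=(a,b)\in\delta(S)$ with $x^w_{e^{\ast}}>0$ (one exists because $x^w(\delta(S))=1$), recursively compute perfect matchings $M_1$ of $G'[S]-a$ and $M_2$ of $G'[V\setminus S]-b$ — both planar and nonempty, since $S$ being tight forces any perfect matching through $e^{\ast}$ to use no other cross edge and hence to restrict to exactly such matchings — and return $M_1\cup M_2\cup\{e^{\ast}\}$. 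In every branch the instance shrinks by a constant factor (edges, vertices, or a balanced bisection), so the recursion depth is $O(\log n)$, and since every subroutine is in \NC, so is the whole algorithm.

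The one genuinely new difficulty — the step on which the paper rests — is the existence and \NC-construction of a weight vector $w$ that simultaneously tightens $\Omega(n)$ new constraints. This is exactly where planarity enters: two perfect matchings differ by a union of even cycles, whose $w$-weights under a face-potential–type perturbation are governed by the planar cycle space, and one must show by an averaging argument over a suitable family of such perturbations that some single choice of $w$ collapses a constant fraction of the available slack in the positivity and odd-set constraints — "a constant fraction" being meaningful only because a planar graph has $O(n)$ edges and, after uncrossing, an $O(n)$-size laminar tight-cut structure. Making this argument constructive and parallel, and then verifying with the counting oracle that the chosen $w$ indeed lands on a face carrying the required $\Omega(n)$ new tight constraints, is the heart of the matter; the remaining ingredients (counting via Pfaffians, uncrossing, handling degree-two vertices, extracting a balanced tight set) are comparatively routine.
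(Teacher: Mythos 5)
Your skeleton matches the paper's: a counting/Pfaffian oracle giving the barycenter of the minimum-weight face $\avg(\PM(G,w))$ in \NC, preprocessing by deleting unmatchable edges and shrinking degree-two vertices, and a top-level divide-and-conquer on a \emph{balanced} tight odd set $S$ using $w=\1_{\delta(S)}$ to certify that both sides minus the endpoints of a cross edge still have perfect matchings. But the step you yourself flag as ``the heart of the matter'' --- producing in \NC~a single weight vector $w$ whose minimum-weight face carries $\Omega(n)$ new tight constraints --- is left as a gesture toward ``an averaging argument over a suitable family of perturbations,'' and that is a genuine gap: no such averaging argument is given or known, and it is not how the paper proceeds. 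The paper's construction is explicit and combinatorial: find $\Omega(\card{E})$ edge-disjoint \emph{even walks} (even faces of the embedding, or pairs of odd faces joined by tree paths --- this pairing is itself necessary because a matching-covered planar graph may have only one even face), assign weight $1$ to the first edge of each walk and $0$ elsewhere, and prove via the circulation $\dotp{w,\chi_W}<0$ that on the face $\PM(G,w)$ every walk is blocked: it either loses an edge or crosses a newly tight odd set (\cref{lem:rotation,lem:main}). Your trichotomy (i)/(ii)/(iii) is also not shown to be exhaustive; the paper's actual progress measure is that \emph{each} blocked walk contributes either a deleted edge or an edge strictly inside a shrunk tight odd set, so $\card{E}$ drops by a constant factor per round, and the balanced set emerges by pigeonhole once the shrunk graph has $O(1)$ vertices, with explicit guards to keep every shrunk set below $c_1\card{V_0}$.

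Two further ingredients you dismiss as routine are in fact substantial and cannot be skipped. First, to realize case (ii) you must \emph{find} a tight odd set blocking each walk; the paper does this by rotating the walk slightly outside the polytope and extracting a minimum odd cut, which requires an \NC~construction of Gomory--Hu trees for weighted planar graphs together with the Padberg--Rao theorem (\cref{lem:minoddcut,thm:gomoryhu}) --- a result of independent interest, not a black box. Second, the resulting tight odd sets cross arbitrarily, and uncrossing them into disjoint sets in \NC~(rather than sequentially) while preserving enough of the tight-cut span to guarantee each walk still has an edge inside some shrunk set is one of the longest arguments in the paper (the intersection-parity graph and the DFS-tree case analysis of \cref{lem:odd,lem:even}). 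Without concrete substitutes for these three pieces --- the even-walk/circulation construction of $w$, the \NC~minimum odd cut, and the \NC~uncrossing --- the proposal does not constitute a proof.
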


In \cref{sec:extensions} we generalize \cref{thm:main} to finding a minimum
weight perfect matching if the edge weights are polynomially bounded and to finding a 
perfect matching in graphs of bounded genus; their common generalization easily follows.

	\section{Overview and Technical Ideas}

\begin{figure}
	\centering
	\begin{minipage}[b]{0.45\textwidth}
		\centering
		\Pic{graph}
		\captionof{figure}{Even cycle blocked by an odd set constraint. Example due to \cite{svensson2017matching, FennerGT}.}
		\label{fig:blocked}
	\end{minipage}%
	\hfill
	\begin{minipage}[b]{0.45\textwidth}
		\centering
		\Pic{contracted}
		\captionof{figure}{Resulting graph after shrinking the blocking tight odd set.}
		\label{fig:contracted}
	\end{minipage}
\end{figure}

\subsection{The Mahajan-Varadarajan algorithm and difficulties imposed by odd cuts}

We first give the idea behind the \NC~algorithm of Mahajan and Varadarajan for bipartite planar 
graphs. W.l.o.g.\ assume that the graph is matching covered, i.e., each edge is in a 
perfect matching. 
Using an oracle for counting the number of perfect matchings, they find a point $x$ in 
the interior of the perfect matching polytope and they show how to move this point to lower 
dimensional faces of the polytope until a vertex is reached; this will be a perfect matching. 
In a matching covered bipartite planar graph, every face (in a planar embedding) is the 
symmetric difference of two perfect matchings and modifying $x$ by increasing and decreasing 
alternate edges by the same (small) amount $\epsilon$ moves the point inside the polytope;
we will call this a {\em rotation} of the cycle.
Keep increasing $\epsilon$, starting from 0, until some edge $e$ on this cycle attains $x_e = 0$; in this case, $e$ is dropped. When this happens, 
$\epsilon$ cannot be increased anymore and we will say that the cycle is {\em blocked}. If so,
the point $x$ moves to a lower (by at least one) dimension face. To make substantial progress,
they observe that for any set of edge-disjoint faces, this process can be executed independently 
(by different amounts) in parallel on cycles corresponding to each of the faces, thereby reaching 
a face of the polytope of correspondingly lower dimension. Finally, they show how to find 
$\Omega(n)$ edge-disjoint faces in \NC, thereby terminating in $O(\log n)$  such iterations.


The fundamental difference between the perfect matching polytopes of bipartite and non-bipartite 
graphs is the additional constraint in the latter saying that each odd subset, $S$, of vertices must 
have a total of at least one edge in the cut $\delta(S)$ (see LP (\ref{eq:polytope}) in 
\cref{sec:polytope1}). This constraint introduces a second way in which
a cycle $C$ can be blocked, namely some odd set $S$, whose cut intersects $C$, may go tight
(\cref{sec:find-one}) and the $\epsilon$ introduced for this cycle cannot be increased any more 
(without moving the point outside the polytope and making it infeasible). If so the cycle will not 
lose an edge. However,
notice that since one of the odd set constraints has gone tight, we are already at a face of one lower
dimension! In this case, we will say that {\em cycle $C$ is blocked by odd set $S$}. For an example,
see \cref{fig:blocked} in which the point inside the polytope is the all $1/3$ vector
and highlighted cycle is blocked by odd set $S$. Observe that
the rotation chosen on the highlighted cycle leads to infeasibility on cut $S$.

How do we capitalize on this progress though? The obvious idea (which happens to need
substantially many additional ideas to get to an \NC~algorithm) is to shrink $S$, find a perfect 
matching in the shrunk graph, expand $S$, remove from it the vertex that is matched to a vertex
in $V-S$, and find a perfect matching on the remaining vertices of $S$. For an example of the shrunk graph, see \cref{fig:contracted}.
It is easy to see that at least one edge
of $C$ must have both endpoints in $S$, and therefore the shrunk graph is smaller than the 
original graph.

As stated above, a number of new ideas are needed to make this rough outline yield an \NC~algorithm.
First, a small hurdle: If $G$ is non-bipartite planar, the procedure of Mahajan and
Varadrajan will find $\Omega(n)$ edge-disjoint faces; however, not all of these faces may be even.
In fact, there are matching covered planar graphs having only one even face. To get around this,
we define the notion of an {\em even walk}: it consists of two odd faces with a path connecting
them; for convenience, we will call an even cycle an even walk as well (\cref{sec:independentset}).
We give an \NC~algorithm for finding $\Omega(n)$ edge-disjoint even walks in $G$ 
(\cref{sec:evenwalks}). Furthermore, it is easy to see that rotating an even walk also moves point 
$x$ inside the perfect matching polytope (this is done in \cref{lem:rotation}). 

\subsection{A key algorithmic issue and its resolution}

As in the case of a cycle, a walk is blocked either if it loses an edge or if an odd cut 
intersecting it goes tight. In either case, the point moves to a face of lower dimension. However, 
a new algorithmic question arises: In the first case, the amount of rotation required 
to make the walk lose an edge is easy to compute, similar to the bipartite case. But in the second
case, how do we find the smallest rotation so an odd cut intersecting the walk goes
just tight? Note that we seek the ``smallest'' rotation so no odd cut goes under-tight. 
We will postpone the answer to this question until we address the next hurdle.

\begin{figure}
	\centering
	\begin{minipage}[b]{0.45\textwidth}
		\centering
		\Pic{bipartite-matching-polytope-seq}
		\captionof{figure}{Parallel moves in the bipartite matching polytope.}
		\label{fig:bipartite-matching-polytope}
	\end{minipage}%
	\hfill
	\begin{minipage}[b]{0.45\textwidth}
		\centering
		\Pic{matching-polytope-seq}
		\captionof{figure}{Parallel moves in the non-bipartite matching polytope.}
		\label{fig:matching-polytope}
	\end{minipage}
\end{figure}

Next, we state a big hurdle: As in the bipartite case, to make substantial progress, we need to 
move the point
$x$ to a face of the polytope where each of the $\Omega(n)$ even walks is blocked.
Recall that in the bipartite case, we could modify all of the
edge-disjoint even cycles independently in parallel and, the resulting point still remains 
in the matching polytope.
However, in the non-bipartite case, executing these moves in parallel may take the point outside 
the polytope, i.e., it becomes infeasible. These two cases are illustrated in 
\cref{fig:bipartite-matching-polytope} and \cref{fig:matching-polytope}, respectively.
The reason for the latter is that whereas rotations on two different walks may be individually
feasible, executing them both may make an odd set go under-tight. This is made explicit
in \cref{fig:sameblock} in which the two walks can individually be rotated by $\epsilon_1$ and
$\epsilon_2$, respectively, without violating feasibility. However, executing them both 
simultaneously makes odd set $S$ go under-tight.

The following idea, which can be considered the main new idea in our work, helps us get around 
this hurdle: It suffices to find a weight function on edges, $w$, so that 
one of the half-spaces defined by $w$ contains the vector $w$ itself and the other contains, for
each of the even walks, the direction of motion resulting from its rotation. Then, the minimizer 
of $x\mapsto \dotp{w, x}$ in the polytope will lie in a face at which
each of the walks is blocked either because it has lost an edge or it intersects a tight odd cut. 
Furthermore, the minimizer is still a feasible point, so no odd cut goes under-tight. 
This idea is illustrated
in \cref{fig:hyperplane}. The two yellow arrows indicate independent moves on two edge-disjoint
walks which lead to two different faces of the polytope. Executing them both simultaneously
would take the point outside the polytope, as was illustrated in \cref{fig:matching-polytope}.
However, the minimizer of $\dotp{w, x}$ lies on a face of the polytope at which both
the walks are blocked.

The weight function $w$ is obtained as follows: The traversal of an even walk gives an ordered
list of edges, possibly with repetition, of even length. W.r.t.\ a weight function $w$, define 
the {\em circulation} of an even walk to be the difference of sum of weights 
of even and 
odd numbered edges in the traversal of this walk (\cref{sec:find-one}). We show that any weight
function $w$ that makes the circulation of each of the even walks non-zero suffices in the following
sense: Given such a function $w$, we can pick a direction of rotation for each of the even walks
so that one of the half-spaces defined by $w$ contains the vector $w$ and the other contains the 
direction of motion of each of the even walks (\cref{lem:main}). Moreover, such a function $w$
is easy to construct: in each walk, pick the weight of any one edge to be 1
and the rest 0 (\cref{sec:balanced-viable}).

\begin{figure}
	\centering
	\begin{minipage}[b]{0.45\textwidth}
		\centering
		\Pic{sameblock}
		\captionof{figure}{Odd set constraint violated when even walks are rotated independently.}
		\label{fig:sameblock}
	\end{minipage}%
	\hfill
	\begin{minipage}[b]{0.45\textwidth}
		\centering
		\Pic{hyperplane}
		\captionof{figure}{Minimizer of appropriate linear function $x\mapsto \dotp{w, x}$ blocks even walks.}
		\label{fig:hyperplane}
	\end{minipage}
\end{figure}

Next, we need to find the minimizer of $w$ in the polytope. For this, it suffices to construct 
an \NC~oracle for computing $\#G^e_w$, the
number of minimum weight matchings in $G$ containing the edge $e$. This oracle is constructed
by finding a Pfaffian orientation (\cref{sec:Tutte}) for $G$, appropriately substituting
for the variables in the Tutte matrix of $G$ and computing the determinant of the resulting matrix
(\cref{sec:avg}).

Some clarifications are due at this point: First, let us answer the opening question of this
section, i.e., how to rotate just one given walk so it gets blocked by one of the two ways? 
Interestingly enough, at present we know of no simpler method for one walk than for multiple walks 
(binary search comes to mind but that is not an elegant, analytic solution). 
Second, in the bipartite case,
either one of two measures of progress works when we simultaneously rotate many edge-disjoint 
cycles: the number of edges removed or the decrease in the dimension of the face we end up on.
In the non-bipartite case, if rotating a walk leads to an odd cut going tight, we will shrink
the cut, as stated above. As stated above, at least one edge of the walk will be in the odd
set and will get shrunk.
So, for the case of a single walk, both measures of progress stated above still work.
However, when we rotate $k$ edge-disjoint walks, say, and none of these walks loses an edge,
then the dimension of the face we end up on may not be smaller by $O(k)$.
The reason is that very few, or even one, odd cut may intersect each of the walks. However,
each walk will have at least one edge in a tight odd set, and therefore shrinking them will
result in $O(k)$ edges being shrunk. Hence, the first measure of progress still works.

\subsection{The rest of the ideas}

A number of ideas are still needed to get to an \NC~algorithm. First, for each walk that
does not lose an edge, we need to find a tight odd cut intersecting it. For this, we
use the result of Padberg and Rao \cite{PadbergR} stating that the Gomory-Hu tree of a graph 
will contain a minimum weight odd cut. We show how to find a Gomory-Hu tree in a weighted planar 
graph in \NC~(\cref{sec:GH}), a result of independent interest. 
Next, consider a walk $W$ which, w.r.t.\ the current point $x$ in
the polytope, is intersecting a tight odd cut. We rotate walk $W$ further slightly so the point $x$
becomes infeasible, i.e., $W$ now crosses an under-tight cut, or perhaps several of them
(\cref{lem:main}). Observe that rotating a walk leaves each singleton cut tight. Hence, w.r.t.\ $x$,
each minimum weight odd cut must be an under-tight cut that crosses $W$, and the
Gomory-Hu tree must contain one of them. Repeating this for all walks in parallel, we obtain
a set of tight odd cuts that intersect each of the walks.

However, these odd cuts cannot be shrunk
simultaneously because they may be crossing each other. We know there is a laminar family of tight odd
cuts, but how do we find it in \NC? We next give a divide-and-conquer-based procedure that finds 
the top level sets of one such laminar family; these top level sets can clearly be shrunk 
simultaneously (\cref{sec:uncross-all}). The procedure works as follows:
We partition the family of tight odd cuts into two (almost) equal subfamilies, recursively 
``uncross'' each subfamily to obtain its top level sets, and then merge these two into one family 
of top level sets. Clearly the last step is the crux of the procedure. The key to it lies in 
observing that two families 
of top level sets have a simple intersection structure, which can be exploited appropriately.

The proposed algorithm has now evolved to the following: shrink all top level sets and
recursively find a perfect matching in the shrunk graph, followed by recursively finding a perfect 
matching in each of the shrunk sets (after removing its matched vertex). This algorithm has 
polylog depth; however, it does not run in polylog time because of the following inherent 
sequentiality: matchings in the shrunk sets have to be found {\em after} finding a matching 
in the shrunk graph. The reason is that a matching in a shrunk 
set $S$ can be found only after knowing the vertex in $S$ that is matched outside $S$, and
this will be known only after finding a matching in the shrunk graph.
We next observe that if we could find a {\em balanced} tight odd cut,
we would be done by a simple divide-and-conquer strategy: match any edge in the cut and find
matchings in the two sides of the cut recursively, in parallel. The task of finding a
balanced tight odd cut is not straightforward though. It involves iteratively shrinking the top level
sets found, finding even walks and moving to the minimum weight face in the shrunk graph, etc.
(\cref{sec:balanced-viable}). This is illustrated in \cref{fig:shrinking}.
We show that $O(\log n)$ such iterations suffice for finding a balanced tight odd cut.

\begin{figure}
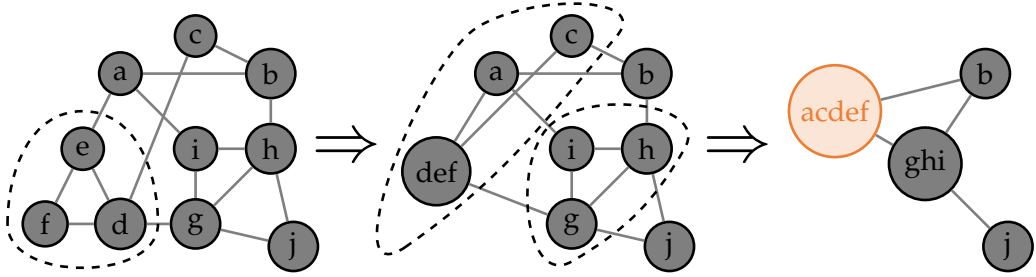

	\centering
	\Pic{shrinking}
	\captionof{figure}{Shrinking repeatedly yields a balanced viable set.}
	\label{fig:shrinking}
\end{figure}

	\section{Preliminaries}
\label{sec:prelims}

In this section, we will state several notions and algorithmic primitives we need for our \NC~algorithm for finding a perfect 
matching in a planar graph. 

\subsection{The Tutte matrix and Pfaffian orientations}
\label{sec:Tutte}

A key fact underlying our algorithm is that computing the number of perfect matchings in a planar graph lies in \NC. 
Let $G = (V, E)$ be an arbitrary graph (not necessarily planar).
Let $A$ be the symmetric adjacency matrix of $G$, i.e., corresponding to each edge $(i, j) \in E, \ A(i, j) = A(j, i) = 1$, and the
entries corresponding to non-edges are zero. Obtain matrix $T$ from $A$ by replacing for each edge $(i, j) \in E$, 
its two entries by $x_{ij}$ and $-x_{ij}$, so the entries below the diagonal are positive; clearly, $T$ is skew-symmetric.
$T$ is called the {\em Tutte matrix} for $G$. Its significance lies in that its determinant is non-zero as a polynomial iff $G$ has a 
perfect matching. However, computing this determinant is not easy: Simply writing it will require exponential space in general.

Next assume that $G$ has a perfect matching.
A simple cycle $C$ in $G$ is said to be {\em nice} if the removal of its vertices leave a graph having a perfect matching. If so, clearly, 
$C$ lies in the symmetric difference of two perfect matchings in $G$. Direct the edges of $G$ to obtain $\overrightarrow{G}$.
We will say that $\overrightarrow{G}$ is a {\em Pfaffian orientation} for $G$ if each nice cycle $C$ has an odd number of 
edges oriented in each way of traversing $C$. Its significance lies in the following: Let $(i, j) \in E$, with $i < j$. If in the 
Pfaffian orientation, this edge is directed from $i$ to $j$, then let $x_{ij} = 1$, otherwise let $x_{ij} = -1$.
Then the determinant of the resulting matrix is the square of the number of perfect matchings in $G$.

Of course, $G$ may not have a Pfaffian orientation. A key fact underlying our algorithm is that every planar graph has a
Pfaffian orientation and moreover, such an oriantation can be found in \NC~and the determinant can be computed in \NC~by Csanky's 
algorithm \cite{Csanky}. Hence we can answer the decision question of whether $G$ has a perfect matching in \NC.  


\subsection{The perfect matching polytope, its faces, and tight odd sets}
\label{sec:polytope1}

The perfect matching polytope for $G = (V, E)$ is defined in $\R^{E}$ and is given by the following set of linear equalities and inequalities \cite{Edmonds.matching}.

\begin{equation}
	\label{eq:polytope}
	\set*{x\in \R^E \given \begin{array}{ll}
		\dv{x}=1 & \forall v \in V,\\
		\ds{x}\geq 1 & \forall S\subset V,\text{ with }\card{S}\text{ odd},\\
		x_e \geq 0&\forall e \in E.
	\end{array}}
\end{equation}
We use the notation $\1_F$ to denote the indicator vector of a subset of edges $F\subseteq E$. For a subset of vertices $S$, we let $\delta(S)$ denote the edges that cross $S$, and by a slight abuse of notation we let $\delta(v)=\delta(\{v\})$ denote the set of edges adjacent to vertex $v$.

The perfect matching polytope is the convex hull of indicator vectors of all perfect matchings in $G$ and will be denoted by $\PM(G)$:

\[ \PM(G)=\conv\set*{ \1_{M}\mid M\text{ is a perfect matching of $G$}}. \]

For a given weight vector $w\in \R^E$ on edges, we can obtain minimum weight fractional and integral perfect matchings by
minimizing the linear function $x\mapsto\dotp{w, x}=\sum_e w_e x_e$ subject to the above-stated constraints. 
This set of fractional and integral perfect matchings form a face of $\PM(G)$ and will be denoted by $\PM(G, w)$.

One of the key steps needed by our algorithm is finding a point in the relative interior of the face $\PM(G, w)$ in \NC.
This requires computing a Pfaffian orientation for $G$ and then
evaluating the Tutte matrix for appropriate substitutions of the variables. The point we find will be exactly the average of the vertices, i.e., $\1_M$ for perfect matchings $M$, lying on the face $\PM(G, w)$. We denote this average by $\avg(\PM(G,w))$:
\[ \avg(\PM(G, w)) = \frac{1}{\card{\set{M\given \1_M\in \PM(G, w)}}}\sum_{M:\1_M\in \PM(G, W)} \1_M.\]

\begin{lemma}
	\label{lem:avg}
	Given a planar graph $G=(V,E)$ and an integral weight vector $w\in \Z^E$ represented in unary, there is an \NC~algorithm which returns $\avg(\PM(G,w))$.
\end{lemma}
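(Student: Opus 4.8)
The plan is to reduce the computation of $\avg(\PM(G,w))$ to counting weighted perfect matchings, which by the discussion in \cref{sec:Tutte} is in \NC~for planar graphs via Csanky's determinant algorithm applied to a suitably substituted Tutte matrix. Observe that $\avg(\PM(G,w))$ is a vector in $\R^E$ whose $e$-th coordinate equals the fraction of minimum-$w$-weight perfect matchings that use edge $e$; that is,
\[
  \avg(\PM(G,w))_e = \frac{\#G^e_w}{\#G_w},
\]
where $\#G_w$ is the number of minimum-weight perfect matchings of $G$ (w.r.t.\ $w$) and $\#G^e_w$ is the number of those containing $e$. So it suffices to compute $\#G_w$ and each $\#G^e_w$ in \NC; the division by $\#G_w$ is then a single arithmetic operation done in parallel for every edge.

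First I would compute $\#G_w$. The standard trick is to encode the weight into the Tutte matrix: fix a Pfaffian orientation $\overrightarrow{G}$ of $G$ (this exists and is computable in \NC~for planar graphs), introduce a formal variable $y$, and for edge $(i,j)$ with $i<j$ set $x_{ij} = \pm y^{w_{ij}}$ with sign determined by the orientation as in \cref{sec:Tutte}. Then $\det$ of this matrix is a polynomial in $y$ whose lowest-degree term has degree $2 W^{*}$, where $W^{*}$ is the minimum weight of a perfect matching, and whose coefficient is $(\#G_w)^2$ (the Pfaffian-orientation identity stratified by weight). Because $w$ is given in unary, all exponents $y^{w_{ij}}$ are polynomially bounded, so the relevant polynomial has polynomial degree; we can recover its coefficients in \NC~by evaluating the determinant at polynomially many points and interpolating, or equivalently by working over $\Z[y]$ truncated to polynomial degree. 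Reading off the lowest-degree coefficient and taking an integer square root (a polynomially-bounded integer, hence \NC-computable) yields $\#G_w$. To get $\#G^e_w$ for a fixed edge $e=(a,b)$, delete $a$ and $b$ from $G$, note that a Pfaffian orientation of $G$ restricts to one of $G - \{a,b\}$, and that minimum-weight perfect matchings of $G$ through $e$ correspond to minimum-weight perfect matchings of $G-\{a,b\}$ shifted by the constant $w_{ab}$; so $\#G^e_w$ equals the square root of the appropriate coefficient of $\det$ of the reduced substituted Tutte matrix. We run this for all edges $e$ in parallel.

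The main obstacle, and the one point needing care, is bookkeeping of the determinant \emph{as a polynomial in $y$} rather than as a number: we must ensure the degree stays polynomial (guaranteed by the unary encoding of $w$), and that extracting the minimum-weight coefficient is genuinely parallel. This is handled by either (i) interpolation from polynomially many evaluation points $y = 1, 2, \dots$, each determinant computed in \NC~in parallel, followed by an \NC~interpolation step, or (ii) carrying out Csanky's algorithm over the ring $\Z[y]/(y^{D+1})$ for $D$ an a priori polynomial bound on the degree, since all of Csanky's operations (matrix powers, traces, solving the Newton identities) are ring operations that remain in \NC~over this ring. A secondary, routine point is that all integers involved ($\det$ values, coefficients, square roots) are bounded by $2^{\poly(n)}$ in bit-length, so exact integer arithmetic and integer square-rooting stay in \NC. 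Assembling these pieces—Pfaffian orientation in \NC, weighted determinant as a truncated polynomial in \NC, coefficient extraction, square roots, and the final division—gives the claimed \NC~algorithm for $\avg(\PM(G,w))$.
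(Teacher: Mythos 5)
Your proposal is correct and takes essentially the same approach as the paper: express $\avg(\PM(G,w))_e=\#G^e_w/\#G_w$, encode weights via a Pfaffian orientation and entries $\pm y^{w_e}$, compute the determinant as a polynomial in $y$ (polynomial degree thanks to the unary weights), and read $(\#G_w)^2$ off the lowest-degree coefficient. The only difference is the per-edge step: the paper zeroes out the two matrix entries for $e$ and reads $(\#G_w-\#G^e_w)^2$ from the coefficient at the \emph{same} degree $d$, whereas you delete both endpoints of $e$ and must therefore read the coefficient at the shifted degree $d-2w_e$ rather than the lowest-degree coefficient (otherwise the case $\#G^e_w=0$, where the reduced graph's minimum matching weight exceeds $d/2-w_e$, would give a wrong answer) --- your "appropriate coefficient" should be made explicit on this point, but the argument is sound.
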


We prove \cref{lem:avg} in \cref{sec:avg}.

In general, a face of $\PM(G)$ is defined by setting a particular set of inequalities to equalities. Let $\cal S$ be the family of
odd sets whose inequalities are set to equality. These will be called {\em tight odd sets}.
Two such tight odd sets $S_1, S_2 \in \cal S$ are said to {\em cross} if they are not disjoint and
neither is a subset of the other. If so, one can prove that either $S_1 \cap S_2$ and $S_1 \cup S_2$ are also tight odd sets or
$S_1 - S_2$ and $S_2 - S_1$ are tight odd sets.
In the former case one can remove the equality constraint for $S_1$ and replace it by the equality constraints for $S_1\cap S_2$ and $S_1\cup S_2$, and the face would not change. In the latter case $S_1$ can be replaced by $S_1-S_2$ and $S_2-S_1$ and still the face remains invariant. In either case, the new sets do not cross. The family $\cal S$ is said to be
{\em laminar} if no pair of sets in it cross. Given a family of tight odd sets $\cal S$, one can successively uncross pairs to
obtain a family of tight odd sets defining the same face of the polytope. This operation will result in a laminar family. However, for our purposes, we only need to work with the maximal sets in the laminar family. We define a similar notion of uncrossing for such top-level sets and show how they give us the space of equality constraints, by defining things appropriately.

\subsection{Finding maximal independent sets and even walks}
\label{sec:independentset}

One of the ingredients we use in multiple ways to design our algorithm is that a maximal independent set in a graph can be found in \NC.

\begin{lemma}[\cite{luby}]
	\label{lem:maximal}
	There is an \NC~algorithm for finding some maximal independent set in an input graph $G=(V,E)$.
\end{lemma}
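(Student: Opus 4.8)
The plan is to use Luby's randomized parallel MIS algorithm together with a derandomization based on pairwise independence. First I would describe one round, operating on the current graph $G' = (V', E')$ (initially $G$): every isolated vertex is placed directly into the output set, and each remaining vertex $v$ is independently \emph{marked} with probability $1/(2 d'(v))$, where $d'(v)$ is its current degree. For every edge both of whose endpoints are marked, we unmark the endpoint of smaller current degree, breaking ties by vertex index; the set $I'$ of vertices still marked is then independent. We add $I'$ to the output, and delete $I' \cup N(I')$ from $G'$. Each round is a constant number of rounds of local computation with polynomial total work, hence in $\NC$; and after every round the accumulated output set is independent, so the only thing to verify is fast termination.

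For progress I would show that in expectation a constant fraction of $E'$ is deleted each round. The standard argument: call $v$ \emph{good} if at least $d'(v)/3$ of its neighbors have degree at most $d'(v)$; a counting argument shows a constant fraction of the edges of $E'$ are incident to good vertices, and each good vertex is deleted with probability bounded below by a constant, because with constant probability some low-degree neighbor of $v$ is marked and survives the conflict-resolution step. Therefore $\Ex{\card{E'}}$ shrinks by a constant factor per round, so with high probability $G'$ is empty after $O(\log n)$ rounds, at which point the output set is maximal.

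The main obstacle — and the step that actually yields $\NC$ rather than $\RNC$ — is derandomization. I would observe that the entire progress analysis uses only \emph{pairwise} independence of the marking events: the second-moment estimate bounding the deletion probability of a good vertex needs nothing stronger. Hence one may replace the fully independent marks by marks drawn from an explicit sample space of size $\poly(n)$ supporting pairwise independent bits with the prescribed (dyadically rounded) marginal probabilities. For each of the polynomially many sample points, run one round in parallel and count how many edges it deletes; by averaging, some sample point deletes at least the expected constant fraction, and we select such a point by a parallel minimum/prefix computation. Iterating this now-deterministic round $O(\log n)$ times gives the claimed $\NC$ algorithm, with correctness immediate from the invariants above. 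The one delicate point to get right is checking that the good-vertex second-moment bound genuinely survives the switch to the dyadic-rounded, pairwise-independent marginals, with only a harmless loss in the constants.
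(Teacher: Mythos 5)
Your proposal is correct and is precisely the argument of Luby's paper, which is what the authors cite here without reproving: the paper gives no proof of this lemma, relying on \cite{luby} as a black box. Your sketch faithfully captures the two essential ingredients — the good-vertex/second-moment analysis showing a constant fraction of edges dies per round, and the observation that pairwise independence suffices, enabling exhaustive search over a polynomial-size sample space.
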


Mahajan and Varadarajan used \cref{lem:maximal} to find linearly many edge-disjoint cycles in bipartite planar graphs \cite{MahajanV}. We use a similar step, but instead of cycles we have to work with even walks, i.e., cycles with possibly repeated edges.

\begin{definition}
	For this paper, an {\em even walk} is either a simple even length cycle in $G$ or the following structure:
	Let $C_1$ and $C_2$ be two odd length edge-disjoint cycles in $G$ and let $P$ be a path, edge-disjoint from $C_1,C_2$, connecting vertex $v_1$ of $C_1$
to vertex $v_2$ of $C_2$; if $v_1 = v_2$, $P$ will be the empty path. Starting from $v_1$, traverse $C_1$, then $P$ from $v_1$ 
to $v_2$, then traverse $C_2$, followed by $P$ from $v_2$ to $v_1$. This will be a walk that traverses an even number of edges 
and will also be called an even walk.
\end{definition}

Note that all of our walks start and end at the same location. We use \cref{lem:maximal} to derive the following. We prove \cref{lem:evenwalks} in \cref{sec:evenwalks}.

\begin{lemma}
	\label{lem:evenwalks}
	Suppose that $G=(V,E)$ is a connected planar graph with no vertices of degree $1$ and at most $\card{V}/2$ vertices of degree $2$. Then we can find $\Omega(\card{E})$ edge-disjoint even walks in $G$ by an \NC~algorithm.
\end{lemma}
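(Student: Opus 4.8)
The plan is to follow the Mahajan--Varadarajan strategy --- extract linearly many edge-disjoint faces of a planar embedding and pass to an independent set of them using \cref{lem:maximal} --- while coping with the new difficulty that in a non-bipartite graph a face may be an odd cycle. I will assume $G$ is $2$-connected, so that each face of a fixed planar embedding (computable in \NC) is a simple cycle. By Euler's formula the number of faces is $\card{F}=\card{E}-\card{V}+2$; the hypotheses give $2\card{E}=\sum_v\deg(v)\ge 2\cdot\tfrac{\card{V}}{2}+3\cdot\tfrac{\card{V}}{2}=\tfrac52\card{V}$, so $\card{V}\le\tfrac45\card{E}$ and hence $\card{F}\ge\tfrac15\card{E}+2=\Omega(\card{E})$ (in particular $\card{E}=\Theta(\card{V})$). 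Since $\sum_\phi\card{\partial\phi}=2\card{E}$, fewer than $2\card{E}/B$ faces have boundary length exceeding a given constant $B$; so the set $\mathcal S$ of \emph{small} faces (boundary length $\le B$) has $\card{\mathcal S}=\Omega(\card{E})$, and each small face has degree $\le B$ in the planar dual $G^*$.

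If at least half of the small faces are even, let $\mathcal S_e$ be the small even faces, $\card{\mathcal S_e}=\Omega(\card{E})$. The induced subgraph $G^*[\mathcal S_e]$ has maximum degree $\le B$, so \cref{lem:maximal} produces a maximal independent set $I\subseteq\mathcal S_e$ with $\card{I}\ge\card{\mathcal S_e}/(B+1)=\Omega(\card{E})$. Faces in $I$ are pairwise non-adjacent in $G^*$, hence pairwise edge-disjoint in $G$, and each is a simple even cycle, i.e.\ an even walk; this case is done.

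Otherwise at least half the small faces are odd; call this set $\mathcal S_o$, $\card{\mathcal S_o}=\Omega(\card{E})$, and split it into three parts according to whether a face has a $G^*$-neighbor inside $\mathcal S_o$, or has no such neighbor but has a small even neighbor, or has only large neighbors --- one part has size $\Omega(\card{E})$. For the first part, observe that two $G^*$-adjacent odd faces $\phi,\phi'$ satisfy $\partial\phi\triangle\partial\phi'=\partial(\phi\cup\phi')$, a simple cycle of length $\card{\partial\phi}+\card{\partial\phi'}-2(\text{shared edges})$, which is even --- an even walk contained in $\partial\phi\cup\partial\phi'$. So I will find a large \emph{induced} matching in $G^*[\mathcal S_o]$ (a set of edges pairwise at distance $\ge 2$), which makes the resulting even cycles pairwise edge-disjoint; since $G^*[\mathcal S_o]$ has bounded degree with $\Omega(\card{E})$ edges, an induced matching of size $\Omega(\card{E})$ is obtained in \NC~by applying \cref{lem:maximal} to the conflict graph on the edges of $G^*[\mathcal S_o]$ (two edges conflicting if they share an endpoint or are joined by an edge), whose maximum degree is $O(B^2)$. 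For the second part, the small even faces involved number $\Omega(\card{E})$ (each absorbs at most $B$ of the odd faces pointing at it), and we invoke the previous paragraph.

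The remaining, and hardest, case is that $\Omega(\card{E})$ small odd faces are each surrounded entirely by large faces. Here I would combine two such faces \emph{through} a common large neighbor $g$: traversing $\partial\phi$, then an arc of $\partial g$ from $\phi$'s shared edge to $\phi'$'s shared edge, then $\partial\phi'$, then back along that arc, is an even walk of the ``two odd cycles plus a path'' form. The counting is favorable: with $B$ chosen large the number of large faces is at most $2\card{E}/B\ll\card{\mathcal S_o}$, and each small odd face has at least two \emph{distinct} large neighbors (its boundary is a cycle, hence cannot lie on a single face's boundary), so the multigraph on large faces whose edges are indexed by these small odd faces has $\Omega(\card{E})$ edges on $O(\card{E}/B)$ vertices and thus contains $\Omega(\card{E})$ edge-disjoint ``cherries'' (pairs of edges sharing a vertex, no edge reused). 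Routing the even walks of the cherries at a given large face $g$ along pairwise-disjoint consecutive arcs of $\partial g$ keeps those edge-disjoint, and a last application of \cref{lem:maximal} to a conflict graph discards the comparatively few walks that collide across distinct large faces. I expect this final step to be the main obstacle: since large faces have unbounded boundary length, keeping the conflict graph's degree under control while choosing the trade-off in $B$ so that ``\NC'', ``edge-disjoint'', and ``$\Omega(\card{E})$ many'' hold together is where the real work is --- the alternative being to find an argument that folds this case into the earlier ones.
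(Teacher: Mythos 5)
Your proposal does not close the key case, and you say so yourself: when $\Omega(\card{E})$ small odd faces are surrounded only by large faces, the plan of routing pairs of odd faces through arcs of a common large face's boundary and then cleaning up with a conflict-graph MIS breaks down precisely because those arcs can be long, so a single walk can collide with an unbounded number of other walks and \cref{lem:maximal} no longer guarantees that an $\Omega(\card{E})$-sized independent set survives. Since that case can genuinely occur (nothing in the hypotheses prevents most faces from being odd with all their dual neighbors large), the argument as written does not prove \cref{lem:evenwalks}. There is also a secondary error earlier: for two adjacent odd faces $\phi,\phi'$, the symmetric difference $\partial\phi\,\triangle\,\partial\phi'$ need not be a single simple cycle (two faces can share several pairwise non-adjacent edges, in which case the symmetric difference splits into several cycles, possibly two odd ones), so even your ``easy'' subcase needs repair, e.g.\ by connecting the two odd components through a shared edge to form a walk of the two-odd-cycles-plus-path type. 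The blanket assumption that $G$ is $2$-connected is likewise not justified by the hypotheses.

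The paper avoids the dual-adjacency case analysis entirely, and this is the idea you are missing. After extracting $\Omega(\card{E})$ edge-disjoint faces exactly as you do (\cref{lem:edge-disjoint-faces}), it pairs up the odd faces not through neighboring faces in the dual but through paths in a single spanning tree of the \emph{primal} graph: placing one token per odd face on the tree, a parity argument (\cref{lem:tree-pairing}) deletes the tree edges with an even number of tokens on each side and then locally pairs tokens and edges at every vertex, producing in \NC~a pairing of all tokens by pairwise edge-disjoint tree paths. Each path is then trimmed so it is edge-disjoint from its two faces. Because the faces are edge-disjoint and the paths are edge-disjoint, every edge of $G$ lies in at most two of the resulting $f/2$ walks; by Markov at least $f/4$ walks have $O(\card{E}/f)=O(1)$ edges (since $f=\Omega(\card{E})$), so the conflict graph restricted to these short walks has constant degree and \cref{lem:maximal} yields $\Omega(\card{E})$ pairwise edge-disjoint even walks (\cref{lem:disjoint-walks}). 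This global ``everything has multiplicity at most two'' bound is exactly the degree control your large-face case lacks; without an analogous mechanism, your third case remains open and the lemma is unproven.
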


	\section{Main Algorithm}
\label{sec:main-alg}
\subsection{Divide-and-conquer procedure}
\label{sec:algorithm}

In this section we will describe the algorithm we use to prove \cref{thm:main}. W.l.o.g.\ assume that the input graph has a perfect matching. We can easily check whether a perfect matching exists first, by counting the number of perfect matchings in \NC; see \cref{sec:Tutte}.

\begin{algorithm}
	\caption{Divide-and-conquer algorithm for finding a perfect matching.}
	\PerfectMatching{$G=(V,E)$}
	
	\eIf{$\card{V}=0$}{
		
		\Return $\emptyset$.
	}{
		Find a viable set $S$ with $\card{S}/\card{V}\in [c_1, 1-c_1]$.
		
		Let $w\leftarrow \1_{\delta(S)}$.
		
		Let $x\leftarrow \avg(\PM(G, w))$.
		
		Select an arbitrary edge $e\in \delta(S)$ with $x_e>0$.
		
		Let $G_1$ be the induced graph on $S$ with the endpoint of $e$ removed.
		
		Let $G_2$ be the induced graph on $V-S$ with the endpoint of $e$ removed.
		
		\InParallel{
			$M_1 \leftarrow$ \PerfectMatching{$G_1$}.
			
			$M_2 \leftarrow$ \PerfectMatching{$G_2$}.
		}
	}
	\Return $M_1\cup M_2\cup \set{e}$
	\bigskip
	\label{alg:perfectmatching}
\end{algorithm}

We use a divide-and-conquer approach. The pseudocode is given in \cref{alg:perfectmatching}. Given a graph $G=(V,E)$, our algorithm finds an odd set $S\subset V$, selects an edge $e\in \delta(S)$ as the first edge of the perfect matching, and then recursively extends this to a perfect matching in $S$ and $V-S$, without using any other edge of the cut $\delta(S)$.

Note that if $M$ is the output of our algorithm, by definition, $\card{M\cap \delta(S)}=1$. This prevents us from using an arbitrary odd set $S\subset V$ in the first step and motivates the following definition.
\begin{definition}
	Given a graph $G=(V, E)$, an odd set $S$ is called \emph{viable} if there exists at least one perfect matching $M\subseteq E$ with $\card{M\cap \delta(S)}=1$.
\end{definition}

In order for a step of the algorithm to make significant progress, i.e., reduce the size of the graph by a constant factor, we also require the viable set to be \emph{balanced}. That is, we require 
\[ c_1 \leq \frac{\card{S}}{\card{V}}\leq (1-c_1)\]
for some small constant $c_1>0$. Throughout the paper we will assume several constant upper bounds for $c_1$. At the end $c_1$ can be set to the lowest of these upper bounds.

Assuming that we are able to find a balanced viable set $S$ in \NC, we can prove \cref{thm:main}.
\begin{proof}[Proof of \cref{thm:main}]
	Since the set $S$ found by \cref{alg:perfectmatching} is feasible, there is at least one perfect matching $N$ with $\card{N\cap \delta(S)}=1$. On the other hand, for the weight vector $w=\1_{\delta(S)}$ and any perfect matching $N$, we have $\dotp{w, \1_N}=\card{N\cap \delta(S)}$, which is always at least one. So the minimum weight perfect matchings $N$ are exactly those that have a single edge in the cut $\delta(S)$. The point $x$ is the average of these perfect matchings, so for any edge $e$ with $x_e>0$, there is at least one minimum weight perfect matching $N\ni e$. This shows that $\set{e}$ can be extended to a perfect matching without using any other edge of $\delta(S)$ and therefore proves that $G_1$ and $G_2$ both have a perfect matching, an assumption we need to be able to recursively call the algorithm. This shows the correctness of the algorithm.
	
	We finish the proof by showing that the algorithm is in \NC. By \cref{lem:avg}, we can compute the point $x$ in \NC, and we assumed the viable set $S$ was found by an \NC~algorithm. So all of the steps of each recursive call can be executed in polylogarithmic time with a polynomially bounded number of processors. Notice that the recursion depth of the algorithm is at most $\log_{1/(1-c_1)}(\card{V})$  which is logarithmic in the input size. This is because the size of the graph gets reduced by a factor of $1-c_1$ in each recursive level. Since recursive calls are executed in parallel, this shows that the entire algorithm runs in polylog time.
\end{proof}

All that remains is finding a balanced viable set by an \NC~algorithm. This is done in \cref{sec:balanced-viable}.

\subsection{Finding a balanced viable set}
\label{sec:balanced-viable}

In this section we describe how to find a balanced viable set $S$ in a graph $G=(V,E)$ by an \NC~algorithm. Notice that a single vertex is, by definition, a viable set, but is not balanced unless $\card{V}\leq \frac{1}{c_1}$. So w.l.o.g.\ we can assume $\card{V}>\frac{1}{c_1}$.

The main idea behind our algorithm is the following: Suppose that we reduce the size of the graph $G$ by either removing edges not participating in perfect matchings from it, or shrinking tight odd sets (both w.r.t.\ some weight vector $w$). Any vertex in the shrunk graph corresponds to an odd set in the original graph $G$. This odd set is always viable. So if we manage to reduce the size of the shrunk graph enough so that it contains at most $1/c_1$ vertices, then the largest of the viable sets we get this way would have size at least $c_1\card{V}$. By being careful when we remove edges or shrink pieces, we can also make sure the size is not larger than $(1-c_1)\card{V}$; so the end result is a balanced viable set. See \cref{fig:shrinking} for a depiction.

The pseudocode is given in \cref{alg:balancedviableset}. Throughout the algorithm we maintain a mapping $f$ from the original vertices to the vertices of the current shrunk graph. We iteratively reduce the size of the graph by removing edges and/or contracting odd sets of vertices until one of the vertices contains a $c_1$ fraction of the original vertices. We then return the preimage of this vertex.

\begin{algorithm}
	\caption{Finding a balanced viable set.}
	
	\BalancedViableSet{$G_0=(V_0,E_0)$}
	
	Let $G=(V, E)$ be a copy of $G$ and let $f:V_0\to V$ be the identity map.
	
	\While{$\card{f^{-1}(v)}< c_1\card{V_0}$ for all $v\in V$}{
		$G, f\leftarrow \Preprocess(G, f)$.
		
		$G, f\leftarrow \Reduce(G, f)$.
%
%
%
%
%
	}
	
	Find $v\in V$ for which $\card{f^{-1}(v)}\geq c_1\card{V_0}$.
	
	\Return $f^{-1}(v)$.
	\bigskip
	\label{alg:balancedviableset}
\end{algorithm}

\begin{lemma}
	\label{lem:terminate}
	The while loop in \cref{alg:balancedviableset} finishes as soon as $\card{V}\leq \frac{1}{c_1}$.
\end{lemma}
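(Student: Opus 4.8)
The plan is a short pigeonhole count on the fibers of the map $f$, preceded by a small bookkeeping observation. Throughout the execution of \cref{alg:balancedviableset}, the map $f\colon V_0\to V$ is well defined and surjective onto the vertex set $V$ of the current shrunk graph: each operation performed by the \textsc{Preprocess} and \textsc{Reduce} subroutines either deletes edges (leaving $V$ and $f$ unchanged) or contracts a set of vertices into a single new vertex, in which case $f$ is updated by post-composing with the contraction. A contraction merges preimages rather than creating empty ones, so surjectivity is preserved and no vertex of $V$ ever has empty preimage. This is the only point I would spell out explicitly, and it follows immediately from the definitions of those subroutines.

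Given surjectivity, the fibers $f^{-1}(v)$ for $v\in V$ partition $V_0$, so $\sum_{v\in V}\card{f^{-1}(v)}=\card{V_0}$. Hence the largest fiber has size at least the average, $\max_{v\in V}\card{f^{-1}(v)}\geq \card{V_0}/\card{V}$.

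Now suppose $\card{V}\leq 1/c_1$. Then $\card{V_0}/\card{V}\geq c_1\card{V_0}$, so some $v\in V$ satisfies $\card{f^{-1}(v)}\geq c_1\card{V_0}$, which falsifies the while-loop guard ``$\card{f^{-1}(v)}<c_1\card{V_0}$ for all $v\in V$''. Therefore the loop cannot continue once $\card{V}$ has dropped to $1/c_1$ or below, which is exactly the claim. There is no real obstacle here: the only care needed is to confirm that the reductions never discard a vertex of $V$ without folding its preimage into another fiber — that is, that $f$ remains surjective — and everything else is the averaging argument above.
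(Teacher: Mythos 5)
Your proof is correct and uses the same averaging/pigeonhole argument as the paper: the fibers of $f$ partition $V_0$, so $\sum_{v\in V}\card{f^{-1}(v)}=\card{V_0}$, and once $\card{V}\leq 1/c_1$ some fiber must have size at least $c_1\card{V_0}$, falsifying the loop guard. The extra remark about $f$ remaining surjective is a harmless bit of bookkeeping the paper leaves implicit.
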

\begin{proof}
	At any point in the algorithm, we have
	\[ \sum_{v\in V}\frac{\card{f^{-1}(v)}}{\card{V_0}} = 1. \]
	So when the number of terms in the sum is below $\frac{1}{c_1}$, one of them has to be larger than $c_1$.
\end{proof}

We further maintain the invariant that our graph $G$ is at all time planar, and has a perfect matching. This invariant is satisfied, because we restrict ourselves to manipulate $G$ in only one of the two following ways:
\begin{enumerate}
	\item We either remove an edge $e$ from $G$, where $e$ does not participate in any minimum weight perfect matching for some weight vector $w$.
	\item Or we shrink a set of vertices $S$ that is a tight odd set w.r.t.\ some point $x$ in the matching polytope.
\end{enumerate}

\begin{lemma}
	If a graph $G$ is planar and has a perfect matching, after the removal of an edge or shrinking of a set as described above, it continues to be planar and have a perfect matching.
\end{lemma}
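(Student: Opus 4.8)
The plan is to treat the two operations in turn, checking for each that planarity is preserved and that the resulting graph still has a perfect matching. Deleting an edge $e$ that lies in no minimum-weight perfect matching of $G$ for some weight vector $w$ is the easy case: a subgraph of a planar graph is planar, and since $G$ has only finitely many perfect matchings there is one, say $M$, of minimum $w$-weight, which by hypothesis avoids $e$ and is therefore a perfect matching of $G - e$.

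For shrinking a tight odd set $S$ — meaning $\ds{x} = 1$ for some $x \in \PM(G)$ and $\card{S}$ odd — I would first check that $G[S]$ is connected; granting that, identifying $S$ to a single vertex is a sequence of edge contractions (contract a spanning tree of $G[S]$), hence a minor operation, so planarity is preserved because the class of planar graphs is closed under taking minors. Connectivity of $G[S]$ follows from tightness: if $S = A \sqcup B$ with $A, B$ nonempty and no $G$-edge between them, then $\delta(S) = \delta(A) \sqcup \delta(B)$ as edge sets, so $1 = \ds{x} = \ds[A]{x} + \ds[B]{x}$; as $\card{S}$ is odd one part, say $A$, is odd, which forces $\ds[A]{x} = 1$ and $\ds[B]{x} = 0$ (using $\ds[A]{x} \ge 1$ and $\ds[B]{x} \ge 0$), i.e. $x$ vanishes on $\delta(B)$. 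In the context where this step is applied the current graph is matching covered for $w$ (ensured by the preceding \textsc{Preprocess}), so $x = \avg(\PM(G,w))$ has full support and hence $\delta(B) = \emptyset$ in $G$, contradicting connectedness of $G$ since $\emptyset \ne B \ne V$; alternatively, one simply observes that the odd part $A$ is itself a tight odd set, so a disconnected $S$ may be replaced by a connected one before shrinking.

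For the perfect matching of $G/S$, write $x \in \PM(G) = \conv\set{\1_M \given M\text{ is a perfect matching of }G}$ as $x = \sum_i \lambda_i \1_{M_i}$ with all $\lambda_i > 0$. Tightness gives $1 = \ds{x} = \sum_i \lambda_i \card{M_i \cap \delta(S)}$, and each $\card{M_i \cap \delta(S)}$ is a positive odd integer (it has the parity of $\card{S}$), so $\card{M_i \cap \delta(S)} = 1$ for every $i$. Fixing $M := M_1$ and letting $e^* = uv$ be its unique edge of $\delta(S)$ with $u \in S$, the edge set $\set{e^*} \cup \set{f \in M \given f \subseteq V \setminus S}$, read in $G/S$ (where $e^*$ becomes an edge incident to the contracted vertex), is a perfect matching of $G/S$: the contracted vertex and $v$ are covered by $e^*$, and any other vertex of $V \setminus S$ is matched by $M$ to a vertex which, not being an endpoint of $e^*$, again lies in $V \setminus S$. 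The only place any real care is needed is the planarity claim for the contraction, i.e. arguing that the shrunk set induces a connected subgraph so that the contraction is a genuine graph minor; the rest is routine bookkeeping.
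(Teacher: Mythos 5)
Your treatment of edge deletion and of the existence of a perfect matching in the shrunk graph is fine and matches the paper's (the observation that every matching in the support of $x$ meets $\delta(S)$ in exactly one edge is the same certificate the paper uses, phrased via the convex combination rather than via the image of $x$ in the shrunk polytope). The gap is in the planarity argument: you reduce everything to the claim that $G[S]$ is connected, and you derive that claim by assuming $G$ itself is connected (and that $x$ has full support). Neither assumption holds in all the places the lemma is invoked. In particular, \textsc{MakeConnected} -- which is part of \textsc{Preprocess}, so you cannot appeal to preprocessing having already ensured connectivity -- is run precisely when $G$ is disconnected, and the tight odd set it shrinks is deliberately disconnected: a vertex $v$ together with several entire connected components $C_1,\dots,C_i$. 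Your fallback of ``replace a disconnected $S$ by its connected odd part $A$ before shrinking'' is not available there, since shrinking only $A$ would defeat the purpose of the operation (the whole point is to merge the components into one vertex), and the lemma as stated must cover the set the algorithm actually shrinks.

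The paper's proof runs your same computation ($\ds[A]{x}=1$, $\ds[B]{x}=0$, hence $\delta(B)=\emptyset$) but draws the correct conclusion from it: rather than a contradiction, it concludes that $B$ is a union of whole connected components of $G$, so that any disconnected tight odd set decomposes as an internally connected odd set plus some full connected components. It then observes that contracting such a set also preserves planarity -- an entire connected component occupies its own region of the plane and can be merged into any other vertex without creating crossings. To repair your proof you need this extra case, not the connectivity claim.
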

\begin{proof}
	The lemma is obvious in the case of removing an edge. Planarity is automatically satisfied, and since the edge did not participate in any minimum weight perfect matching, the remaining graph still has a minimum weight perfect matching.
	
	In the case of shrinking a tight odd set, first note that the resulting graph would still have a perfect matching. Indeed, if we look at $x$ after shrinking $S$, it becomes a valid point in the matching polytope of the shrunk graph: The degree constraint of the shrunk vertex is satisfied because the odd set constraint of $S$ was originally tight.
	
	It remains to show why after shrinking $S$, the graph remains planar. If $S$ is internally connected, this follows from the fact that contracting edges preserves planarity. Otherwise, assume that $S=S_1\cup S_2$ where there are no edges between $S_1$ and $S_2$. Because $S$ is odd, either $S_1$ or $S_2$ must be odd, and the other even. W.l.o.g.\ assume that $S_1$ is odd. Note that $\ds[S_1]{x}\leq \ds{x}$ and equality holds if and only if $S_2$ does not have any outgoing edges. But $\ds{x}=1$, and $\ds[S_1]{x}\geq 1$, so there must be equality.
	
	So when $S$ was not internally connected, we showed that it must be a union of a connected component and a smaller tight odd set. By repeating this argument, we see that $S$ must always be a union of an internally connected odd set and a number of connected components. Now simply note that shrinking an entire connected component with some other vertex preserves planarity.
\end{proof}

It is also easy to see that any viable set in the resulting graph is a viable set in the original graph at any point, because any perfect matching in $G$ can be extended to a perfect matching in $G_0$. So we can return $f^{-1}(v)$ at any point if it is a balanced set.

The main loop in \cref{alg:balancedviableset} has two steps, \Preprocess and \Reduce. Although not explicitly stated in the pseudocode, at any point in the execution of either step, we can terminate the whole procedure by finding a balanced viable set and directly returning it.

First we preprocess the graph. Below we state the properties we expect to hold after preprocessing. We postpone the description of the procedure \Preprocess and the proof of \cref{lem:preprocess} to \cref{sec:preprocessing}.

\begin{lemma}
	\label{lem:preprocess}
	The procedure \Preprocess either finds a balanced viable set or after it returns the following conditions hold.
	\begin{enumerate}
		\item $G$ is connected.
		\item No vertex $v\in V$ has degree $1$ and at most half of the vertices have degree $2$.
		\item For all $v\in V$, we have $\card{f^{-1}(v)}<c_1 \card{V_0}$.
	\end{enumerate}
\end{lemma}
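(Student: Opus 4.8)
\subsection*{Proof proposal for \cref{lem:preprocess}}

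The plan is for \Preprocess to run a short sequence of cleanup subroutines --- say \MakeConnected followed by \ShrinkDegreeTwos, the latter also clearing degree-$1$ vertices --- each of which either halts with a balanced viable set already in hand or enforces one of the three promised conditions while preserving the standing invariants (planarity, existence of a perfect matching, and the restriction that we only delete an edge missed by every minimum weight perfect matching or contract an odd set that is tight at some point of $\PM(G)$). The recurring tool is that several natural \emph{local} odd sets have their cut constraint in LP~(\ref{eq:polytope}) tight at \emph{every} point of $\PM(G)$, so contracting them is always a legal move; and since the relevant sets can be chosen pairwise disjoint, many can be contracted at once, which is what keeps the procedure in \NC.

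\MakeConnected handles condition~1. If $G$ is connected there is nothing to do. If $G$ is disconnected with components $C_1,\dots,C_k$ we directly produce a balanced viable set: for any nonempty proper prefix and any extra vertex $v\in C_{j+1}$, the set $S=C_1\cup\dots\cup C_j\cup\{v\}$ is odd (each $\card{C_i}$ is even) and viable, since in every perfect matching $v$ is matched inside $C_{j+1}$ while $C_1,\dots,C_j$ contribute no cut edges, so $\card{M\cap\delta(S)}=1$. Sorting the components by the size of their $f$-preimage and sweeping over the split index $j$, some such $S$ satisfies $c_1\card{V_0}\le\card{f^{-1}(S)}\le(1-c_1)\card{V_0}$ --- unless a single component already carries more than a $(1-c_1)$-fraction of $V_0$, an exceptional case one must dispose of separately (e.g.\ by recursing inside that component, or by showing it cannot arise). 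I expect this exceptional case to be one of the two real sticking points.

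\ShrinkDegreeTwos handles condition~2. A degree-$1$ vertex $v$ with neighbor $u$ forces the edge $uv$ into every perfect matching; returning $f^{-1}(v)$ if it is balanced and otherwise contracting the always-tight triple $\{u,v,w\}$ (where $w$ is any other neighbor of $u$: in any perfect matching $v$ and $u$ meet its cut in $0$ edges and $w$ in exactly $1$) removes the degree-$1$ vertex and, iterated, all of them. For degree $2$, take a maximal path $u=y_0-y_1-\cdots-y_k-y_{k+1}=w$ whose interior vertices all have degree $2$. Each $y_i$ has both of its edges on this path, so every perfect matching saturates $y_1,\dots,y_k$ within the path, forcing a rigid alternation; a one-line parity check then shows that an odd-size interior segment $\{y_i,\dots,y_j\}$, and when $k$ is even a segment padded by an endpoint such as $\{u,y_1,\dots,y_k\}$, is tight at every point of $\PM(G)$, hence contractible. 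Contracting such sets collapses each chain to a single vertex of degree $\ge 3$; conflicts among chains sharing an endpoint are resolved by taking a maximal independent set of chains via \cref{lem:maximal}, over $O(\log n)$ rounds. The genuinely hard part is the quantitative claim that the number of degree-$2$ vertices can be pushed below $\card{V}/2$: collapsing chains alone may stall on graphs resembling subdivided cubic graphs, so one needs a charging argument using planarity --- after collapsing, the degree-$2$ vertices are independent and their ``contracted'' incidences to the degree-$\ge 3$ core again form a planar (multi)graph, which bounds their number against the core size --- combined with the observation that an overly small core means some $f^{-1}(v)$ is already balanced. I expect this to be the main obstacle of the lemma.

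Condition~3 then follows, as does membership in \NC. Every contraction above is of a set of controlled size (a prefix of components, a triple, or a single chain), so the first moment some vertex's preimage reaches $c_1\card{V_0}$ we test whether it is also at most $(1-c_1)\card{V_0}$: if so we return it, and if not we recurse inside it. Hence if \Preprocess returns without a balanced viable set then $\card{f^{-1}(v)}<c_1\card{V_0}$ for all $v$, which is condition~3, while conditions~1 and~2 are exactly what the two subroutines guarantee. Each subroutine runs $O(1)$ or $O(\log n)$ rounds of connected-components computation, maximal-independent-set computation (\cref{lem:maximal}), and parallel contractions, all in \NC, so \Preprocess lies in \NC.
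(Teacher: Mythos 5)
Your overall architecture matches the paper's (connect the graph via tight odd sets built from prefixes of components; kill degree-$2$ chains via tight odd segments $\{v_0,\dots,v_i\}$ with $i$ even, using the forced alternation of $x$ along the chain), but both of the places you flag as "sticking points" are genuine gaps in your write-up, and each is resolved in the paper by a simple device you did not find. For \MakeConnected, there is no exceptional case to dispose of: the paper takes the single extra vertex $v$ from the \emph{largest} component $C_k$ and sweeps only over prefixes of the remaining components $C_1\subseteq C_1\cup C_2\subseteq\cdots$. Each $C_j$ with $j<k$ is not the largest, so $\card{f^{-1}(C_j)}\le\card{V_0}/2$, and hence the first prefix $S_j=\{v\}\cup C_1\cup\dots\cup C_j$ whose preimage reaches $c_1\card{V_0}$ overshoots to at most $c_1\card{V_0}+\card{V_0}/2<(1-c_1)\card{V_0}$; if no prefix ever reaches $c_1\card{V_0}$ one simply shrinks $S_{k-1}$. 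Your version, which allows the swept prefix to swallow an arbitrary component, is exactly what creates the unresolved "one huge component" case.

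For condition~2, no planarity-based charging argument is needed, and your worry about "stalling on subdivided cubic graphs" conflates eliminating all degree-$2$ vertices with getting below half. The paper wraps \ShrinkDegreeTwos in a while loop whose guard is literally condition~2 ("more than half the vertices have degree $2$"), so the condition holds on exit by construction; the only thing to prove is termination in $O(\log n)$ rounds, which follows because whenever more than half the vertices have degree $2$, collapsing every maximal degree-$2$ chain into a single vertex removes all current degree-$2$ vertices and hence at least halves $\card{V}$ (new degree-$2$ vertices may appear, but the halving bounds the number of rounds). Separately, your treatment of degree-$1$ vertices by iteratively contracting triples $\{u,v,w\}$ has no round bound (on a long path it takes linearly many rounds) and is not what the paper does: \Preprocess first computes $x=\avg(\PM(G))$ and deletes every edge with $x_e=0$; in the resulting connected graph every edge lies in a perfect matching, which forces minimum degree at least $2$ except in the trivial single-edge case. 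Omitting this edge-removal step is a third gap, since nothing else in your procedure rules out degree-$1$ vertices surviving to the end. Your handling of condition~3 and the \NC\ bookkeeping are fine and match the paper.
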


Now we describe the main step, i.e., \Reduce. The pseudocode is given in \cref{alg:reduce}.

Assuming \cref{lem:preprocess}, our goal is to either remove a constant fraction of the edges of $G$ or shrink pieces of $G$ so that a constant fraction of the edges get shrunk. The conditions satisfied after the preprocessing step, \cref{lem:preprocess}, ensure that we can apply \cref{lem:evenwalks} and find $\Omega(\card{E})$ edge-disjoint even walks, as we do in the first step of \cref{alg:reduce}.

\begin{algorithm}
	\caption{Reducing the size of a graph by shrinking vertices and removing edges.}
	
	\Reduce($G=(V, E)$, $f:V_0\to V$)
	
	Find $\Omega(\card{E})$ edge-disjoint even walks $W_1,\dots, W_k$.
	
	Let $w\leftarrow 0$, the zero weight vector.
	
	\ParallelFor{$W\in \set{W_1,\dots, W_k}$}{
		Set $w_e\leftarrow 1$ for the first edge $e$ of $W$.
	}
	
	Let $x\leftarrow \avg(\PM(G, w))$.
	
	\ParallelFor{$e\in E$ with $x_e=0$}{
		Remove edge $e$ from $G$.
	}
	
	Let $\mathcal{W}=\set{W_i \given W_i \text{ did not lose an edge}}$.
	
	Let $S_1,\dots,S_l\leftarrow \DisjointOddSets(G, f, x, \mathcal{W})$.
	
	Shrink each $S_i$ into a single vertex and update $f$ on $f^{-1}(S_i)$ to point to the new vertex.
	\bigskip
	\label{alg:reduce}
\end{algorithm}

Next, we construct a weight vector which is $0$ everywhere except for the first edge of every even walk, and find a point $x$ in the relative interior of $\PM(G, w)$ by applying \cref{lem:avg}. By our choice of weight vector, each even walk either loses an edge or gets blocked by a tight odd set as we will prove in \cref{sec:oddsets}. Our last step consists of finding a number of disjoint odd sets $S_1,\dots, S_l$, such that each even walk $W_i$, that did not lose an edge, has an edge with both endpoints in one $S_j$. We describe the procedure \DisjointOddSets and prove these properties in \cref{sec:oddsets}.

Now we can prove the following.

\begin{lemma}
	\label{lem:reduction}
	After running \Reduce we either find a balanced viable set, or $\card{E}$ gets reduced by a constant factor.
\end{lemma}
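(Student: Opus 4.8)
\emph{Proof proposal.} The plan is a two-case counting argument that exploits the edge-disjointness of the even walks produced in the first line of \cref{alg:reduce}. Recall the standing convention (stated just before \cref{lem:preprocess}) that at any point inside \Preprocess, \DisjointOddSets, or \Reduce we may instead halt and return a balanced viable set; if that occurs we are in the first alternative of the lemma and there is nothing to prove, so assume no such early return happens.

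Under this assumption \cref{lem:preprocess} guarantees that after \Preprocess the graph $G$ is connected, has no vertex of degree $1$, and has at most $\card{V}/2$ vertices of degree $2$; these are precisely the hypotheses of \cref{lem:evenwalks}. (Note also that inside the while loop of \cref{alg:balancedviableset} we have $\card{V} > 1/c_1$ and, since the minimum degree is at least $2$, $\card{E} \geq \card{V}$, so $\card{E}$ is large enough for $\Omega(\card{E})$ to be meaningful.) Thus the first step of \Reduce returns $k$ edge-disjoint even walks $W_1, \dots, W_k$ with $k \geq c_2\card{E}$ for an absolute constant $c_2 > 0$. Each walk either \emph{loses an edge} --- some edge of it gets $x_e = 0$ and is deleted in the edge-removal loop --- or else belongs to $\mathcal{W}$. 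These two groups partition $\set{W_1,\dots,W_k}$, so one of them has size at least $c_2\card{E}/2$.

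In the first case, suppose at least $c_2\card{E}/2$ walks lose an edge. Pick for each such walk $W_i$ one deleted edge $e_i \in W_i$ with $x_{e_i}=0$; since the walks are pairwise edge-disjoint, the $e_i$ are distinct, so the edge-removal loop deletes at least $c_2\card{E}/2$ distinct edges and $\card{E}$ shrinks by at least the factor $c_2/2$. In the second case, suppose $\card{\mathcal{W}} \geq c_2\card{E}/2$. By the guarantees of \DisjointOddSets proved in \cref{sec:oddsets}, the sets $S_1,\dots,S_l$ are pairwise disjoint and each $W_i\in\mathcal{W}$ contains an edge $e_i$ with both endpoints in a single $S_{j(i)}$; since $W_i$ did not lose an edge, $e_i$ survives the edge deletions, and shrinking $S_{j(i)}$ turns $e_i$ into a self-loop which is then discarded. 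Edge-disjointness of the walks makes the $e_i$ distinct and disjointness of the $S_j$ prevents any edge from being counted twice, so at least $c_2\card{E}/2$ edges disappear. Either way $\card{E}$ is reduced by a constant factor, proving the lemma.

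The argument above rests entirely on two results stated but not yet proved: \cref{lem:preprocess}, which supplies the hypotheses that make \cref{lem:evenwalks} applicable, and the correctness of \DisjointOddSets from \cref{sec:oddsets}, which produces --- for every walk surviving the edge deletion --- an edge with both endpoints in one of the pairwise-disjoint sets to be shrunk. Given those, the reduction is the elementary observation above; the real difficulty has been deferred to \cref{sec:oddsets}, where one must show that a walk which fails to lose an edge is blocked by a tight odd set and then assemble these odd sets into a disjoint family hitting an edge of each surviving walk --- which is exactly where the main idea of the paper (choosing the weight function and moving to the minimum-weight face) does its work.
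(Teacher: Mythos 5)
Your proposal is correct and follows essentially the same argument as the paper: every even walk contributes a distinct edge that disappears, either in the $x_e=0$ deletion step or as an edge with both endpoints in one of the disjoint shrunk sets guaranteed by \DisjointOddSets, and edge-disjointness of the walks plus the $\card{V}>1/c_1$ bound from \cref{lem:terminate} makes this a constant fraction of $\card{E}$. Your split into "at least half lose an edge / at least half are blocked" only costs a harmless factor of $2$ relative to the paper's direct count of $k$ removed edges.
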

\begin{proof}
	We find $k$ even walks where $k=\Omega(\card{E})$. Every walk either loses an edge in the edge removal step, or loses an edge after shrinking $S_1,\dots, S_l$. So the number of edges gets reduced by at least $k$, which is a constant fraction of $\card{E}$ as long as $\card{E}$ is large enough (larger than a large enough constant). Note that we never encounter graphs with $\card{V}<1/c_1$ by \cref{lem:terminate}, so by setting $c_1$ small enough we can assume that $\card{E}$ is larger than a desired constant.
\end{proof}

By \cref{lem:reduction}, our measure of progress, $\card{E}$ gets reduced by a constant factor each time until we find a balanced viable set. Therefore the number of times \Reduce is called is at most $O(\log(\card{E_0}))$, so as long as \DisjointOddSets can be run in \NC, the whole algorithm is in \NC.

We describe the remaining pieces, \Preprocess in \cref{sec:preprocessing}, and \DisjointOddSets in \cref{sec:oddsets}.

\subsection{Preprocessing}
\label{sec:preprocessing}

Here we describe the procedure \Preprocess and prove \cref{lem:preprocess}. The pseudocode is given in \cref{alg:preprocess}. Throughout the process, we make sure that $\card{f^{-1}(v)}<c_1\card{V_0}$ for every $v$ or we find a balanced viable set.

In the first step, we remove any edge of $G$ that does not participate in a perfect matching. Next we make the graph connected. We arrive at a connected graph where every edge participates in a perfect matching. This ensures that there are no vertices of degree $1$, unless the entire graph is a single edge; but in that case we  return $f^{-1}(v)$ as a balanced viable set for any of the two vertices.

After having a connected graph with no vertices of degree $1$, while half of the vertices have degree $2$, we shrink them into other vertices by finding appropriate tight odd sets. The while loop can be run at most a logarithmic number of times, because each time the number of vertices gets reduced by a factor of $2$.

\begin{algorithm}
	\caption{The preprocessing step.}
	
	\Preprocess($G=(V, E)$, $f:V_0\to V$)

	Let $x\leftarrow \avg(\PM(G))$.
	
	\ParallelFor{$e\in E$ with $x_e=0$}{
		Remove $e$ from $G$.
	}
	
	Let $G, f\leftarrow \MakeConnected(G, f)$.
	
	\While{$\card{\set{v\in V\given \deg(v)=2}}>\card{V}/2$}{
		Let $G,f\leftarrow \ShrinkDegreeTwos(G, f)$.
	}
	
	\bigskip
	\label{alg:preprocess}
\end{algorithm}

It remains to describe the procedures \MakeConnected and \ShrinkDegreeTwos. Both of these procedures work by shrinking tight odd sets w.r.t.\ $x$. In both, we have to be slightly careful to avoid shrinking a large piece of the original graph causing a violation of the condition $\card{f^{-1}(v)}<c_1\card{V_0}$.

First let us describe \MakeConnected. We first find the connected components $C_1,\dots,C_k$ of $G$. We sort them to make sure $\card{f^{-1}(C_1)}\leq \dots\leq \card{f^{-1}(C_k)}$. Let $v$ be an arbitrary vertex of $C_k$. For any $i<k$ the set $S_i=\set{v}\cup C_1\cup \dots C_i$ is a tight odd set, because $\set{v}$ is a tight odd set and adding entire connected components does not change the cut value. If $\card{f^{-1}(S_{k-1})}<c_1\card{V_0}$, then we can simply shrink $S_{k-1}$ into a single vertex and make the graph connected. Otherwise let $j$ be the first index where $\card{f^{-1}(S_j)}\geq c_1\card{V_0}$. Then $f^{-1}(S_j)$ is a viable set, because it is a tight odd set. We claim that it is balanced as well; for this we need to show that $\card{f^{-1}(S_j)}\leq (1-c_1)\card{V_0}$. We have
\[ \card{f^{-1}(S_j)}=\card{f^{-1}(S_{j-1})}+\card{f^{-1}(C_j)}\leq c_1\card{V_0}+\frac{1}{2}\card{V_0}, \]
where we used the fact that $C_j$ is not the largest component in terms of $f^{-1}(C_j)$. So as long as $c_1+1/2<1-c_1$, we are done. This is clearly satisfied for small enough $c_1$.

Now let us describe \ShrinkDegreeTwos. First we identify all vertices of degree $2$. Some of these vertices might be connected to each other, in which case we get paths formed by these vertices. We can extend these paths, by the doubling trick in polylog time to find maximal paths consisting of degree $2$ vertices. Then, in parallel, for each such maximal path we do the following: Let the vertices of the path be $(v_1,\dots,v_k)$. Further, let $v_0$ be the vertex we would get if we extended this path from the $v_1$ side and $v_{k+1}$ the one we would get from the $v_k$ side. Note that $\deg(v_i)=2$ for $i=1,\dots,k$ but not for $i=0,k+1$.

We claim that for any even $i$, the set $S_i=\set{v_0,v_1,\dots,v_i}$ is a tight odd set. To see this, let $t=x_{(v_0,v_1)}$. Then because $v_1$ has degree $2$, it must be that $x_{(v_1,v_2)}=1-t$. Then, this means that $x_{(v_2,v_3)}=t$, and so on. In the end, we get that $x_{(v_{i-1},v_i)}=1-t$. Now, look at the edges in $\delta(S)$. They are either adjacent to $v_0$ or $v_i$. Those adjacent to $v_0$ have a total $x$ value of $1-t$ and those adjacent to $v_i$ have a total $x$ value of $t$. So $\ds[S_i]{x}=t+(1-t)=1$.

Now let $j$ be the first even index such that $\card{f^{-1}(S_j)}\geq c_1\card{V_0}$. If no such index exists, we can simply shrink $S_k$ or $S_{k+1}$ (depending on the parity of $k$). Else, we claim that $S_j$ is a balanced viable set. Viability follows from being a tight odd set. Being balanced follows because
\[ \card{f^{-1}(S_j)}=\card{f^{-1}(S_{j-2})}+\card{f^{-1}(v_{j-1})}+\card{f^{-1}(v_{j})}\leq c_1\card{V_0}+c_1\card{V_0}+c_1\card{V_0}. \]
So as long as $3c_1\leq (1-c_1)$, the set $S_j$ is balanced and we can simply return it.

Having all of the ingredients, we now finish the proof of \cref{lem:preprocess}.

\begin{proof}[Proof of \cref{lem:preprocess}]
	It is easy to see that the point $x\in \PM(G)$ remains a valid point throughout, i.e., it remains in the matching polytope even after shrinking sets. This is because we only shrink tight odd sets w.r.t.\ $x$. Assume that the algorithm does not find a balanced viable set.
	
	After \MakeConnected the graph becomes connected, and from then on it remains connected.
	
	Since $x$ remains a valid point in the matching polytope until the end, every edge at the end participates in a perfect matching. But in a connected graph, this means that there are no vertices of degree $1$.
	
	Finally note that by the stopping condition of the while loop, the algorithm terminates only when at most half of the remaining vertices have degree $2$.
\end{proof}
	\section{Tight Odd Sets}
\label{sec:oddsets}

In this section we describe the main remaining piece of the algorithm, namely the procedure \DisjointOddSets. The input to this procedure is a graph $G=(V,E)$ and a map $f:V_0\to V$, a number of edge-disjoint even walks $W_1,\dots,W_m$ in $G$, the point $x= \avg(\PM(G, w))$, where $w$ is the weight vector constructed in \cref{alg:reduce}. Note that $x_e>0$ for all $e\in E$, since we removed all edges $e$ with $x_e=0$. We will prove the following:

\begin{lemma}
	\label{lem:mainoddsets}
	There is an \NC~algorithm \DisjointOddSets, that either finds a balanced viable set, or finds disjoint tight odd sets $S_1,\dots, S_l$ satisfying the following: In any $W_i$ there is an edge $e$ both of whose endpoints belong to some $S_j$. Furthermore $\card{f^{-1}(S_j)}<c_1\card{V_0}$ for all $j$.
\end{lemma}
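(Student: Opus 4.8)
The plan is to proceed in three stages: (i) for each even walk $W_i$ that did not lose an edge, produce a tight odd set that "blocks" it, in the sense that it contains an edge of $W_i$ with both endpoints inside; (ii) assemble all these odd sets into a single (possibly crossing) family and uncross it to obtain a laminar family, then take its maximal (top-level) sets, which are pairwise disjoint; (iii) control the sizes, splitting off a balanced viable set whenever a set under consideration grows too large. Let me sketch each.

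For stage (i): by the choice of $w$ in \cref{alg:reduce} (weight $1$ on the first edge of each walk, $0$ elsewhere) and the fact that $x=\avg(\PM(G,w))$ lies in the relative interior of the minimum-weight face, one shows — this is the content of the circulation argument sketched in \cref{sec:find-one}, and I would invoke the analog of \cref{lem:main} — that for a walk $W_i$ whose circulation is nonzero, rotating $W_i$ in the appropriate direction keeps $x$ feasible only up to some $\epsilon>0$ at which either an edge of $W_i$ hits $0$ or some odd-set constraint goes tight. Since we have restricted to $\mathcal{W}$, the walks that did \emph{not} lose an edge, the blocking must be by a tight odd cut. To actually find such a cut in \NC\ I would perturb the rotation slightly past $\epsilon$ so that $x$ becomes infeasible, note that every singleton cut stays tight under a rotation (each rotation is a circulation, so it is degree-preserving), hence every \emph{minimum}-weight odd cut w.r.t.\ the perturbed point is an under-tight cut crossing $W_i$; then apply the Padberg–Rao result together with the planar Gomory–Hu tree construction (\cref{sec:GH}) to extract a minimum-weight odd cut, which is therefore a tight odd set of $G$ (w.r.t.\ the original $x$) crossing $W_i$. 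Because $W_i$ is rotated and its constraint went tight on this cut, at least one edge of $W_i$ has both endpoints inside the cut — this is the same observation used for a single walk in the overview ("at least one edge of the walk will be in a tight odd set"). Doing this for all walks in $\mathcal{W}$ in parallel gives a family $\mathcal{S}_0$ of tight odd sets such that each $W_i\in\mathcal{W}$ is blocked by some member of $\mathcal{S}_0$.

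For stage (ii): the sets in $\mathcal{S}_0$ may cross, so I would feed $\mathcal{S}_0$ into the uncrossing procedure described in \cref{sec:uncross-all} (the divide-and-conquer \Uncross/\MergeUncross routine), which in \NC\ returns the top-level sets $S_1,\dots,S_l$ of a laminar family of tight odd sets defining the same face. These top-level sets are pairwise disjoint by definition of laminar-maximal. Uncrossing preserves the face, hence each $S_i$ is still a tight odd set w.r.t.\ $x$, and — crucially — every edge $e$ that had both endpoints in some member $T$ of $\mathcal{S}_0$ still has both endpoints in whichever top-level set contains $T$ (a blocking edge of $W_i$ is an \emph{internal} edge of a tight set, and internal edges only stay internal as sets get merged upward, since uncrossing replaces a crossing pair either by their union/intersection or by their differences, and in the differences case the removed part $S_1\cap S_2$ being tight with $x_e>0$ on all edges forces... — this monotonicity of "internal-ness" is the one point I'd check carefully). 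So each $W_i\in\mathcal{W}$ still has an edge with both endpoints in some $S_j$.

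For stage (iii) and the size bound: whenever, during either the Gomory–Hu extraction in stage (i) or the uncrossing in stage (ii), we encounter a tight odd set $S$ with $c_1\card{V_0}\le \card{f^{-1}(S)}$, we stop and return $f^{-1}(S)$ — it is viable (being a tight odd set, hence containing an edge of some perfect matching in its cut and exactly one, by the argument in the proof of \cref{thm:main}) and we must argue it is also $\le (1-c_1)\card{V_0}$. To guarantee the upper bound I would, as in \ShrinkDegreeTwos\ and \MakeConnected, only ever build such a set incrementally or split it: if a candidate top-level set is too large, its laminar children are each of controlled size, so we can instead descend and either find one child in the balanced range or recurse; formally one shows that the first set along a chain that reaches size $c_1\card{V_0}$ overshoots by at most the size of one laminar child plus $c_1\card{V_0}$, and with $\mathcal{W}$-blocking only needing \emph{some} internal edge this leaves enough slack to keep the constant inequalities ($3c_1\le 1-c_1$ or similar) consistent. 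If no set ever reaches $c_1\card{V_0}$, then all returned $S_j$ satisfy $\card{f^{-1}(S_j)}<c_1\card{V_0}$ as required.

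The main obstacle I expect is stage (i): turning the existential statement "the walk is blocked by \emph{a} tight odd cut" into an \NC\ \emph{algorithm} that produces one, which is exactly why the planar Gomory–Hu machinery of \cref{sec:GH} and the Padberg–Rao odd-cut lemma are needed, and why one must be careful that the perturbed point's minimum odd cuts are precisely the relevant ones. The uncrossing in stage (ii) is delicate but is packaged by \cref{sec:uncross-all}; the size bookkeeping in stage (iii) is routine once the right incremental/laminar-descent viewpoint is adopted.
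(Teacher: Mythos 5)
Your stage (i) and the overall three-stage architecture match the paper, but stage (ii) contains a genuine gap, and it is exactly the point you flagged as ``the one I'd check carefully.'' The property ``$W_i$ has an edge with both endpoints inside some set of the family'' is \emph{not} preserved by uncrossing, and the paper says so explicitly. The failure mode is the difference case of \cref{lem:uncross}: when crossing tight sets $S_1,S_2$ with $\card{S_1\cup S_2}$ even are replaced by $S_1$ and $S_2-S_1$, the submodularity identity only forces $\delta(S_1\cap S_2,\,V-(S_1\cup S_2))=\emptyset$; it does \emph{not} rule out edges between $S_1\cap S_2$ and $S_2-S_1$. An edge of $W_i$ internal to $S_2$ with one endpoint in $S_1\cap S_2$ and the other in $S_2-S_1$ becomes a \emph{cut} edge of the new family and is internal to nothing. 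Your dangling argument (``the removed part $S_1\cap S_2$ being tight with $x_e>0$ forces\dots'') cannot be completed, so the monotonicity of internal-ness you rely on is false, and your $S_1,\dots,S_l$ need not block all the walks.

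The paper's fix is to carry a \emph{stronger} invariant through the uncrossing: for each set $S$ define the subspace $\Lambda(S)=\span\{\1_{\delta(T)}\mid T\subseteq S,\ T\text{ tight odd}\}$, and maintain $\chi_{W_i}\notin\Lambda^\perp(S_1,\dots,S_m)$ for every walk. This nonorthogonality implies the existence of an internal edge (the alternating-sign pairing argument), it holds at the end of stage (i) because \cref{lem:main} delivers a blocking set with $\ds{\chi_{W_i}}\neq 0$, and — unlike edge-containment — it is preserved because every uncrossing step only \emph{enlarges} $\Lambda$ (\cref{lem:uncross}, \cref{thm:uncrossing}). A second, smaller point: ``take the top-level sets of a laminar family'' is not by itself an \NC\ procedure; the sequential uncrossing has linear depth, and the divide-and-conquer \Uncross/\MergeUncross with the intersection-parity-graph analysis (\cref{lem:3way,lem:odd,lem:even}) is the actual content of stage (ii), not packaging you can cite and skip. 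Your size bookkeeping in stage (iii) is in the right spirit but should be phrased as the paper's \CheckBalancedViable: grow a connected prefix in the intersection parity graph until $f^{-1}$ first exceeds $2c_1\card{V_0}$ (overshooting by less than $c_1\card{V_0}$ since each constituent is small), then fix parity via \cref{lem:odd} or \cref{lem:even}.
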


At a high level the procedure works as follows:
\begin{enumerate}
	\item First, for each even walk $W_i$, we find a tight odd set blocking it.
	\item The resulting tight odd sets might cross each other in arbitrary ways. We uncross them to obtain $S_1,\dots,S_l$, being careful not to produce sets with $\card{f^{-1}(S_i)}\geq c_1\card{V_0}$.
\end{enumerate}

In \cref{sec:find-one} we describe the procedure for finding a tight odd set blocking an even walk. Then in \cref{sec:uncross-all}, we describe how to uncross these and produce disjoint odd sets.

\subsection{Finding a tight odd set blocking an even walk}
\label{sec:find-one}

In this section we describe how to find a tight odd set blocking a given even walk $W$. At a high level, we first move slightly outside of the polytope by moving along a direction defined by $W$. Then we find one of the violated constraints defining the matching polytope. This must be the tight odd set we were after.

Recall that an even walk is either a simple even length cycle in $G$ or the following structure: Let $C_1$ and $C_2$ be two odd length edge-disjoint cycles in $G$ and let $P$ be a path connecting vertex $v_1$ of $C_1$
to vertex $v_2$ of $C_2$; if $v_1 = v_2$, $P$ will be the empty path. Starting from $v_1$, traverse $C_1$, then $P$ from $v_1$ 
to $v_2$, then traverse $C_2$, followed by $P$ from $v_2$ to $v_1$.  

Next we define the alternating vector of an even walk $W$. For this purpose, write $W$ as a list of edges $W=(e_1,\dots,e_k)$, where $k$ is even and if the walk contains a path, then the edges of the path will be repeated twice in this list. We define the alternating vector associated to $W$ as the vector $\chi_W$ given by
\[ \chi_W=-\1_{e_1}+\1_{e_2}-\1_{e_3}+\ldots+\1_{e_k}=\sum_i (-1)^i \1_{e_i}. \]
In terms of its components we have
\[ (\chi_W)_e=\begin{cases}
	(-1)^i & \text{if $e=e_i$ and $e$ is not on the path in $W$},\\
	2(-1)^i & \text{if $e=e_i$ and $e$ is on the path in $W$},\\
	0 & \text{if $e\neq e_i$ for any $i$}.\\
\end{cases}\]

Note that for a weight vector $w$, we have
\[ \dotp{w, \chi_W}=-w_{e_1}+w_{e_2}-w_{e_3}+\ldots+w_{e_k}. \]
In particular for the weight vector chosen in \cref{alg:reduce}, we have $\dotp{w, \chi_W}<0$.

We next define the notion of {\em rotation of an even walk}. For a given reference point $x\in\R^E$ an $\epsilon$-rotation by $W$ is simply the point $y=x+\epsilon\chi_W$. We remark that $\epsilon$ will always have a small, though still inverse exponentially large, magnitude. As a simple observation, note that
\[ \dotp{w, y}=\dotp{w, x}+\epsilon\cdot \dotp{w, \chi_W}<\dotp{w, x}. \]

Note that the point $x$ is $\avg(\PM(G, w))$, i.e., we have
\[ x = {\frac {\1_{M_1}+ \ldots + \1_{M_m}}  m},\]
where $M_1, \ldots, M_m$ are all the minimum weight perfect matchings in $G$.

We will now see what happens to an $\epsilon$-rotation of this point if $\epsilon$ is small enough.

\begin{lemma}
	\label{lem:rotation}
	Let $x=\avg(\PM(G,w))$ for some weight vector $w$. Let $W$ be an even walk whose edges are in the support of $x$, i.e., for every $e\in W$, we have $x_e>0$, and let $\dotp{w, \chi_W}<0$. Let $K(n)\leq n^n$ denote the number of perfect matchings in the complete graph $K_n$, and let $y$ be an $\epsilon$-rotation of $x$ with the walk $W$ for some $\epsilon<1/2nK(n)$. Then, the following hold:
	\begin{enumerate}
		\item \label{cond:1} For every vertex $v$, we have $\dv{y}=1$.
		\item \label{cond:2} For every odd set $S\subset V$, if $\ds{x}>1$, then $\ds{y}\geq 1$.
		\item \label{cond:3} For every edge $e\in E$, we have $y_e\geq 0$.
	\end{enumerate}
\end{lemma}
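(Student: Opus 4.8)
The plan is to verify each of the three conditions separately, using the structure of the alternating vector $\chi_W$ and the fact that $x$ is the average of all minimum weight perfect matchings.

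For condition \ref{cond:1}, I would observe that $\chi_W$ is supported on the edges of the walk $W$, and that each vertex $v$ is incident to an even number of edge-slots of $W$ whose signs alternate around the traversal. Concretely, since $W$ is a closed walk (it starts and ends at the same vertex), whenever the walk passes through a vertex $v$ it enters on one edge and leaves on the next, so at $v$ the two consecutive edge-slots carry opposite signs $(-1)^i$ and $(-1)^{i+1}$. Summing $\chi_W$ over $\delta(v)$, the contributions cancel in pairs (including the subtle case of an edge of the path traversed twice, where the factor $2$ is split across the two passes, and the case $v_1 = v_2$ where $C_1$ and $C_2$ share a vertex). Hence $\dv{\chi_W} = 0$ for every $v$, so $\dv{y} = \dv{x} + \epsilon \dv{\chi_W} = 1 + 0 = 1$. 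I would want to be a little careful writing out the pairing at the endpoints of the path and at $v_1, v_2$, but this is a finite case check.

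For condition \ref{cond:3}, the point is that $\epsilon$ is chosen tiny. Every edge $e \in W$ has $x_e > 0$ by hypothesis; moreover $x = (\1_{M_1} + \dots + \1_{M_m})/m$ with $m \le K(n) \le n^n$, so $x_e \ge 1/m \ge 1/K(n)$ whenever $x_e > 0$. Since $|(\chi_W)_e| \le 2$ and $\epsilon < 1/(2nK(n)) < 1/(2K(n))$, we get $y_e = x_e + \epsilon (\chi_W)_e \ge 1/K(n) - 2\epsilon > 0$ for $e \in W$, and $y_e = x_e \ge 0$ for $e \notin W$.

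For condition \ref{cond:2}, which I expect to be the main obstacle, I would argue as follows. Fix an odd set $S$ with $\ds{x} > 1$. Again $\ds{x}$ is an average over the matchings $M_j$ of the integer quantities $|M_j \cap \delta(S)|$, each of which is an odd integer $\ge 1$; so if the average exceeds $1$, at least one matching contributes $\ge 3$, and therefore $\ds{x} \ge 1 + 2/m \ge 1 + 2/K(n)$. On the other hand $\ds{\chi_W}$ is a sum of at most (the length of $W$, which is at most $n^2$ or so) terms each bounded by $2$ in absolute value; a cruder but sufficient bound is $|\ds{\chi_W}| \le 2|\delta(S)| \le n^2$, or one may bound the walk length directly. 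Then $\ds{y} = \ds{x} + \epsilon \ds{\chi_W} \ge 1 + 2/K(n) - \epsilon n^2$. Choosing the threshold $\epsilon < 1/(2nK(n))$ makes $\epsilon n^2 \le n/(2K(n)) \cdot (1/n) \cdot n^2$... here I would simply need the slack $2/K(n)$ to dominate $\epsilon \cdot |\ds{\chi_W}|$, which holds since $|\ds{\chi_W}|$ is polynomially bounded in $n$ while $\epsilon$ carries the factor $1/K(n)$ and an extra $1/(2n)$ to absorb the polynomial. The delicate point — and the reason the precise constant $1/(2nK(n))$ is stated — is matching the bound on $|\ds{\chi_W}|$ (governed by how many times the walk can cross $\delta(S)$, at most $O(n)$ per edge-type but the walk has length $O(n)$, giving at worst $O(n^2)$, or with a tighter argument $O(n)$) against the slack $2/m$; I would present whichever bound on the walk length is cleanest and check that $\epsilon$ times it is below $2/K(n) \le 2/m$. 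Once that inequality is in hand, $\ds{y} \ge 1$ follows immediately.
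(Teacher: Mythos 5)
Your proof is correct and follows essentially the same route as the paper's: condition~\ref{cond:1} via $\dv{\chi_W}=0$ from the alternating signs at each visit to a vertex, condition~\ref{cond:3} via $x_e\geq 1/m$, and condition~\ref{cond:2} by playing a slack of $\Omega(1/m)$ in $\ds{x}$ (the paper uses $1/m$; your $2/m$ from the odd-parity jump is a slight improvement) against $\epsilon\abs{\ds{\chi_W}}$. One caution on the hedge in your condition~\ref{cond:2}: the crude bound $\abs{\ds{\chi_W}}\leq n^2$ would \emph{not} close the argument with $\epsilon<1/2nK(n)$, since $\epsilon n^2$ can exceed $2/K(n)$; you need the linear bound, which the paper obtains as $\abs{\ds{\chi_W}}\leq \norm{\chi_W}_1\leq 2n$ because the walk has length $O(n)$.
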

\begin{proof}
	Condition~\ref{cond:1} holds because $\dv{\chi_W}=0$. This identity holds, because the walk $W$ enters and exits each vertex $v$ the same number of times, and the entries and exits have alternating signs, cancelling each other.
	
	Condition~\ref{cond:2} holds, because when $\ds{x}>1$, then it is larger than $1$ by a margin; choosing $\epsilon$ small enough will not let us erase more than this margin. Formally we have
	\[ \dotp{\1_{\delta(S)}, x}=\frac{\dotp{\1_{\delta(S)}, \1_{M_1}}+\dots+\dotp{\1_{\delta(S)}, \1_{M_m}}}{m}, \]
	and note that $\ds{\1_{M_i}}$ is at least $1$ and must be greater than $1$ for some $i$. For that particular $i$ this value must be at least $2$ (in fact, at least $3$), which gives us
	\[ \ds{x}\geq 1+\frac{1}{m}. \]
	Now, note that $\norm{\chi_W}_1\leq 2n$ and $\norm{\1_{\delta(S)}}_\infty\leq 1$ which together imply that
	\[ \abs{\ds{\chi_W}}\leq 2n. \]
	Finally, piecing things together, we have
	\[ \ds{y}=\ds{x}+\epsilon\ds{\chi_W}\geq 1+\frac{1}{m}-2n\epsilon\geq 1+\frac{1}{m}-\frac{2n}{2nK(n)}\geq 1. \]
	
	Condition 3 holds, because again, $x_e>0$ implies that $x_e$ is positive by a margin. We have
	\[ x_e=\frac{(\1_{M_1})_e+\dots+(\1_{M_m})_e}{m}\geq \frac{1}{m}, \]
	which implies that $y_e\geq 1/m-2\epsilon\geq 0$.
\end{proof}

\Cref{lem:rotation} almost ensures that the point $y$ is inside the matching polytope $\PM(G)$ if the starting point $x$ was in $\PM(G)$. The only way that $y$ cannot be in $\PM(G)$ is if there is an odd set $S\subset V$ such that $\ds{x}=1$, i.e., a tight odd set, whose constraint gets violated by $y$. This leads us to the following important lemma, which enables us to extract a tight odd set {\em blocking} the rotation of the walk $W$.

\begin{lemma}
\label{lem:main}
	Suppose that $w$ is a weight vector, $x=\avg(\PM_w(G))$, $W$ is a walk that satisfies the conditions of \cref{lem:rotation}, and furthermore $\dotp{w, \chi_W}<0$. Then there must be an odd set $S\subset V$ such that $\ds{x}=1$ and $\ds{\chi_W}\neq 0$. Furthermore such an $S$ can be found by first obtaining $y$ as an $\epsilon$-rotation of $x$ by $W$, for a small but inverse exponentially large $\epsilon$, and then finding a minimum odd cut in $y$:
	\[ \argmin_{S\subset V,\card{S}\text{ is odd}} \ds{y}. \]	
\end{lemma}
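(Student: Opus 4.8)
The strategy is to show first that the rotated point $y$ must fall outside the matching polytope $\PM(G)$, and then to argue that the violated constraint can only be one of the odd-set constraints — indeed a \emph{tight} odd-set constraint of $x$ — and that such a set is recovered by the minimum odd cut computation. The existence half of the statement then follows by contraposition from \cref{lem:rotation}.

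First I would observe that $x = \avg(\PM(G,w))$ lies in the relative interior of the face $\PM(G,w)$, hence in particular $x \in \PM(G)$. Now suppose, for contradiction, that the $\epsilon$-rotation $y = x + \epsilon\chi_W$ also lies in $\PM(G)$ for the chosen inverse-exponentially-small $\epsilon < 1/(2nK(n))$. Since $\dotp{w,\chi_W} < 0$, we have $\dotp{w,y} = \dotp{w,x} + \epsilon\dotp{w,\chi_W} < \dotp{w,x}$, so $y$ would be a point of $\PM(G)$ with strictly smaller $w$-value than the minimum attained on $\PM(G,w)$ — contradicting that $\PM(G,w)$ is the minimum-weight face. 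Therefore $y \notin \PM(G)$: at least one of the defining constraints of the polytope \eqref{eq:polytope} is violated at $y$.

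Next I would pin down \emph{which} constraint is violated, using \cref{lem:rotation}: its Condition~\ref{cond:1} says all degree constraints $\dv{y}=1$ still hold, Condition~\ref{cond:3} says all nonnegativity constraints $y_e\ge 0$ still hold, and Condition~\ref{cond:2} says every odd set $S$ with $\ds{x}>1$ still satisfies $\ds{y}\ge 1$. Hence the only constraints that can be violated are odd-set constraints $\ds{y}\ge 1$ for odd sets $S$ with $\ds{x}=1$, i.e.\ tight odd sets of $x$. Fix such a violated $S$: then $\ds{x}=1$ but $\ds{y} = \ds{x} + \epsilon\ds{\chi_W} < 1$, which forces $\ds{\chi_W} < 0$, so in particular $\ds{\chi_W}\neq 0$. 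This establishes the existence claim. For the algorithmic claim, note that for \emph{every} odd set $S'$ we have $\ds[S']{y}\ge 1$ exactly when $S'$ is not violated, and $\ds[S']{y}<1$ exactly when it is violated; moreover $\ds[v]{y}=1$ for every singleton, so no singleton cut is a minimum odd cut of $y$. Thus $\argmin_{S\subset V,\,\card{S}\text{ odd}}\ds{y}$ is attained at a set $S$ with $\ds{y} < 1$, and by the argument above any such minimizer satisfies $\ds{x}=1$ and $\ds{\chi_W}\neq 0$; since a minimum odd cut in a planar graph can be found in \NC{} (as developed later via Padberg--Rao and the planar Gomory--Hu tree construction), this yields the desired \NC{} procedure.

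The main obstacle is the passage from ``$y \notin \PM(G)$'' to ``the violated constraint is an odd-set constraint that was already tight at $x$'': this is exactly where the quantitative slack estimates of \cref{lem:rotation} are essential, since an $\epsilon$-step could a priori erase a constraint that had only an exponentially small margin — the choice $\epsilon < 1/(2nK(n))$, together with the fact that the coordinates and slacks of $x=\avg(\PM(G,w))$ are rationals with denominator at most $m \le K(n)$, is precisely what prevents this. A secondary point requiring care is that $\epsilon$, though tiny, is only \emph{inverse-exponentially} small, so the rotation direction and magnitude must be represented and manipulated within \NC{} bounds; this is handled by the bit-complexity bounds already implicit in \cref{lem:avg} and the unary/bounded-weight assumptions there.
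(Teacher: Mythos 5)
Your proof is correct and follows essentially the same route as the paper's: deduce $y\notin\PM(G)$ from $\dotp{w,y}<\dotp{w,x}$ and the minimality of $x$ over $\PM(G)$, use \cref{lem:rotation} to rule out every constraint except an odd-set constraint that was tight at $x$, and observe that the minimum odd cut of $y$ must therefore be such a set. The additional remarks on singleton cuts and on the rational slacks of $\avg(\PM(G,w))$ are consistent with (and implicit in) the paper's argument.
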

\begin{proof}
	Since $\dotp{w, \chi_W}< 0$ we have $\dotp{w, y}<\dotp{w, x}$. We choose the magnitude of $\epsilon$ to be small enough that the conditions of \cref{lem:rotation} are satisfied. Now, since $x$ was a minimizer of the linear function $x\mapsto \dotp{w, x}$ over the polytope $\PM(G)$, it must be the case that $y\notin\PM(G)$.
	
	Therefore one of the constraints defining the matching polytope, \cref{eq:polytope}, must not be satisfied for $y$. But \cref{lem:rotation} ensures that almost all of these constraints are satisfied; the only possible constraint being violated would be an odd set $S$ such that $\ds{x}=1$ and $\ds{y}<1$. Take any such set $S$ where $\ds{x}=1$ and $\ds{y}<1$. We have
	\[ \ds{y}=\ds{x}+\epsilon\ds{\chi_W}, \]
	which means that $\ds{\chi_W}\neq 0$. In other words, $S$ satisfies the statement of the lemma.
	
	It only remains to show that if we take $S$ to be a minimum odd cut in $y$, then $S$ satisfies $\ds{x}=1$ and $\ds{y}<1$. We know that the only possible constraint being violated by $y$ is an odd set constraint, so for the minimum odd cut it must be true that $\ds{y}<1$. On the other hand if $\ds{x}>1$, then we would get a contradiction from condition~\ref{cond:2} of \cref{lem:rotation}, because that would imply $\ds{y}\geq 1$. So such a set must satisfy $\ds{x}=1$ and $\ds{y}<1$.
\end{proof}

We will say that an odd set $S$ such that $\ds{x}=1$ and $\ds{\chi_W}\neq 0$, is a set that {\em blocks} the walk $W$. By combining the following lemma with \cref{lem:main}, we get that we can find a tight odd set blocking each of our even walks.

\begin{lemma}
	\label{lem:minoddcut}
	There is an \NC~algorithm that given a weight planar graph $G$, outputs the minimum odd cut of $G$.
\end{lemma}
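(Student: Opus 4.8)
The goal is to find a minimum-weight odd cut in a weighted planar graph in \NC. The plan is to combine the result of Padberg and Rao, which says the minimum-weight odd cut is among the cuts defined by the edges of a Gomory--Hu cut-equivalent tree of $G$, with an \NC~algorithm for constructing a Gomory--Hu tree in planar graphs. Since a Gomory--Hu tree has only $\card{V}-1$ edges, once we have one in hand we can, in parallel over all $\card{V}-1$ tree edges, look at the two components obtained by deleting that edge, check which side has odd cardinality (exactly one will, since $\card{V}$ is even when $G$ has a perfect matching, and in general we only care about odd cuts so we simply skip tree edges that split $V$ into two even parts), and record the weight of that cut; then take the minimum over all tree edges, again in \NC~by a standard parallel-minimum computation. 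The correctness of this reduction is exactly the Padberg--Rao theorem: for every partition of $V$ into an odd set $S$ and its complement, some edge of the Gomory--Hu tree separates $S$ from $V\setminus S$ with a cut of weight at most the weight of $\delta(S)$; so the minimum over tree edges equals the global minimum-weight odd cut.

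So the real content, and the main obstacle, is constructing a Gomory--Hu tree of a weighted planar graph in \NC. I would state this as a standalone lemma (the paper flags ``\cref{sec:GH}'' for it and calls it a result of independent interest), and prove \cref{lem:minoddcut} by citing it. The approach for the Gomory--Hu construction itself would lean on two facts about planar graphs: first, that minimum $s$--$t$ cuts in planar graphs have special structure---in a planar graph, an $s$--$t$ min cut corresponds to a shortest cycle in (a suitable piece of) the planar dual separating the faces near $s$ from the faces near $t$---and second, that all-pairs min-cut information in planar graphs is highly compressible. One clean route is to use the fact that for planar graphs one can compute, in \NC, the min cut between many pairs simultaneously via dual shortest-path computations (shortest paths in planar graphs are in \NC), and then assemble these into a Gomory--Hu tree using the recursive/Gusfield-style contraction structure, parallelized. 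I expect this assembly---turning pairwise min cuts into a single tree while preserving the nesting/laminarity that Gomory--Hu requires, all within polylog depth---to be the technically delicate part, because the classical Gomory--Hu algorithm is inherently sequential (each step contracts according to the previous cut).

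Concretely, the steps I would carry out are: (1) recall the Padberg--Rao theorem and reduce ``min-weight odd cut'' to ``examine the $\card{V}-1$ cuts of a Gomory--Hu tree, in parallel''; (2) reduce ``Gomory--Hu tree of a weighted planar graph in \NC'' to ``\NC~min cuts / shortest paths in planar graphs'' plus a parallel assembly argument; (3) for the assembly, exploit planarity to avoid the sequential contraction bottleneck---e.g., show that for planar graphs the family of relevant min cuts can be made laminar and extracted globally (via dual cycle separators) rather than one at a time, so a divide-and-conquer on the vertex set produces the tree in $O(\log^{O(1)} n)$ depth; (4) verify that edge weights being polynomially bounded (as they are in our application, since $w=\1_{\delta(S)}$ or a $0/1$ vector) causes no loss, so all the arithmetic and shortest-path/min-cut subroutines stay in \NC. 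The main obstacle is step (3): making the Gomory--Hu construction parallel. If a direct parallelization proves too painful, an acceptable fallback is to observe that we do not need the full Gomory--Hu tree---we only need a polynomially long list of odd cuts guaranteed to contain a minimum one---and for that it suffices to compute, for a well-chosen polynomial-size set of vertex pairs (or dual face pairs), their min cuts via \NC~planar shortest paths, again invoking Padberg--Rao to argue completeness of the list; this sidesteps the tree-assembly difficulty entirely at the cost of a slightly less clean statement.
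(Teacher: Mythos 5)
Your reduction is exactly the paper's: by the Padberg--Rao theorem, a minimum odd cut is induced by one of the $\card{V}-1$ edges of a Gomory--Hu tree, so once the tree is in hand one examines all tree cuts in parallel and returns the lightest one with an odd side. The gap is that you correctly identify the \NC~construction of the Gomory--Hu tree as the crux and then do not supply the idea that makes it work. In the paper, planarity enters only through the \NC~oracle for (minimal) minimum $s$-$t$ cuts and through the fact that shrinking preserves planarity; the parallelization itself is a general graph-theoretic argument, not one about dual cycle separators. Concretely: within each current part $S_i$ one finds a \emph{central vertex} $r$, i.e., a vertex such that for every other $v\in S_i$ the \emph{minimal} minimum $v$-$r$ cut $S_v$ (computed in the graph with the subtrees hanging off $S_i$ shrunk) satisfies $\card{S_v\cap S_i}\leq \card{S_i}/2$. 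Such an $r$ exists --- take a centroid of the subtree of the eventual Gomory--Hu tree spanned by $S_i$ --- and can be found by testing every candidate in parallel. Minimal minimum cuts toward a common sink $r$ never cross, so the $S_v$ form a laminar family; splitting $S_i$ along the maximal sets of that family refines the partition in one round while halving the largest part, so $O(\log n)$ rounds suffice. Without the central-vertex/halving argument your ``divide-and-conquer on the vertex set'' has no depth bound, and the laminarity you hope to extract from planar duality is not where laminarity actually comes from.

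Your fallback is also unsound as stated. Padberg--Rao is a statement about the cuts of a Gomory--Hu tree, not about arbitrary per-pair minimum cuts: it guarantees that \emph{some tree edge} induces a cut that is simultaneously odd and of minimum weight among all odd cuts, and its proof uses the non-crossing structure of the tree cuts to run a parity argument. If you instead compute, for each pair $(s,t)$, \emph{some} minimum $s$-$t$ cut, the cut you happen to obtain may be even even though an odd minimum $s$-$t$ cut of the same weight exists elsewhere; there is no guarantee your polynomial list contains a minimum odd cut. So the tree-assembly difficulty cannot be sidestepped by ``all-pairs min cuts plus Padberg--Rao''; you genuinely need the tree (or at least a consistently uncrossed family of cuts equivalent to it).
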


We will prove \cref{lem:minoddcut} in \cref{sec:GH}.

\subsection{Uncrossing tight odd sets}
\label{sec:uncross-all}

Suppose we are given a list of tight odd sets $S_1,\dots,S_m$ that could cross each other in arbitrary ways.

Note that we can assume from the beginning that for each $i$, $\card{f^{-1}(S_i)}\leq \frac{1}{2}\card{V_0}$. If not, we simply replace $S_i$ by $V-S_i$. We can even further assume that $\card{f^{-1}(S_i)}<c_1\card{V_0}$; otherwise, we would return $f^{-1}(S_i)$ as a balanced viable set and end the procedure. Throughout the algorithm we maintain this property.

Our goal is to {\em uncross} the sets $S_1,\dots, S_m$, so that we can shrink all of them at the same time. We make progress from shrinking these sets by making sure that each of our even walks has an edge inside at least one of the shrunk sets, so that shrinking reduces the number of edges by at least the number of walks.

Unfortunately, having an edge inside an $S_i$ is not a property that is preserved by uncrossing. Instead, we require a stronger property that implies having an edge in one $S_i$, and show that this stronger property is preserved by uncrossing. Throughout this section we assume that $x$ is some fixed point in $\PM(G)$ with $x_e>0$ for all $e\in E$.

\begin{definition}
	\label{def:U}
	For a set $S\subseteq V$, define $\Lambda(S)\subseteq \R^E$ to be the linear subspace defined as the span of cut indicators of all tight odd sets contained in $S$:
	\[ \Lambda(S):=\span\{\1_{\delta(T)}\mid T\subseteq S, \card{T}\text{ is odd},\ds[T]{x}=1\}. \]
	We extend this definition to more than one set $S_1,\dots,S_m$ by letting
	\[ \Lambda(S_1,\dots,S_m):=\Lambda(S_1)+\dots+\Lambda(S_m). \]
\end{definition}
We also use the notation $\Lambda^\perp(S_1,\dots,S_m)$ to denote the subspace of $\R^E$ orthogonal to $\Lambda(S_1,\dots,S_m)$.

Next, we will show that $\chi_W$ not being orthogonal to $\Lambda(S_1,\dots,S_m)$ implies that $W$ has an edge in one $E(S_i)$.
\begin{lemma}
	Let $W$ be an even walk, and assume that $\chi_W\notin \Lambda^\perp(S_1,\dots,S_m)$. Then there is at least one edge $e\in W$ and at least one $i$ such that $e\in E(S_i)$.
\end{lemma}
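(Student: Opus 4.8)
The plan is to prove the contrapositive: assuming no edge of $W$ lies inside any $E(S_i)$, I will show that $\chi_W \in \Lambda^\perp(S_1,\dots,S_m)$, i.e., $\dotp{\1_{\delta(T)}, \chi_W} = 0$ for every tight odd set $T$ contained in some $S_i$. Since $\Lambda(S_1,\dots,S_m)$ is spanned by such cut indicators, orthogonality to all of them gives $\chi_W \in \Lambda^\perp(S_1,\dots,S_m)$, contradicting the hypothesis.

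So fix a tight odd set $T \subseteq S_i$ for some $i$. I want to compute $\dotp{\1_{\delta(T)}, \chi_W}$. Recall $\chi_W = \sum_j (-1)^j \1_{e_j}$ where $W = (e_1,\dots,e_k)$ is the edge-traversal of the even walk, so $\dotp{\1_{\delta(T)}, \chi_W} = \sum_{j : e_j \in \delta(T)} (-1)^j$. The key observation is that, by assumption, no edge of $W$ has both endpoints in $S_i$, hence certainly no edge of $W$ has both endpoints in $T$; so the only edges of $W$ that can touch $T$ at all are edges in $\delta(T)$ (an edge of $W$ with one endpoint in $T$ must have its other endpoint outside $T$). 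Now I track the walk $W$ as it traverses vertices: whenever the walk is at a vertex of $T$, the next step must leave $T$ (it cannot stay inside $T$ via an edge of $W$), and whenever it is outside $T$, entering $T$ requires a $\delta(T)$ edge. So the indices $j$ with $e_j \in \delta(T)$ come in alternating "enter/leave" pairs as we go around the closed walk, and consecutive such indices in the traversal have opposite parity — because between an "enter" at step $j$ and the next "leave" at step $j'$, the walk sits at a single vertex of $T$ (it cannot move within $T$), so $j' = j+1$. Hence the crossing steps of $W$ pair up into consecutive pairs $(j, j+1)$, each contributing $(-1)^j + (-1)^{j+1} = 0$ to the sum. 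Therefore $\dotp{\1_{\delta(T)}, \chi_W} = 0$.

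The one subtlety to handle carefully is the possibility that $W$ passes through a vertex of $T$ multiple times (the walk may revisit vertices, and path edges are traversed twice). But the pairing argument is local: each visit to $T$ is a maximal run of consecutive traversal-steps all sitting at vertices of $T$, and I claim each such run has length exactly one vertex — i.e., the walk enters $T$, occupies one vertex, and immediately leaves — because an edge of $W$ joining two vertices of $T$ would be an edge inside $E(S_i)$, excluded by hypothesis. Wait: $T \subseteq S_i$, so an edge with both endpoints in $T$ indeed has both endpoints in $S_i$; this is exactly what is forbidden. Thus every "visit" to $T$ is one enter-step immediately followed by one leave-step, consecutive in the traversal, contributing zero. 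Summing over all visits gives $\dotp{\1_{\delta(T)}, \chi_W} = 0$, and since $T \subseteq S_i$ was an arbitrary tight odd set contained in some $S_j$, we conclude $\chi_W \perp \Lambda(S_1,\dots,S_m)$.

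The main obstacle, such as it is, is being precise about the bookkeeping for a walk (not a simple cycle) that may visit vertices of $T$ repeatedly and that traverses its path-portion twice; the right framing is to argue per-visit (per maximal run of steps inside $T$) rather than globally, and to use the forbidden-edge hypothesis to force each run to have "length one vertex." Once that local structure is established, the alternating-sign cancellation is immediate. I would also note explicitly that this is exactly why the stronger property (non-orthogonality to $\Lambda$) is the right invariant to track through uncrossing, as the surrounding text motivates.
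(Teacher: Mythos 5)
Your proposal is correct and uses essentially the same argument as the paper: the paper also reduces to a single tight odd set $T\subseteq S_i$ and shows that if no edge of $W$ lies in $E(T)$, then each visit of the walk to $T$ consists of an entering step immediately followed by a leaving step, so the crossing terms pair into consecutive indices of opposite sign and $\ds[T]{\chi_W}=0$. The only difference is cosmetic (contrapositive framing versus contradiction), and your per-visit bookkeeping, including the possible wrap-around pairing of the last edge with the first, matches the paper's.
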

\begin{proof}
	It is easy to see that $\chi_W\notin \Lambda^\perp(S_1,\dots,S_m)$ implies that there is at least one $i$ such that $\chi_W\notin \Lambda^\perp(S_i)$. It follows from \cref{def:U} that there must be some tight odd set $T\subseteq S_i$ such that $\ds[T]{\chi_W}\neq 0$. We will show that $\ds[T]{\chi_W}\neq 0$ implies that there is some $e\in W$ such that $e\in E(T)\subseteq E(S_i)$.
	
	Suppose the contrary, that no edge $e\in W$ is in $E(T)$. Let $W=(e_1,\dots,e_k)$ and note that
	\[ \ds[T]{\chi_W}=\sum_{j=1}^k (-1)^j \ds[T]{\1_{e_j}}. \]
	Every time that $W$ enters a vertex $v\in T$, it must leave immediately from $T$, or else we would find an edge $e\in E(T)\cap W$. Therefore we can pair up the nonzero $\ds[T]{\1_{e_j}}$s into consecutive pairs, possibly pairing up the last edge with the first. Since these pairs appear in the sum with alternating signs, they cancel each other, giving us
	\[ \ds[T]{\chi_W}=0, \]
	which is a contradiction. Therefore $W$ must have at least one edge in $E(T)\subseteq E(S_i)$.
\end{proof}

Next we will define our basic {\em uncrossing} operations and show that they preserve this nonorthogonality property. Whenever we have two tight odd sets $S_1$ and $S_2$ we will show that we can uncross them, i.e., replace them by new tight odd sets without shrinking the subspace $\Lambda(S_1)+\Lambda(S_2)$. We will use the following uncrossing lemma, which is standard in the literature. We will prove it for the sake of completeness.

\begin{lemma}
	\label{lem:submodularity}
	If $S_1$ and $S_2$ are tight odd sets then either $S_1\cap S_2, S_1\cup S_2$ are tight odd sets and
	\[ \1_{\delta(S_1)}+\1_{\delta(S_2)}=\1_{\delta(S_1\cap S_2)}+\1_{\delta(S_1\cup S_2)}, \]
	or $S_1-S_2$ and $S_2-S_1$ are tight odd sets and
	\[ \1_{\delta(S_1)}+\1_{\delta(S_2)}=\1_{\delta(S_1-S_2)}+\1_{\delta(S_2-S_1)}. \]
\end{lemma}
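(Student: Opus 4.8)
The plan is to separate three essentially independent ingredients: a parity computation that decides which of the two cases we land in, a purely combinatorial ``edge counting'' identity for cut indicator vectors, and an argument that tightness of $S_1,S_2$ together with the strict positivity $x_e>0$ promotes that identity from ``true after pairing with $x$'' to ``true as an identity of $0/1$ vectors.''

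\textbf{Step 1 (parity).} I would first note that $\card{S_1\cap S_2}+\card{S_1\cup S_2}=\card{S_1}+\card{S_2}$ and $\card{S_1-S_2}+\card{S_2-S_1}=\card{S_1}+\card{S_2}-2\card{S_1\cap S_2}$ are both even, so in each of these two pairs the sets have the same parity. Since $\card{S_1}=\card{S_1\cap S_2}+\card{S_1-S_2}$ is odd, exactly one of $\card{S_1\cap S_2},\card{S_1-S_2}$ is odd; combined with $\card{S_1\cup S_2}=\card{S_2}+\card{S_1-S_2}$ this shows that either $S_1\cap S_2$ and $S_1\cup S_2$ are both odd, or $S_1-S_2$ and $S_2-S_1$ are both odd. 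In either case the two relevant sets have odd cardinality, hence are nonempty, and are proper subsets of $V$ (they have odd size while $\card V$ is even, or are contained in some $S_i\neq V$), so the odd-set inequalities of \cref{eq:polytope} supply $\ds[\,\cdot\,]{x}\geq 1$ for each of them.

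\textbf{Step 2 (edge counting).} Partitioning $V$ into the four classes determined by membership in $S_1$ and $S_2$, I would verify, by tallying each edge's contribution to both sides according to which two classes it joins, the identities
\[ \1_{\delta(S_1)}+\1_{\delta(S_2)}=\1_{\delta(S_1\cap S_2)}+\1_{\delta(S_1\cup S_2)}+2\,\1_{A} \]
and
\[ \1_{\delta(S_1)}+\1_{\delta(S_2)}=\1_{\delta(S_1-S_2)}+\1_{\delta(S_2-S_1)}+2\,\1_{B}, \]
where $A$ is the set of edges with one endpoint in $S_1-S_2$ and the other in $S_2-S_1$, and $B$ is the set of edges with one endpoint in $S_1\cap S_2$ and the other in $V-(S_1\cup S_2)$. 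This is a finite, routine case check (edges inside a single class contribute $0$ to both sides).

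\textbf{Step 3 (tightness forces exact equality).} Pair the identity picked out in Step 1 with $x$ via $\dotp{\cdot,x}$. Because $S_1,S_2$ are tight, the left side equals $2$. In the first case, $\ds[S_1\cap S_2]{x}\geq 1$ and $\ds[S_1\cup S_2]{x}\geq 1$ by Step 1, while $\dotp{\1_A,x}\geq 0$ since each $x_e\geq 0$; as these three quantities sum to exactly $2$, all are tight. Hence $S_1\cap S_2$ and $S_1\cup S_2$ are tight odd sets and $\dotp{\1_A,x}=0$; since $x_e>0$ for every edge, this forces $A=\emptyset$, and the first identity collapses to $\1_{\delta(S_1)}+\1_{\delta(S_2)}=\1_{\delta(S_1\cap S_2)}+\1_{\delta(S_1\cup S_2)}$. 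The second case is identical with $A,S_1\cap S_2,S_1\cup S_2$ replaced by $B,S_1-S_2,S_2-S_1$.

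\textbf{Expected obstacle.} There is no genuine obstacle — this is the classical uncrossing lemma — but two spots want care: making the class-based case analysis in Step 2 exhaustive, and being explicit that it is the \emph{strict} positivity of $x$ on every edge, not merely $x\geq 0$, that turns ``$A$ has $x$-measure zero'' into ``$A$ is empty,'' which is precisely what is needed for the cut-indicator identity to hold on the nose rather than only after taking inner product with $x$.
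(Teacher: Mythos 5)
Your proof is correct and follows essentially the same route as the paper: the degree-class counting identity with the $2\1_{\delta(S_1-S_2,\,S_2-S_1)}$ correction term, pairing with $x$, and using tightness plus strict positivity to kill the correction term. The only cosmetic difference is that the paper handles the even-intersection case by replacing $S_2$ with $V-S_2$ and reusing the first identity, whereas you verify a second counting identity directly; both are fine.
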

\begin{proof}
	The following identity holds for any $S_1$ and $S_2$ and can be easily checked by considering all possible configurations of the endpoints of an arbitrary edge:
	\[ \1_{\delta(S_1)}+\1_{\delta(S_2)}=\1_{\delta(S_1\cap S_2)}+\1_{\delta(S_1\cup S_2)}+2\1_{\delta(S_1-S_2, S_2-S_1)}. \]
	We have two cases: Either $\card{S_1\cap S_2}$ is odd, or it is even.
	
	Case 1: Assume that $\card{S_1\cap S_2}$ is odd. It follows that $\card{S_1\cup S_2}$ is also odd. Then by taking the dot product with $x$ we get
	\[ 1+1=\ds[S_1]{x}+\ds[S_2]{x}\geq \ds[S_1\cap S_2]{x}+\ds[S_1\cup S_2]{x}\geq 1+1, \]
	where the last inequality follows from the fact that $x\in \PM(G)$ and that $S_1\cap S_2$ and $S_1\cup S_2$ are odd sets. Since this inequality is tight it must be the case that $\ds[S_1\cap S_2]{x}=\ds[S_1\cup S_2]{x}=1$, which proves that $S_1\cap S_2$ and $S_1\cup S_2$ are tight odd sets. It further follows that
	\[ \ds[S_1-S_2,S_2-S_1]{x}=0, \]
	which implies that $\1_{\delta(S_1-S_2,S_2-S_1)}=0$, i.e., $\delta(S_1-S_2, S_2-S_1)=0$; this is because $x$ has strictly positive entries. Now we have the desired identity
	\[ \1_{\delta(S_1)}+\1_{\delta(S_2)}=\1_{\delta(S_1\cap S_2)}+\1_{\delta(S_1\cup S_2)}. \]
	
	Case 2: Now assume that $\card{S_1\cap S_2}$ is even. We can replace $S_2$ by $V-S_2$, since $V-S_2$ is also a tight odd set. But now $S_1\cap (V-S_2)=S_1-S_2$ which is an odd set. So it follows from the proof of case 1 that $S_1\cap (V-S_2)$ and $S_1\cup (V-S_2)$ are both tight odd sets and we have
	\[ \1_{\delta(S_1)}+\1_{\delta(S_2)}=\1_{\delta(S_1\cap(V-S_2))}+\1_{\delta(S_1\cup(V-S_2))}. \]
	Now observe that $S_1\cap (V-S_2)=S_1-S_2$ and $S_1\cup (V-S_2)=V-(S_2-S_1)$. Since taking complements does not change either $\delta(\cdot)$ or being a tight odd set, the claim follows.
\end{proof}

Now we use \cref{lem:submodularity} to prove the claim that tight odd sets can be uncrossed without shrinking $\Lambda(S_1)+\Lambda(S_2)$.

\begin{lemma}
	\label{lem:uncross}
	Suppose that $S_1,S_2$ are tight odd sets, i.e., $\card{S_1},\card{S_2}$ are odd and $\ds[S_1]{x}=\ds[S_2]{x}=1$. Then exactly one of the following two conditions holds:
	\begin{enumerate}
		\item \label{cond:union} $S_1\cup S_2$ is a tight odd set and
			\[ \Lambda(S_1)+\Lambda(S_2)\subseteq \Lambda(S_1\cup S_2), \]
		\item \label{cond:diff} $S_1$ and $S_2-S_1$ are both tight odd sets and
			\[ \Lambda(S_1)+\Lambda(S_2)\subseteq \Lambda(S_1)+\Lambda(S_2-S_1). \]
	\end{enumerate}
\end{lemma}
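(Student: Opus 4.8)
The plan is to settle the dichotomy first and then handle the two alternatives separately, with almost all of the work concentrated in the second one. Note that $\card{S_1 \cup S_2} + \card{S_1 \cap S_2} = \card{S_1} + \card{S_2}$ is even, so $\card{S_1 \cup S_2}$ and $\card{S_1 \cap S_2}$ have the same parity. \Cref{lem:submodularity} --- more precisely its proof, which branches exactly on this parity --- tells us that when $\card{S_1\cap S_2}$ is odd the sets $S_1 \cap S_2$ and $S_1 \cup S_2$ are tight odd sets, while when it is even the sets $S_1 - S_2$ and $S_2 - S_1$ are tight odd sets. This already gives mutual exclusivity: alternative~\ref{cond:union} forces $\card{S_1 \cup S_2}$ odd, hence $\card{S_2 - S_1}$ even, so $S_2 - S_1$ is not odd and alternative~\ref{cond:diff} fails; symmetrically in the other direction. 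It then remains to show the parity-appropriate alternative holds. When $\card{S_1 \cap S_2}$ is odd this is immediate: $S_1 \cup S_2$ is a tight odd set, and $\Lambda(\cdot)$ is monotone under inclusion (any tight odd set contained in $S_i$ is contained in $S_1 \cup S_2$), so $\Lambda(S_1), \Lambda(S_2) \subseteq \Lambda(S_1 \cup S_2)$ and hence $\Lambda(S_1) + \Lambda(S_2) \subseteq \Lambda(S_1 \cup S_2)$, establishing alternative~\ref{cond:union}.

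The even case is the substance. Now $S_1$ and $S_2 - S_1$ are tight odd sets, and since $\Lambda(S_1) \subseteq \Lambda(S_1) + \Lambda(S_2 - S_1)$ trivially, it suffices to prove $\Lambda(S_2) \subseteq \Lambda(S_1) + \Lambda(S_2 - S_1)$. I would argue one spanning vector at a time: fix a tight odd set $T \subseteq S_2$ and show $\1_{\delta(T)} \in \Lambda(S_1) + \Lambda(S_2 - S_1)$. Apply \cref{lem:submodularity} to the pair $T, S_1$. If $\card{T \cap S_1}$ is even, the lemma makes $T - S_1$ and $S_1 - T$ tight odd with $\1_{\delta(T)} = \1_{\delta(T - S_1)} + \1_{\delta(S_1 - T)} - \1_{\delta(S_1)}$; since $T \subseteq S_2$ gives $T - S_1 \subseteq S_2 - S_1$ and trivially $S_1 - T \subseteq S_1$, all three terms on the right lie in $\Lambda(S_1) + \Lambda(S_2 - S_1)$, and we are done.

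The remaining sub-case, $\card{T \cap S_1}$ odd, is the step I expect to be the main obstacle. Here \cref{lem:submodularity} gives $\1_{\delta(T)} = \1_{\delta(T \cap S_1)} + \1_{\delta(U)} - \1_{\delta(S_1)}$ with $U := T \cup S_1$; the terms $\1_{\delta(T \cap S_1)}$ and $\1_{\delta(S_1)}$ are fine because $T \cap S_1 \subseteq S_1$, but $U$ contains all of $S_1$ and so lies in neither $S_1$ nor $S_2 - S_1$. The fix is a second uncrossing, of $U$ against $S_2 - S_1$: writing $R := U \setminus S_1 = T \setminus S_1 \subseteq S_2 - S_1$, one checks $\card{R}$ is even (from $\card{U} = \card{S_1} + \card{R}$ odd and $\card{S_1}$ odd), so $\card{U \cap (S_2 - S_1)} = \card{R}$ is even and \cref{lem:submodularity} makes $U - (S_2 - S_1)$ and $(S_2 - S_1) - U$ tight odd with $\1_{\delta(U)} = \1_{\delta(U - (S_2 - S_1))} + \1_{\delta((S_2 - S_1) - U)} - \1_{\delta(S_2 - S_1)}$. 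Since $U - (S_2 - S_1) = S_1$ and $(S_2 - S_1) - U = (S_2 - S_1) \setminus R \subseteq S_2 - S_1$, the right-hand side --- hence $\1_{\delta(U)}$, hence $\1_{\delta(T)}$ --- lies in $\Lambda(S_1) + \Lambda(S_2 - S_1)$. Thus every spanning vector is dispatched with at most two applications of \cref{lem:submodularity}, no induction is needed, and this completes the even case and the lemma.
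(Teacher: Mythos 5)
Your proof is correct and follows essentially the same route as the paper's: the same parity dichotomy, the same reduction to showing $\1_{\delta(T)}\in\Lambda(S_1)+\Lambda(S_2-S_1)$ for each tight odd $T\subseteq S_2$, and the same sub-case split on the parity of $\card{T\cap S_1}$. The only difference is in the last sub-case, where the paper applies \cref{lem:submodularity} once, directly to the pair $T$ and $S_2-S_1$ (using that $\card{T\cap(S_2-S_1)}=\card{T-S_1}$ is even), arriving in one step at the same identity $\1_{\delta(T)}=\1_{\delta(T\cap S_1)}+\1_{\delta((S_2-S_1)-T)}-\1_{\delta(S_2-S_1)}$ that your two-step uncrossing produces.
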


\begin{proof}

	Look at the parity of $\card{S_1\cup S_2}$. If $\card{S_1\cup S_2}$ is odd, then we claim that case~\ref{cond:union} happens. Otherwise, we will show that case~\ref{cond:diff} happens.
	
	Case~\ref{cond:union}: $\card{S_1\cup S_2}$ is odd. In this case $\card{S_1\cap S_2}$ is also odd and it follows by \cref{lem:submodularity} that $S_1\cup S_2$ is a tight odd set. It is trivial from \cref{def:U} that $\Lambda(S_1),\Lambda(S_2)\subseteq \Lambda(S_1\cup S_2)$ which immediately yields
	\[ \Lambda(S_1)+\Lambda(S_2)\subseteq \Lambda(S_1\cup S_2). \]
	
	Case~\ref{cond:diff}: $\card{S_1\cup S_2}$ is even. In this case $\card{S_1-S_2}$ and $\card{S_2-S_1}$ are both odd. Again, from \cref{lem:submodularity} it follows that $S_2-S_1$ is a tight odd set. It remains to prove that $\Lambda(S_1)+\Lambda(S_2)\subseteq \Lambda(S_1)+\Lambda(S_2-S_1)$. It is enough to prove that $\Lambda(S_2)\subseteq \Lambda(S_1)+\Lambda(S_2-S_1)$.
	
	It is enough to show that for any tight odd set $T\subseteq S_2$, we have the inclusion $\1_{\delta(T)}\in \Lambda(S_1)+\Lambda(S_2-S_1)$. We again have two cases: Either $\card{T\cap S_1}$ is odd or even.
	
	If $\card{T\cap S_1}$ is even, it follows from \cref{lem:submodularity} that $T-S_1$  and $S_1-T$ are tight odd sets and
	\[ \1_{\delta(T)}=\1_{\delta(T-S_1)}+\1_{\delta(S_1-T)}-\1_{\delta(S_1)}. \]
	We have $\1_{\delta(S_1-T)},\1_{\delta(S_1)}\in \Lambda(S_1)$ and $\1_{\delta(T-S_1)}\in \Lambda(S_2-S_1)$. So $\1_{\delta(T)}\in \Lambda(S_1)+\Lambda(S_2-S_1)$ as desired.
	
	The only case that remains is when $\card{T\cap S_1}$ is odd. In this case we apply \cref{lem:submodularity} to the sets $T$ and $S_2-S_1$, both of which are tight odd sets. Note that $T\cap (S_2-S_1)=T-S_1$ which has even size by assumption. Therefore by \cref{lem:submodularity}, $(S_2-S_1)-T$ and $T-(S_2-S_1)=S_1\cap T$ are also tight odd sets and
	\[ \1_{\delta(T)}=\1_{\delta(S_2-S_1-T)}+\1_{\delta(S_1\cap T)}-\1_{\delta(S_2-S_1)}. \]
	We have $\1_{\delta(S_1\cap T)}\in \Lambda(S_1)$ and $\1_{\delta(S_2-S_1-T)},\1_{\delta(S_2-S_1)}\in \Lambda(S_2-S_1)$ which proves that $\1_{\delta(T)}\in \Lambda(S_1)+\Lambda(S_2-S_1)$ as desired.
	\end{proof}

Given tight odd sets $S_1,\dots,S_m$, repeated applications of \cref{lem:uncross} allow us to uncross them, i.e., replace them by pairwise disjoint tight odd sets $S_1',\dots,S_{m'}'$ such that $\Lambda(S_1,\dots,S_m)\subseteq \Lambda(S_1',\dots,S_{m'}')$. However, naively applying \cref{lem:uncross} would result in a sequential algorithm which is not in \NC. We will next show how we can do the uncrossing in \NC.

We will use a divide-and-conquer approach to uncross a given list of tight odd sets $S_1,\dots,S_m$. The high-level description of our procedure, \Uncross, is given in \cref{alg:uncross}. We roughly divide the given sets into two parts, and recursively uncross each part. Then we call the procedure \MergeUncross in order to merge the resulting sets.

\begin{algorithm}[H]
	\caption{Divide-and-conquer algorithm for uncrossing tight odd sets}
	\Uncross{$S_1,\dots,S_m$}
	
	\eIf{m=1}{
		\Return $S_1$
	}{
		\InParallel{
			$R_1,\dots, R_{p}\leftarrow$ \Uncross{$S_1,\dots,S_{\lceil m/2\rceil}$}
			
			$C_1,\dots, C_{q}\leftarrow$ \Uncross{$S_{\lceil m/2\rceil+1},\dots,S_m$}
		}
		
		\Return{\MergeUncross{$R_1,\dots,R_p,C_1,\dots,C_q$}}
	}
	\bigskip
	\label{alg:uncross}
\end{algorithm}

Next, we will describe the merging procedure \MergeUncross. The procedure \MergeUncross, similarly to \Uncross, accepts a list of tight odd sets and returns a list of pairwise disjoint tight odd sets whose $\Lambda$ is not smaller. With some abuse of notation, we still name the inputs to \MergeUncross as $S_1,\dots,S_m$. The difference between \MergeUncross and \Uncross is that the input sets to \MergeUncross satisfy certain properties highlighted below.
\begin{lemma}
	\label{lem:3way}
	Suppose that $\{S_1,\dots,S_m\}=\{R_1,\dots,R_p,C_1,\dots,C_q\}$, where $m=p+q$ and $R_1,\dots,R_p$ are pairwise disjoint tight odd sets and $C_1,\dots,C_q$ are also pairwise disjoint tight odd sets. Then $S_1,\dots,S_m$ have no $3$-wise intersections. Furthermore, the intersection graph of $S_1,\dots,S_m$, where two $S_i$'s are connected if they have a nonempty intersection, is bipartite.
\end{lemma}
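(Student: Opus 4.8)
The claim has two parts: (i) no point of $V$ lies in three of the sets $S_1,\dots,S_m$ at once, and (ii) the intersection graph is bipartite. The key structural input is that the list $\{S_1,\dots,S_m\}$ splits into two subfamilies $\{R_1,\dots,R_p\}$ and $\{C_1,\dots,C_q\}$, each of which is \emph{pairwise disjoint}. The plan is to exploit this bipartition directly.

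For part (i), suppose for contradiction that some vertex $v$ belongs to three of the sets, say $S_a, S_b, S_c$. Since $R_1,\dots,R_p$ are pairwise disjoint, at most one of $S_a, S_b, S_c$ can be an $R$-set; likewise at most one can be a $C$-set. But then among the three we have at most one $R$-set and at most one $C$-set, accounting for at most two sets — a contradiction, since every $S_i$ is either an $R$-set or a $C$-set. Hence no vertex lies in three of the $S_i$'s, which also shows there are no $3$-wise intersections $S_a\cap S_b\cap S_c\neq\emptyset$.

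For part (ii), I would color the vertices of the intersection graph by which subfamily each set belongs to: put $S_i$ in class $\mathcal{R}$ if it is one of the $R_j$'s, and in class $\mathcal{C}$ if it is one of the $C_j$'s (if a set appears in both lists, or if the two lists are not literally disjoint as sets — here they are given as a disjoint union $m = p+q$, so each $S_i$ gets exactly one label). Now any edge of the intersection graph connects two sets with nonempty intersection; two $R$-sets are disjoint so cannot be adjacent, and similarly two $C$-sets cannot be adjacent. Therefore every edge goes between class $\mathcal{R}$ and class $\mathcal{C}$, so this is a proper $2$-coloring and the intersection graph is bipartite.

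I do not expect any real obstacle here; the lemma is essentially a bookkeeping observation, and both parts follow immediately from the disjointness within each of the two subfamilies. The only point requiring the tiniest bit of care is making sure the labeling in part (ii) is well-defined, i.e.\ that the decomposition $\{S_1,\dots,S_m\} = \{R_1,\dots,R_p\}\sqcup\{C_1,\dots,C_q\}$ is genuinely a disjoint union as asserted by the hypothesis $m=p+q$; granting that, the argument is a one-line coloring argument and part (i) is subsumed by the same pigeonhole reasoning.
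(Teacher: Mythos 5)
Your proposal is correct and follows essentially the same argument as the paper: the pigeonhole observation that among any three sets two must lie in the same pairwise-disjoint subfamily rules out $3$-wise intersections, and coloring each $S_i$ by its subfamily gives the bipartition of the intersection graph.
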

\begin{proof}
	If we select any three sets $S_i,S_j,S_k$, then either two of them are from $R_1,\dots,R_p$ or two of them are from $C_1,\dots,C_q$. In either case, those two sets would not have any intersections.
	
	It is also easy to see that the intersection graph is bipartite, since $R_1,\dots,R_p$ naturally form one part and $C_1,\dots,C_q$ the other; by assumption, no two sets from the same part have any intersection.
\end{proof}

Having no $3$-way intersections means that we can compute the parity of any union of $S_1,\dots,S_m$ from their pairwise intersections. This is more handily captured by the notion of an intersection parity graph.

\begin{definition}
	For tight odd sets $S_1,\dots,S_m$ satisfying the conditions of \cref{lem:3way}, define the intersection parity graph $H=(V_H,E_H)$, as follows: Let $V_H$, the nodes of $H$, be $S_1,\dots,S_m$ and for $i\neq j$ let there be an edge between $S_i$ and $S_j$ if and only if $\card{S_i\cap S_j}$ is odd.
\end{definition}

An immediate corollary of \cref{lem:3way} is that $H$ is bipartite. Another corollary is that the parity of $\card{\cup_i S_i}$ is the same as the parity of $\card{V_H}+\card{E_H}$ which we simply denote by $\card{H}$; this is because the inclusion-exclusion formula stops at pairwise intersections for our sets. We use the notation $H(S_{i_1},\dots,S_{i_k})$ to denote the induced subgraph on nodes $S_{i_1},\dots,S_{i_k}$. With this notation we have
\[ \card{S_{i_1}\cup \dots S_{i_k}}\eqtwo\card{H(S_{i_1},\dots,S_{i_k})},\]
where $\eqtwo$ represents having the same parity.

By \cref{lem:uncross}, if $S_1,S_2$ have an edge between them in $H$, then the union $S_1\cup S_2$ will also be a tight odd set. If there is a third set $S_3$ connected to $S_2$, we can again include $S_3$ in this union, i.e., $S_1\cup S_2\cup S_3$ will be a tight odd set.

Can we repeatedly apply this procedure and otain $S_1\cup \dots \cup S_m$ as a tight odd set? There seem to be two barriers to this. If the graph $H$ is not connected, we can never take the union of two sets from different connected components. Another natural barrier is that $\card{S_1\cup \dots \cup S_m}$ could possibly be even; so it will never emerge out of this process, because \cref{lem:uncross} only produces {\em odd} tight sets. For simplicity of notation we use $\cup H$ to denote $S_1\cup\dots\cup S_m$.

Surprisingly, the two mentioned barrier are really the only barriers, as we will show next.
\begin{lemma}
	\label{lem:odd}
	Assume that $H=H(S_1,\dots,S_m)$ is connected and that $\card{H}\eqtwo 1$. Then $\cup H=S_1\cup \dots \cup S_m$ is a tight odd set, and $\Lambda(S_1,\dots,S_m)\subseteq \Lambda(\cup H)$.
\end{lemma}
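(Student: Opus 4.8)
The two assertions have quite different flavors. The inclusion $\Lambda(S_1,\dots,S_m)\subseteq\Lambda(\cup H)$ is essentially free: $\Lambda$ is monotone under inclusion of sets — any tight odd $T\subseteq S_i$ is also a tight odd subset of $\cup H$ — so $\Lambda(S_i)\subseteq\Lambda(\cup H)$ for every $i$, and hence $\Lambda(S_1,\dots,S_m)=\sum_i\Lambda(S_i)\subseteq\Lambda(\cup H)$. So the content of the lemma is that $\cup H$ is a tight odd set. By the discussion preceding the lemma, $\card{\cup H}\eqtwo\card{H}\eqtwo 1$, so $\cup H$ is odd, and then $\ds[\cup H]{x}\geq 1$ because $x\in\PM(G)$. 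It therefore remains only to prove $\ds[\cup H]{x}\leq 1$.

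My plan for this is an induction on $m$, collapsing the family one piece at a time using the first (``union'') alternative of \cref{lem:uncross}. The base case $m=1$ is the hypothesis. For $m\geq 2$, fix a spanning tree of the connected graph $H$ and let $S_m$ be a set corresponding to a leaf of that tree; deleting that leaf yields a spanning tree of $H-S_m$, so $H-S_m$ is connected. Write $d=\deg_H(S_m)$. If $d$ is odd, then $\card{H-S_m}\eqtwo\card{H}-1-d\eqtwo 1$, so by the inductive hypothesis $U':=S_1\cup\dots\cup S_{m-1}$ is a tight odd set. Since there are no $3$-wise intersections (\cref{lem:3way}), $S_m\cap U'=\bigsqcup_{k<m}(S_m\cap S_k)$, whence $\card{S_m\cap U'}\eqtwo\#\{k<m:\card{S_k\cap S_m}\text{ odd}\}=d\eqtwo 1$; thus $\card{S_m\cup U'}$ is odd, and applying \cref{lem:uncross} to the tight odd sets $S_m$ and $U'$ lands in its first alternative, which gives that $\cup H=S_m\cup U'$ is a tight odd set (and carries the $\Lambda$-inclusion along). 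In particular this already settles the lemma whenever $H$ is a tree, since then every leaf has $H$-degree $1$.

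The remaining case — no leaf of the spanning tree has odd $H$-degree, equivalently $H$ has a positive (necessarily even) cyclomatic number; e.g.\ $H$ could be two $4$-cycles sharing a vertex, or $K_5$ — is where the real work lies and is the main obstacle. Here $\card{H-S_m}$ is even, so the recursion above does not apply, and, worse, collapsing a single edge of $H$ can render the intersection-parity graph disconnected, so one cannot simply keep merging adjacent sets. A useful structural fact is that $H$ is bipartite (\cref{lem:3way}), so that no vertex is adjacent to both endpoints of an edge, and thus contracting an edge cannot create the parallel edges responsible for disconnection at the \emph{first} step; but bipartiteness is destroyed by contraction, so this by itself does not close the induction. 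The route I would take is to recurse not on a single-vertex deletion but on a carefully chosen \emph{connected} subfamily $\{S_j\}_{j\in U}$ with $U\subsetneq V_H$ and $\card{H[U]}\eqtwo 1$: the inductive hypothesis collapses it to one tight odd set $S^\ast=\bigcup_{j\in U}S_j$, and one re-runs the argument on $\{S^\ast\}\cup\{S_j\}_{j\notin U}$, whose intersection-parity graph is the parity-contraction of $U$ in $H$. The crux is to show that such a $U$ can always be chosen so that the parities stay aligned (this is exactly where the even cyclomatic number of $H$ is used, ensuring the first alternative of \cref{lem:uncross} remains available) and that the process terminates with a single set; carrying out this bookkeeping, in particular for $2$-connected $H$, is the technical heart of the proof. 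The $\Lambda$-inclusion is maintained automatically throughout, since every use of \cref{lem:uncross} and every recursive call only enlarges the spanned subspace.
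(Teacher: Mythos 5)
Your proposal correctly dispatches the easy parts (the $\Lambda$-inclusion, the oddness of $\cup H$ via $\card{\cup H}\eqtwo\card{H}$, and the reduction to $\ds[\cup H]{x}\le 1$), and your first inductive step --- peeling off a spanning-tree leaf $S_m$ with $\deg_H(S_m)$ odd, so that $\card{H-S_m}\eqtwo 1$ and the union case of \cref{lem:uncross} applies to $S_m$ and $\cup(H-S_m)$ --- is sound and matches the first reduction in the paper's proof. But the case you yourself identify as ``the technical heart'' is left as a plan, not a proof, and the plan as stated does not go through: recursing on the parity-contraction $\{S^\ast\}\cup\{S_j\}_{j\notin U}$ requires that the new family again satisfy the hypotheses under which the lemma is stated (the structure guaranteed by \cref{lem:3way}, in particular bipartiteness of the intersection graph), and contraction destroys exactly that, as you note. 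So the proposal has a genuine gap precisely where the difficulty lies.

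The idea you are missing is that one should not contract and recurse on a modified family; instead, one should find \emph{two} induced subgraphs $H_1,H_2$ of $H$, each connected with $\card{H_i}\eqtwo 1$ and each on strictly fewer vertices, whose vertex sets together cover $V_H$. Induction (on $\card{V_H}$, with the family unchanged) makes $\cup H_1$ and $\cup H_2$ tight odd sets, their union is $\cup H$ which is odd, and one final application of case~\ref{cond:union} of \cref{lem:uncross} finishes. The paper produces such a pair by a case analysis on a DFS tree of $H$: if some non-cut vertex $S$ has $\deg_H(S)\eqtwo 1$ one peels it off as you do; otherwise every leaf has even degree and hence a back edge, and one checks that deleting a suitable pair of vertices (two non-adjacent leaves, or a leaf together with a carefully chosen ancestor, or two non-adjacent neighbors of a chord vertex in the Hamiltonian-cycle case) leaves a connected graph of odd parity, while the deleted pair plus a common neighbor forms a path of length $2$ and hence also an odd connected piece. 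Bipartiteness of $H$ is used only to rule out triangles in these configurations, and the absence of cross edges in a DFS tree supplies the needed non-adjacencies. Without some such explicit covering argument for the even-cyclomatic-number case, your induction does not close.
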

\begin{proof}
	We just need to show that $\cup H$ is a tight odd set. The fact that $\Lambda(S_1,\dots,S_m)\subseteq \Lambda(\cup H)$ is trivial from \cref{def:U}.
	
	We will use induction on $\card{V_H}$ to prove this fact. It is trivial to check this for $\card{V_H}\leq 2$. Even if $\card{V_H}=3$, the only graph that is connected and bipartite on $3$ nodes would be the path of length $2$ and we have already described that in this case we can take the union by two applications of case~\ref{cond:union} from \cref{lem:uncross}.
	
	Now consider a depth-first-search (DFS) tree started from an arbitrary node of $H$. If $S$ is any leaf of this tree with $\deg_H(S)\eqtwo 1$, then we can proceed as follows: The graph $H-\{S\}$ will have one fewer node and odd many fewer edges. Therefore $\card{H-\{S\}}\eqtwo 1$, and obviously $H-\{S\}$ is connected, since $S$ was a leaf. By induction, $\cup(H-\{S\})$ is a tight odd set. But $S$ is also an tight set, and by assumption the union of the two, $\cup(H-\{S\})\cup S=\cup H$, is also odd. So by \cref{lem:uncross} we get that $\cup H$ is a tight odd set. So from now on, assume that for any leaf node $S$, $\deg_H(S)\eqtwo 0$. More generally, if $S$ is any node whose removal does not disconnect the graph, we can assume that $\deg_H(S)\eqtwo 0$, or else we can proceed as before. Note that this implies that any leaf in the tree has at least one back edge, i.e., an edge going to an ancestor other than its parent. This is true, because any leaf must have at least one edge other than the one going to its parent, and in a DFS tree there are no cross edges, which means that this edge must be a back edge.
	
	Note that in a DFS tree, the leaf nodes are never connected to each other. This implies, by simple parity counting, that if $S_1,S_2$ are two leaves then $\card{H-\{S_1,S_2\}}\eqtwo 1$. Note that $H-\{S_1,S_2\}$ is also connected, so by induction $\cup(H-\{S_1,S_2\})$ is a tight odd set.
	
	Now, if the DFS tree has at least four leaves $S_1,S_2,S_3,S_4$, we can proceed as follows: Consider the graphs $H-\{S_1,S_2\}$ and $H-\{S_3, S_4\}$. They both satisfy the assumptions of the induction and therefore $\cup(H-\{S_1,S_2\})$ and $\cup(H-\{S_3,S_4\})$ are both tight odd sets. Their union is again $\cup H$ which has an odd parity. So again by \cref{lem:uncross} we get that $\cup H$ is a tight odd set. From now on we assume that there are at most $3$ leaves in the tree.
	
	If there are any two leaves $S_1,S_2$ that share a parent $P$, we can proceed as follows: The graph $H-\{S_1,S_2\}$ again satisfies the assumptions of induction. We also have that $S_1\cup P\cup S_2$ is a tight odd set; this follows by applying the base case to the subgraph $H(S_1,S_2,P)$ which is a path of length $2$. Again we have two tight odd sets $\cup(H-\{S_1,S_2\})$ and $S_1\cup P\cup S_2$ whose union $\cup H$ is odd. Therefore $\cup H$ is a tight odd set. So from now on, we assume that no two leaves share a parent. 
	
	Now assume that the DFS tree has three leaves $S_1,S_2,S_3$. Without loss of generality, assume that $S_1$ is the deepest leaf. Let $P$ be the parent of $S_1$. Note that $P$ does not have any other children in the tree, because $S_1$ was the deepest leaf and no two leaves share a parent. Note that the removal of $P$ does not disconnect the graph because $S_1$ has a back edge. Therefore it must be that $\deg_H(P)\eqtwo 0$. Note also that $P$ is not connected to $S_2$ or $S_3$, because a DFS tree does not have cross edges. All of this implies that $H-\{S_3,P\}$ is connected, and also has odd parity. As before $H-\{S_1,S_2\}$ also satisfies the assumptions of the induction. So again, we get two tight odd sets whose union is $\cup H$ and therefore $\cup H$ is a tight odd set.
	
	Now assume that the DFS tree has only two leaves $S_1,S_2$. Let $P_1$ be the parent of $S_1$ and $P_2$ the parent of $S_2$. Let $Q$ be the lowest common ancestor of $S_1$ and $S_2$ in the tree. If $P_1,P_2\neq Q$, then we can proceed similarly to the previous case: Both $P_1$ and $P_2$ must have an even degree, since their removal does not disconnect the graph. Now $H-\{P_1,S_2\}$ and $H-\{P_2,S_1\}$ are both connected and have an odd parity. We use induction and the fact that their union is $\cup H$ to again show that $\cup H$ is a tight odd set. So assume that one of $P_1,P_2$ is the same as $Q$. Without loss of generality, assume that $P_2=Q$. Note that $P_1\neq Q$, or else we would have two leaves sharing a parent, which is already a resolved case. Now let $R$ be the parent of $P_2=Q$. Note that $S_2$ has a back edge, but its back edge cannot be to $R$ because that would create a triangle between $S_2,P_2,R$ which is forbidden in our bipartite graph. So the back edge must be to some ancestor of $R$. This means that removing $R$ or even removing both $R, S_1$ does not disconnect the graph. Since removing $R$ does not disconnect the graph we have $\deg_H(R)\eqtwo 0$. Now we have two cases:
	\begin{enumerate}
		\item If $S_1$ does not have an edge to $R$, that would imply $H-\{S_1,R\}$ is odd and connected. Similar to the case of three leaves, we would get that $H-\{P_1,S_2\}$ is odd and connected as well. But then $H-\{S_1,R\}$ and $H-\{P_1,S_2\}$ are two connected and odd subgraphs whose union is $H$ which implies that $\cup H$ is a tight odd set. 
		\item Now assume that $S_1$ does have an edge to $R$. Note that $Q=P_2$ is a parent of $S_2$ and an ancestor of $S_1$. So it must have some other child, which we will call $C$. Note that $C\neq S_1$, or else $S_1,S_2$ would be two leaves sharing a parent, which has already been resolved. Now, the removal of $C$ does not disconnect the graph because of the edge between $S_1$ and $R$. So it must be that $\deg_H(C)\eqtwo 0$. On the other hand, the removal of both $S_2,C$ also does not disconnect the graph. Also note that there is no edge between $S_2$ and $C$ because such an edge would create a triangle $S_2,C,Q$ which is forbidden in our bipartite graph. All of these mean that $H-\{S_2,C\}$ is odd and connected and by induction $\cup(H-\{S_2, C\})$ is a tight odd set. On the other hand $S_2\cup Q\cup C$ is also a tight odd set because the induced graph on these three sets is a path of length $2$. Again we have found two tight odd sets $\cup(H-\{S_2,C\})$ and $S_2\cup Q\cup C$ whose union gives us $\cup H$ and we are done.
	\end{enumerate}
	
	The only remaining case is when the DFS tree has only one leaf, i.e., when the DFS tree is a Hamiltonian path. If the root and the leaf are not connected to each other, we can find another DFS tree such that it has more than one leaf and reduce the problem to the previous cases considered. Consider starting the DFS from the child of the current root and going down the Hamiltonian path until we reach the current child. Since this child was not connected to the root, the DFS procedure cannot continue and has to back up. Eventually the original root will be connected somewhere along the tree as a leaf, but we now have two leaves, and we have already considered this case.
	
	So the only case that remains is if the DFS tree is a Hamiltonian path and that the root is connected to the leaf. This tree with the extra edge gives us a Hamiltonian cycle. Since the removal of any node in this graph does not disconnect the graph, all of the degrees must be even. Note that the entire graph cannot be simply this Hamiltonian cycle, because otherwise $\card{H}\eqtwo m+m\eqtwo 0$. So there must be some edge, other than those of the cycle, between two vertices $P$ and $Q$. Let the two neighbors of $P$ on the Hamiltonian cycle be $A,B$. Note that removing both $A,B$ does not disconnect the graph. There is also no edge between $A$ and $B$, because otherwise we would have a triangle $A,B,P$ which is forbidden in bipartite graphs. So $H-\{A,B\}$ is odd and connected and by induction $\cup(H-\{A, B\})$ is a tight odd cut. Note that $A\cup B\cup P$ is also a tight odd cut, because the induced graph on $A,B,P$ is a path of length $2$. Again we have written $H$ as the union of two connected and odd subgraphs; this implies that $\cup H$ is a tight odd set.
\end{proof}

\Cref{lem:odd} is the powerful pillar we use to create the method \MergeUncross. If the intersection parity graph $H$ has multiple connected components, we can deal with each one separately and then uncross the results using case~\ref{cond:diff} of \cref{lem:uncross}. If all of the connected components have odd parity, then we can take the union in each one and proceed. The only case we still need to show how to handle is when a connected component of $H$ has even parity. We will show next that the even parity case can also be handled very easily.

\begin{lemma}
	\label{lem:even}
	Assume that $H=H(S_1,\dots,S_m)$ is connected and $\card{H}\eqtwo 0$. Then there are two induced subgraphs of $H$, which are both odd and connected, and which together cover every node. Furthermore, these two subgraphs can be found in \NC.
\end{lemma}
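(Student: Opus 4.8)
The plan is to reduce the even case to the odd case already handled by \cref{lem:odd}: I will split $V_H$ into two sets $A$ and $V_H\setminus A$ that each induce a connected subgraph of odd \emph{size} (the size of a graph $G'$ meaning $\card{V(G')}+\card{E(G')}$, as in the paper's $\card{\cdot}$), output $H_1:=H[A]$ and $H_2:=H[V_H\setminus A]$, and invoke \cref{lem:odd} on each — which is exactly what \MergeUncross calls for. The search rests on two observations. First, if $T$ is any spanning tree of $H$ and $A$ is the vertex set of a subtree of $T$, then $H[A]$ and $H[V_H\setminus A]$ are both connected (each contains a subtree of $T$), and from $\card{E_H}=\card{E(H[A])}+\card{E(H[V_H\setminus A])}+\card{\delta_H(A)}$ we get $\card{H[A]}+\card{H[V_H\setminus A]}\equiv\card{\delta_H(A)}\pmod 2$; since $\card{H}$ is even, it therefore suffices to find such an $A$ with $\card{\delta_H(A)}$ even and $\card{H[A]}$ odd. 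Second, deleting a vertex $v$ changes a graph's size by $1+\deg_H(v)$; so a single vertex (size $1$), a single edge (size $3$), and — as $H$ is bipartite by \cref{lem:3way} — every star $H[\{v\}\cup N_H(v)]$ all have odd size, and more generally every induced subgraph of $H$ that is a tree has odd size.

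First I would dispatch the cases with an ``easy'' witness. If $H$ has a non-cut vertex $v$ with $\deg_H(v)$ even (the non-cut vertices, i.e.\ non-articulation points, are computable in \NC), take $A=\{v\}$: then $H[A]$ has odd size, $H[V_H\setminus A]=H-v$ is connected since $v$ is non-cut, and its size is $\card{H}-1-\deg_H(v)\equiv 1\pmod 2$. Similarly, if $H$ has an edge $\{a,b\}$ with $\deg_H(a)\equiv\deg_H(b)\pmod 2$ such that $H$ remains connected after deleting both $a$ and $b$, take $A=\{a,b\}$: $H[A]$ is a $K_2$ of odd size, and $H[V_H\setminus A]$ has size $\card{H}-1-\deg_H(a)-\deg_H(b)\equiv 1\pmod 2$.

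Now suppose neither easy witness exists, so every non-cut vertex of $H$ has odd degree, and consider the blocks of $H$. Because $\card{H}$ is even, $H$ is not a tree, so it has a $2$-connected block; either it has a $2$-connected \emph{leaf} block $B$ (a block meeting the block-cut tree in a single cut vertex $c$), which then has $\card{V(B)}\ge 4$, or every leaf block of $H$ is a bridge. In the former situation, take a DFS tree of $B$ rooted at $c$, let $\ell$ be a deepest leaf and $p$ its parent; then $p\neq c$, both $\ell,p$ lie only in $B$ so they are non-cut in $H$ and hence of odd, equal-parity degree, and $B-\{\ell,p\}$ is connected (the standard $2$-connectivity argument: in $B$'s DFS tree the other children of $p$ are leaves, each with a back edge past $p$), so $H-\{\ell,p\}$ is connected — the second easy witness. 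The genuinely delicate case is when every leaf block of $H$ is a bridge; here one argues directly, taking $A$ to be a suitable short pendant path (e.g.\ for an edge $\{a,b\}$ inside a non-bridge block whose endpoints both carry pendants $w_a,w_b$, take $A=\{w_a,a,b,w_b\}$), which induces a tree of odd size with $\card{\delta_H(A)}$ even and $H[V_H\setminus A]$ connected.

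The main obstacle is exactly this last case — proving that a suitable $A$ always exists in it, uniformly over the block-cut structure; the naive hope that one removable same-parity edge always exists is already false (a cube with a pendant edge at each vertex is connected with even $\card{H}$, has all non-cut vertices of degree $1$, yet every same-parity edge lies inside the cube and deleting its endpoints isolates two pendants), so a pendant-path argument is genuinely needed, and I expect it to take the form of a short, much tamer analogue of the DFS case analysis behind \cref{lem:odd}. The \NC\ claim then follows cheaply: degree parities, connectivity of $H$ after deleting a bounded number of vertices, and parities of induced-subgraph sizes are all computable in \NC\ (parallel connectivity, parallel biconnectivity, and Euler-tour subtree data over a spanning tree), so once the existence proof pins down a polynomial-size family of candidate sets $A$ that is guaranteed to contain a good one, one simply tests them all in parallel and outputs one that works.
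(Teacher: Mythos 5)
Your overall strategy differs from the paper's in one important way: you insist on a \emph{disjoint} partition $A$, $V_H\setminus A$ into two connected odd-size induced subgraphs, whereas the lemma only asks for two induced subgraphs that \emph{cover} $V_H$ and may overlap. The paper exploits that slack heavily: it assigns each biconnected block its ``inverse parity'' $\card{V_B}+\card{E_B}+1$, notes this is additive over the block-cut tree, and when at least three blocks have inverse parity $1$ it takes two \emph{overlapping} subtrees of the block-cut tree (obtained by deleting the edges next to the two farthest odd blocks), each containing an even number of odd blocks — a case that is essentially free precisely because overlap is allowed. Only when exactly one block $B$ is odd does the paper do local surgery inside $B$ (even-degree vertex, or an ear-decomposition argument producing two adjacent odd-degree vertices whose removal keeps $B$ connected), and then reattaches the remaining blocks to one side. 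Your parity bookkeeping ($\card{H[A]}+\card{H[V_H\setminus A]}\eqtwo\card{\delta_H(A)}$) and your easy witnesses (a non-cut vertex of even degree; an adjacent same-parity pair whose removal keeps $H$ connected, found via a deepest DFS leaf in a $2$-connected leaf block) are correct and roughly parallel the paper's single-odd-block analysis.

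However, there is a genuine gap, and you have named it yourself: in the residual case where every leaf block of $H$ is a bridge, you exhibit one favorable sub-configuration (an edge in a non-bridge block whose endpoints both carry pendant vertices) and then state that you ``expect'' a pendant-path argument to work uniformly. That expectation is not a proof. The pendant structures hanging off a block can be arbitrary trees, the endpoints of candidate edges may be cut vertices carrying whole sub-block-trees rather than single pendants, and your own cube-with-pendants example shows that the set $A$ genuinely must absorb material outside the block — so the existence of a connected $A$ with $\card{\delta_H(A)}$ even, $\card{H[A]}$ odd, \emph{and} $H[V_H\setminus A]$ connected in all such configurations is exactly the content that remains to be established, and it is the hardest part of the lemma. (A smaller omission: your dichotomy ``a $2$-connected leaf block exists or every leaf block is a bridge'' silently skips the case where $H$ is itself $2$-connected and the block-cut tree is a single node; your DFS argument is rooted at a cut vertex $c$ that does not exist there, though it can be repaired.) Note that the difficulty you face is largely self-inflicted by requiring disjointness; if you allow the two subgraphs to overlap, the paper's inverse-parity argument on the block-cut tree disposes of most of these configurations without any pendant-path analysis, and only a single even-size $2$-connected block ever requires local surgery.
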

\begin{proof}
	We will be working with the biconnected components of $H$ and the corresponding block-cut tree. A biconnected component is simply a maximal subgraph such that the removal of any vertex from it does not disconnect the subgraph. The block-cut tree is formed by introducing a node for each biconnected component and a node for every cut vertex, a vertex whose removal disconnects the graph, and connecting a cut vertex to all biconnected components to which it belongs. Finding biconnected components and forming the block-cut tree can be easily done in \NC. For example in parallel for every pair of edges, and every vertex, one can check whether the removal of that vertex disconnects the pair of edges; then one can form equivalence classes out of the edges and obtain the biconnected components. For more efficient and elegant algorithms in \NC, see \cite{tarjan1985efficient}.
	
	For an induced subgraph $B=(V_B, E_B)$ let us define its inverse parity as the parity of $\card{V_B}+\card{E_B}+1$ and denote this by $\bcard{B}$. Note that we have $\bcard{B}\eqtwo 1+\card{B}$. We regard biconnected components as induced subgraphs, unless otherwise stated. Inverse parity has a certain additivity property. Namely, if $B_1$ and $B_2$ are induced subgraphs that share only a single vertex and have no edges to each other, then $\bcard{B_1\cup B_2}=\bcard{B_1}+\bcard{B_2}$.
	
	Using this, one can easily compute the inverse parity of any subtree of the block-cut tree. In the block-cut tree, to each biconnected component assign its inverse parity, and to each cut vertex assign $0$. Then it is easy to see by the additivity property that for any subtree of the block-cut tree, the inverse parity of the union of all blocks in the subtree is simply the parity of the sum of assigned numbers.
	
	In particular, since $\card{H}\eqtwo 0$, or in other words, $\bcard{H}\eqtwo 1$, there must be an odd number of $1$s in the block-cut tree.
	
	We will first solve the problem when there are at least three $1$s in the tree. In this case, we can find two subtrees whose union is the entire tree, each having an even number of $1$s. This suffices, because the union of all biconnected components in each subtree would be an odd connected graph, and by \cref{lem:odd} we can merge all of the nodes in it. Each subtree will be obtained by simply partitioning the block-cut tree by removing an edge and looking at one of the resulting two subtrees. Clearly we can try all such partitions in \NC. So it remains to show that at least two of them, whose union is the entire tree, have an even internal sum. For this, look at the $1$ nodes in the tree whose distance, in the tree, is the largest. Let them be $B_1$ and $B_2$. Look at the path on the block-cut tree connecting $B_1$ to $B_2$ and let the edge adjacent to $B_1$ be $e_1$ and the one adjacent to $B_2$ be $e_2$. Now if we partition the block-cut tree by removing $e_1$, we get two parts, one of which contains $B_2$, and the other part can only contain one $1$ node, namely $B_1$. Otherwise, the distance between $B_1$ and $B_2$ would not have been maximal. So the subtree containing $B_2$ has an even sum. Similarly if we remove $e_2$ from the block-cut tree, the part containing $B_1$ will have an even sum. It is not hard to see that these two subtrees cover the whole tree.
	
	So the only remaining case is when the block-cut tree has only one $1$ node. In that case let $B$ be the biconnected component with $\bcard{B}\eqtwo 1$.
	
	First consider the case where $B$ is the entire graph $H$. In this case, we will show that either there is a vertex $S$ where $B$ and $B-\{S\}$ are both odd and connected, or there are two vertices $S_1,S_2$ connected by an edge such that $B-\{S_1,S_2\}$ and $H(S_1, S_2)$ are both odd and connected. First, note that if any node in $B$ has an even degree, then this condition is automatically satisfied. Because if $S_1$ is such a node, $B-\{S_1\}$ is connected since $B$ is biconnected. It is also odd because $B-\{S_1\}$ has one fewer node and an even number of fewer edges. So assume from now on that the degree of every node in $B$ is odd. Now we want to obtain the nodes $S_1,S_2$ as described before. This is easy to derive from an open ear decomposition of $B$. Note that $\bcard{B}\eqtwo 1$ implies that $B$ cannot be simply a single edge, so it must have an open ear decomposition. Look at this ear decomposition, and add the ears one by one. Look at the last ear added that was not a single edge. Suppose that this ear was some path $(S_1,\dots,S_k)$. Then, note that $S_2$ is a new node added by this ear, and since no new nodes are added after this ear, the removal of $S_1,S_2$ leaves $B$ connected; even if this ear was the initial cycle, this is still true. So $B-\{S_1,S_2\}$ is connected and since the degrees of $S_1,S_2$ are both odd and they are connected to each other, it must be that $B-\{S_1,S_2\}$ is odd. Since $S_1,S_2$ are connected to each other as well $H(S_1,S_2)$ is also connected and odd as desired. Note that the vertex $S_1$ or the pair of vertices $S_1,S_2$ can be found in \NC~by simply checking all possibilities in parallel.
	
	Now consider the case where $B$ is not the entire graph $H$. In this case we proceed as before, and by looking at the induced subgraph $B$, we find either a node $S_1$ or two connected nodes $S_1,S_2$ such that $B-\{S_1\}$ or $B-\{S_1,S_2\}$ is connected and odd. So we have a partition of $B$ into a single or a pair of vertices and the rest of $B$. We simply attach the biconnected components other than $B$ to one of the partitions, based on the block-cut tree. This ensures that connectivity is preserved, and further, the parity of the partitions is not changed because every biconnected component other than $B$ has inverse parity $0$. Again this operation can be done in \NC, since the partition inside $B$ can be found in \NC, and connecting the rest of the biconnected components is simply a matter of partitioning the block-cut tree into two or three parts.
\end{proof}

Now, armed with \cref{lem:odd,lem:even}, we can describe the procedure \MergeUncross. We will first make sure that even intersections are completely removed, i.e., made empty. This is easy to do in parallel, because there are no $3$-wise intersections. Then we apply \cref{lem:odd} or \cref{lem:even} to each connected component of $H$. To avoid creating sets $S$ with $\card{f^{-1}(S)}\geq c_1 \card{V_0}$, we always pass our new sets through the procedure \CheckBalancedViable, which will potentially find a balanced viable set and end the procedure.

\begin{algorithm}[H]
	\caption{Algorithm for uncrossing partially uncrossed sets}
	\MergeUncross{$S_1,\dots,S_m$}
	
	\ParallelFor{$i=1\dots m$}{
		$S_i\leftarrow S_i-\bigcup_{j<i, \card{S_i\cap S_j}\text{ even}}S_j$
	}
	$H\leftarrow H(S_1,\dots, S_m)$
	
	$H_1,\dots,H_k\leftarrow$\ConnectedComponents{H}
	
	$\cF\leftarrow\emptyset$
	
	\ParallelFor{$i=1\dots k$}{
		\eIf{$\card{H_i}=\card{V_{H_i}}+\card{E_{H_i}}$ is odd}{
			\CheckBalancedViable($H_i$).
			
			Add $\cup H_i$ to $\cF$.
		}{
			Let $H_i', H_i''$ be the two induced subgraphs promised by \cref{lem:even}.
			
			\CheckBalancedViable($H_i'$).
			
			\CheckBalancedViable($H_i''$).
			
			Add $\cup H_i'$ to $\cF$.
			
			Add $\cup H_i''-\cup H_i'$ to $\cF$. 
		}
	}
	\Return $\cF$
	\bigskip
	\label{alg:mergeuncross}
\end{algorithm}

We just have to describe \CheckBalancedViable. The input to this procedure is an odd connected subset $H$ of the intersection parity graph. If $\card{f^{-1}(\cup H)}<c_1\card{V_0}$, then this procedure simply does nothing. Otherwise it outputs a balanced viable set as follows:

If $\card{f^{-1}(\cup H)}\leq (1-c_1)\card{V_0}$, then it simply outputs $f^{-1}(\cup H)$ as the balanced viable set. Otherwise, we order the vertices of $H$ as $S_1',\dots,S_k'$ so that for any $i$, the induced subgraph on $U_i=\set{S_1',\dots,S_i'}$ is connected. For example, sorting according to shortest distance (in $H$) to an arbitrary initial vertex $S_1'$ would satisfy this property. Now let $j$ be the first index for which $\card{f^{-1}(\cup U_j)}\geq 2c_1\card{V_0}$. Then $\card{f^{-1}(\cup U_j)}\leq 3c_1\card{V_0}<(1-c_1)\card{V_0}$. So if $\card{\cup{U_j}}\eqtwo 1$, then we can return $f^{-1}(\cup U_j)$ as a balanced viable set (it is a tight odd set by \cref{lem:odd}). Otherwise by \cref{lem:even}, we can find two subsets of $U_j$ whose union covers $U_j$ and are odd. We simply return the subset with the larger value of $\card{f^{-1}(\cdot)}$ as the balanced viable set.

All together we get the following result:
\begin{theorem}
	\label{thm:uncrossing}
	Given tight odd sets $S_1,\dots,S_m$, there is an \NC~algorithm that either finds a viable set or outputs pairwise disjoint tight odd sets $S_1',\dots,S_{m'}'$ such that
	\[ \Lambda(S_1,\dots,S_m)\subseteq \Lambda(S_1',\dots,S_{m'}'), \]
	and $\card{f^{-1}(S_i')}<c_1\card{V_0}$.
\end{theorem}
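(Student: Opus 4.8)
The plan is to analyze the recursive procedure \Uncross (\cref{alg:uncross}) together with its merge step \MergeUncross (\cref{alg:mergeuncross}), proving correctness by induction on $m$ while bounding the parallel depth by $O(\log m)$ rounds of \NC~computation. The base case $m=1$ is immediate: a single tight odd set is trivially a pairwise disjoint family, and after one call to \CheckBalancedViable (which either exits with a balanced viable set or certifies $\card{f^{-1}(S_1)}<c_1\card{V_0}$) we return it. For the inductive step, the two recursive calls return, by the inductive hypothesis, pairwise disjoint tight odd sets $R_1,\dots,R_p$ with $\Lambda(S_1,\dots,S_{\lceil m/2\rceil})\subseteq\Lambda(R_1,\dots,R_p)$ and pairwise disjoint tight odd sets $C_1,\dots,C_q$ dominating the $\Lambda$ of the second half; adding the two inclusions yields $\Lambda(S_1,\dots,S_m)\subseteq\Lambda(R_1,\dots,R_p,C_1,\dots,C_q)$. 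The decisive point is that this input to \MergeUncross is the union of two internally disjoint families, so \cref{lem:3way} applies: there are no $3$-wise intersections and the intersection graph is bipartite.

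Next I would verify that \MergeUncross preserves everything. In the ``remove even intersections'' loop, because no three of the sets share a common element, subtracting from each $S_i$ in parallel the sets $S_j$ ($j<i$) with $\card{S_i\cap S_j}$ even empties exactly those pairwise intersections and leaves every odd intersection untouched; each such subtraction peels off a tight odd set one at a time (case~\ref{cond:diff} of \cref{lem:uncross} applied to $S_i$ and $S_j$), and since the removed $S_j$'s stay in the list, $\Lambda$ does not shrink. After this step the only nonempty pairwise intersections are odd, so the intersection parity graph $H$ is exactly the intersection graph and is still bipartite. Split $H$ into connected components $H_1,\dots,H_k$, computable in \NC; sets in distinct components are disjoint, so their unions are disjoint. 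For a component with $\card{H_i}$ odd, \cref{lem:odd} shows $\cup H_i$ is a tight odd set absorbing the $\Lambda$ of all its members. For $\card{H_i}$ even, \cref{lem:even} yields, in \NC, two odd connected induced subgraphs $H_i',H_i''$ covering $H_i$; then $\cup H_i',\cup H_i''$ are tight odd sets, the union of $\cup H_i'$ and $\cup H_i''$ equals $\cup H_i$ which has even parity while $\cup H_i'$ has odd parity, so case~\ref{cond:diff} of \cref{lem:uncross} makes $\cup H_i'$ and $\cup H_i''-\cup H_i'$ tight odd sets whose combined $\Lambda$ dominates $\Lambda(\cup H_i')+\Lambda(\cup H_i'')$ and hence (via \cref{lem:odd} on $H_i'$ and on $H_i''$) the $\Lambda$ of every member of $H_i$. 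Collecting these sets over all components gives a pairwise disjoint family $S_1',\dots,S_{m'}'$ with $\Lambda(S_1,\dots,S_m)\subseteq\Lambda(S_1',\dots,S_{m'}')$.

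For the balance guarantee, every freshly formed union is passed through \CheckBalancedViable before entering $\cF$: it does nothing when $\card{f^{-1}(\cup H)}<c_1\card{V_0}$, and otherwise halts the whole computation with a genuine balanced viable set --- returning $f^{-1}(\cup H)$ itself when $\card{f^{-1}(\cup H)}\in[c_1\card{V_0},(1-c_1)\card{V_0}]$ (viable since $\cup H$ is a tight odd set by \cref{lem:odd}), or, when it is larger, walking along a connectivity-preserving ordering of the nodes of $H$ to find a prefix union $\cup U_j$ with $\card{f^{-1}(\cup U_j)}\in[2c_1\card{V_0},3c_1\card{V_0}]\subseteq[c_1\card{V_0},(1-c_1)\card{V_0}]$ for $c_1$ small enough, and returning $\cup U_j$ (if its parity is odd) or the larger of its two odd covers from \cref{lem:even}. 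Hence in the branch that does not exit, every set placed in $\cF$ --- and every subset of such a set, such as $\cup H_i''-\cup H_i'$ --- satisfies $\card{f^{-1}(\cdot)}<c_1\card{V_0}$. Finally, the \Uncross recursion has depth $O(\log m)$ with all calls at a given level run in parallel, and each invocation of \MergeUncross consists of a constant number of \NC~primitives (a parallel set subtraction, connected components, parity checks, and the block-cut-tree and open-ear computation behind \cref{lem:even}), so the whole algorithm runs in \NC. The step I expect to demand the most care is the interaction between the ``remove even intersections'' stage and \cref{lem:3way}: one must check that performing the deletions in parallel is equivalent to a sequence of single-pair subtractions, that it destroys neither the tight-odd-set property of any surviving set nor any odd intersection nor any span, and that the resulting $H$ remains bipartite, so that \cref{lem:odd,lem:even} can legitimately be applied component by component.
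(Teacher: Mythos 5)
Your proposal is correct and follows essentially the same route as the paper: the \Uncross divide-and-conquer with induction on $m$, \cref{lem:3way} for the merged input, the parallel removal of even intersections justified by case~\ref{cond:diff} of \cref{lem:uncross}, \cref{lem:odd,lem:even} applied per connected component of the intersection parity graph, and \CheckBalancedViable for the balance guarantee. The one point you flag as delicate --- that the parallel subtractions are harmless --- is exactly resolved by the absence of $3$-wise intersections (each $S_i\cap S_j\cap S_k=\emptyset$, so removing $S_j$ from $S_i$ leaves every other intersection $S_i\cap S_k$ and its parity unchanged), which is also the paper's implicit justification.
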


	\section{Other Algorithmic Ingredients}
\label{sec:ingredients}

In this section we describe the remaining algorithmic ingredients we used in \cref{sec:main-alg,sec:oddsets}.

\subsection{Finding a point in the relative interior of a face of the matching polytope}
\label{sec:avg}

In this section, we prove \cref{lem:avg} by giving an \NC~algorithm for the following problem: Given a planar graph $G = (V, E)$ and a weight vector on edges $w\in \Z^E$ given in unary, 
find a point, $x$, in the interior of 
    $\PM(G,w)$, where $\PM(G,w)$ denotes the face of the perfect matching polytope of $G$ containing all minimum weight
    perfect matchings in $G$ and their convex combinations (clearly, the corner points of this face are precisely the set of minimum weight
    perfect matchings in $G$).
\begin{proof}[Proof of \cref{lem:avg}]
    Let $\#G_w$ denote the number of minimum weight perfect matchings in $G$ w.r.t. edge weights $w$, and for each 
edge $e \in E$, let $\#G^e_w$ denote the number of such matchings which contain the edge $e$. 
The point $x$ we will find will have coordinate 
\[  x_e = \frac{\#G^e_w}{\#G_w} .\]
Clearly, $x$ satisfies all required conditions. Additionally, observe that if $M_1, \ldots, M_m$ are
all the minimum weight perfect matchings in $G$, then 
\[ x = {\frac {\1_{M_1} + \ldots + \1_{M_m}}  m}=\avg(\PM(G, w)).\]
    
    We will crucially use the fact that a Pfaffian orientation of $G$ can be computed in \NC. 
Let $(i, j) \in E$, with $i < j$. If in the 
Pfaffian orientation, this edge is directed from $i$ to $j$, then let $B_{ij} = y^{w_e}$, otherwise let $B_{ij} = -y^{w_e}$,
    where $y$ is an indeterminate. Let $B$ be the resulting matrix.
Observe that the exponents of the entries of $B$ are polynomially bounded in the input size and hence its determinant can be computed in \NC~\cite{BCP}.
    Consider the lowest degree term in $\det(B)$; let its degree be $d$. Then the coefficient of $y^d$ is the square of the number of perfect matchings of minimum weight in $G$, i.e.,
    it is $(\#G_w)^2$.
    
    Next, for each edge $e \in E$, we will compute $\#G^e_w$, the number of minimum weight perfect matchings that edge $e$ participates in. Zero out the 
    two entries in $B$ corresponding to $e$ to obtain matrix $B_e$ and compute $\det(B_e)$. Then the coefficient of $y^d$ will 
be $(\#G_w - \#G^e_w)^2$.
    Hence, $\#G^e_w$ as well as $\#G_w$ can be computed.
    Clearly, this can be done in parallel for all edges.
\end{proof}

\subsection{Finding linearly many edge-disjoint even walks}
\label{sec:evenwalks}

In this section we prove \cref{lem:evenwalks} by showing how to find $\Omega(\card{E})$ many edge-disjoint even walks in a given graph $G=(V,E)$ in \NC. By assumption, $G$ is a connected planar graph that does not have any vertices of degree $1$ and at most $\card{V}/2$ vertices of degree $2$. We first find linearly many edge-disjoint planar faces in $G$.

\begin{lemma}[Adapted from \cite{MahajanV}]
	\label{lem:edge-disjoint-faces}
	There is an \NC~algorithm that returns $\card{E}/288$ edge-disjoint planar faces of a graph satisfying the assumptions of \cref{lem:evenwalks}.
\end{lemma}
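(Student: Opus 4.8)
The plan is to reduce the statement, following Mahajan--Varadarajan, to finding a large maximal independent set in an auxiliary conflict graph on faces, to which Luby's algorithm (\cref{lem:maximal}) applies. First I would fix a combinatorial planar embedding of $G$ — a planar embedding, together with its list of faces, can be computed in \NC~by known algorithms — and record each face $\phi$ as the cyclic sequence of edges on its boundary, writing $\card{\phi}$ for its length (a bridge occupies two consecutive slots of the same face boundary). Since every edge of $G$ occupies exactly two boundary slots in total, $\sum_{\phi}\card{\phi}=2\card{E}$.

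The first quantitative step lower-bounds the number of faces. As $G$ is connected, has no degree-$1$ vertex, and has at most $\card{V}/2$ vertices of degree $2$, at least $\card{V}/2$ of its vertices have degree $\geq 3$, so $2\card{E}=\sum_v\deg(v)\geq 2\card{V}+\card{V}/2$ and hence $\card{V}\leq 4\card{E}/5$. Euler's formula then gives that the number of faces is $\card{E}-\card{V}+2\geq \card{E}/5\geq \card{E}/6$; the cases $\card{V}\leq 1$ are trivial. Call a face \emph{short} if $\card{\phi}\leq 23$. By Markov applied to $\sum_\phi\card{\phi}=2\card{E}$, at most $\card{E}/12$ faces are not short, so there are at least $\card{E}/6-\card{E}/12=\card{E}/12$ short faces.

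Next I would build the conflict graph $H$ on the set of short faces, joining two short faces by an edge of $H$ whenever they share a common edge of $G$; this is an \NC~computation from the embedding. Each short face has at most $23$ boundary edges, and each edge of $G$ lies on the boundary of at most one face besides $\phi$ (and a bridge of $\phi$ contributes none), so $H$ has maximum degree at most $23$. Applying \cref{lem:maximal} yields in \NC~a maximal independent set $I$ of $H$; since the closed neighborhoods of the vertices of $I$ cover all short faces, $24\card{I}\geq \card{E}/12$, i.e.\ $\card{I}\geq \card{E}/288$. The faces in $I$ are pairwise edge-disjoint by construction, proving the lemma.

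I expect the main obstacle to be careful bookkeeping rather than a new idea: one must verify that the face structure and the graph $H$ are genuinely assembled in \NC~from an \NC-computable planar embedding, and handle bridges, parallel edges and the outer face correctly when defining $\card{\phi}$ and the adjacency of $H$ (bridges only help, as noted). The precise constant $288$ is simply an artifact of the chosen thresholds ($\Delta(H)\leq 23$ and $\geq \card{E}/12$ short faces); different thresholds would give a different constant but the same conclusion.
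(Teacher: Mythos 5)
Your proof is correct and follows essentially the same route as the paper: lower-bound the number of faces via Euler's formula using the degree hypotheses, use Markov to restrict attention to faces of bounded boundary length (equivalently, bounded degree in the planar dual), and apply Luby's algorithm (\cref{lem:maximal}) to the face-adjacency graph to extract the edge-disjoint faces. The only difference is presentational — the paper phrases the Markov step in terms of the average degree of the planar dual and drops high-degree dual vertices, whereas you work directly with boundary lengths of faces — and your bookkeeping for the constant $288$ is, if anything, slightly cleaner.
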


\begin{proof}
It is easy to see that the graph has $\Omega(\card{E})$ faces. By Euler's formula we have
\[ \card{V}-\card{E}+\card{F}=2, \]
where $F$ denotes the set of (planar) faces. By rearranging and using the fact that $\deg(v)\geq 2$ for every $v$, we get
\begin{multline*}
	\card{F}\geq \card{E}-\card{V}=\sum_{v\in V}\frac{\deg(v)-2}{2}\geq \sum_{v\in V:\deg(v)\geq 3}\frac{\deg(v)}{6}=\frac{1}{3}(\card{E}-\card{\set{v:\deg(v)=2}})\\
	\geq \frac{1}{3}(\card{E}-\card{V}/2)\geq \frac{1}{6}\card{E}.
\end{multline*}

Consider the planar dual $G^*$ of $G$. Corresponding to each face in $G$, the dual has a vertex, and corresponding to each edge in the primal, there is an edge in the dual. The sum of the degrees in the dual graph is $2\card{E}\leq 12\card{F}$. In other words, the average degree in the dual graph is at most $12$. By Markov's inequality, at least half of the dual vertices must have degree at most $24$. We simply drop the dual vertices of degree more than $24$ from the dual graph and find a maximal independent set in the remaining dual. This can be done in \NC, by \cref{lem:maximal}. The remaining dual has maximum degree at most $24$, so its maximal independent set has at least at least $1/24$ of its vertices, i.e., at least $\card{F}/48\geq \card{E}/288$.
\end{proof}

If at least half of the faces found by \cref{lem:edge-disjoint-faces} are even, we work with these as our even walks. Else, we need to pair up odd faces together with an edge-disjoint path connecting each pair to get $\Omega(\card{E})$ even walks of the second type.

\begin{lemma}
	\label{lem:disjoint-walks}
	Given an even number $f$ of edge-disjoint odd faces in a planar graph, we can find, in \NC, $f^2/16\card{E}$ edge-disjoint even walks, each formed by joining two of the given faces.
\end{lemma}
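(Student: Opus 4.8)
The plan is to contract the given faces to single vertices, reduce the problem to packing many edge-disjoint paths joining those vertices, and solve that in \NC\ by the maximal-independent-set technique of Mahajan and Varadarajan (\cref{lem:maximal}).

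\textbf{Reduction by contraction.} First I would contract each given odd face $F_i$ (a connected subgraph) to a single vertex $u_i$ and delete the resulting loops; contracting a connected subgraph preserves planarity, so the result $H$ is a connected planar multigraph with $\card{E(H)}\le\card{E}$ carrying the ``terminals'' $u_1,\dots,u_f$. An even walk of the second kind joining $F_i$ and $F_j$ corresponds to a path in $H$ between $u_i$ and $u_j$: traverse $F_i$, follow the un-contracted path, traverse $F_j$, then return along the path. For the resulting even walks to be pairwise edge-disjoint it suffices that the chosen $u_i$--$u_j$ paths be pairwise edge-disjoint in $H$ and use pairwise disjoint pairs of terminals, and for the un-contracted objects to be genuine walks in $G$ it is cleanest to insist that each path be internally disjoint from the terminals (so it uses no edge of any $F_\ell$).

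\textbf{Packing paths via a spanning tree.} Next I would compute in \NC\ a spanning tree of $H$, restrict it to the minimal subtree $T'$ spanning the terminals, root $T'$, take an Euler tour, and list the terminals in order of first appearance, $u_{\sigma(1)},\dots,u_{\sigma(f)}$; for each odd index $c$ let $P_c$ be the $T'$-path joining $u_{\sigma(c)}$ and $u_{\sigma(c+1)}$. There are $f/2$ such candidate paths, and since consecutive odd indices correspond to disjoint blocks of the Euler order, their terminal pairs $\{\sigma(2j-1),\sigma(2j)\}$ are pairwise disjoint. The key structural fact is that deleting an edge of $T'$ splits the terminals into two cyclically contiguous blocks of the Euler order, so each edge of $T'$ lies on at most two of the candidate paths; hence in the conflict graph $\Gamma$ on the $f/2$ candidates --- with $P_c,P_{c'}$ adjacent when they share a $T'$-edge --- the number of edges is at most $\card{E(T')}\le\card{E}$. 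Deleting the few (by Markov's inequality) candidates of degree larger than a threshold $\Theta(\card{E}/f)$ and applying \cref{lem:maximal} to the rest yields an independent set of size $\Omega(f^2/\card{E})$; tracking constants carefully gives the stated $f^2/16\card{E}$. Each selected $P_c$, un-contracted and sandwiched between $F_{\sigma(c)}$ and $F_{\sigma(c+1)}$, is an even walk, and independence in $\Gamma$ together with disjointness of terminal pairs makes these even walks pairwise edge-disjoint.

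\textbf{Expected obstacle.} The main difficulty is the interaction of the candidate paths with the \emph{other} contracted faces: a $T'$-path between Euler-consecutive terminals can pass through a third terminal, and un-contracting then forces a detour along that face's boundary cycle, which may clash with another selected even walk. Resolving this --- either by replacing the Euler-tour candidates with a decomposition of $T'$ into internally-terminal-free marked-to-marked paths and reasoning about the resulting ``terminal graph'' (where a face adjacent to many others is the bottleneck, which must then be reused as a connector via the planar cyclic order of its attachment points), or by routing the boundary detours in a planar non-crossing fashion --- is the crux, and is also the reason the bound degrades to $\Omega(f^2/\card{E})$ rather than the $\Omega(f)$ achievable by a sequential peeling argument. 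Everything else (spanning tree, Euler tour, Markov pruning, maximal independent set) is standard in \NC.
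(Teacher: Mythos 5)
There is a genuine gap, and you have put your finger on it yourself: the un-contraction step. Your Euler-tour pairing on the Steiner subtree $T'$ of the contracted graph does \emph{not} guarantee that the chosen $u_i$--$u_j$ paths are internally disjoint from the other terminals (the path between two Euler-consecutive terminals routinely passes through a third terminal, e.g.\ at their least common ancestor), so the requirement you impose on your own reduction is not met by your construction. Un-contracting such a path forces a detour along the boundary cycle of the third face $F_\ell$, and now an edge of $F_\ell$ can be used by the walk having $F_\ell$ as one of its odd cycles \emph{and} by arbitrarily many detouring paths, since many candidate paths can pass through the vertex $u_\ell$. At that point the bound ``each edge lies in at most two walks'' fails, the conflict graph can have far more than $\card{E}$ edges, and the Markov-plus-\cref{lem:maximal} count no longer yields $\Omega(f^2/\card{E})$. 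The two repair directions you sketch (terminal-free path decompositions of $T'$, or planar non-crossing routing of the detours) are exactly the missing argument, and neither is carried out; so as written the proof does not go through.

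For comparison, the paper avoids this issue by never contracting the faces. It places one token on a vertex of each odd face and invokes a tree-pairing lemma (\cref{lem:tree-pairing}): after deleting the spanning-tree edges whose removal splits the tokens even--even, a purely local pairing of tokens and edges at each vertex yields a pairing of all $f$ tokens by \emph{pairwise edge-disjoint} tree paths in $G$ itself, in \NC. Each path is then truncated to a subpath so that it is edge-disjoint from its own two faces, but it is allowed to run through edges of other faces; since the faces are pairwise edge-disjoint and the paths are pairwise edge-disjoint, every edge of $G$ lies in at most two of the $f/2$ walks. From there the finish is the same as yours: by Markov at least $f/4$ walks have at most $4\card{E}/f$ edges, each such walk conflicts with at most $4\card{E}/f$ others, and \cref{lem:maximal} extracts $f^2/16\card{E}$ pairwise edge-disjoint even walks. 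If you want to salvage your contraction-based route, you would need to either enforce terminal-free paths (essentially re-deriving something like \cref{lem:tree-pairing}) or prove a bound on how often face-boundary detours can overlap, neither of which is present in your proposal.
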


\begin{proof}
	First we find a spanning tree $T$ of $G$. We will only use paths on the spanning tree to pair up odd faces. For each given odd face, place a token at one of its vertices, arbitrarily. Now we have $f$ tokens on the spanning tree $T$. In \cref{lem:tree-pairing} we will prove that these tokens can be paired up by edge-disjoint paths from the tree in \NC.
	
	We use this pairing of tokens and the paths from the tree $T$ to pair the given odd faces. When we connect two odd faces $O_1$ and $O_2$ by a path $P$, the path $P$ might intersect or even use the edges of $O_1$ and $O_2$. We fix this by replacing $P$ with a subpath of $P$. More precisely, we find the last intersection of $P$ with $O_1$, and the first intersection after that point with $O_2$, and replace $P$ by the subpath between these two intersections. Now the path is edge-disjoint from the $O_1,O_2$.
	
	So far, we have created $f/2$ even walks; but they are not necessarily edge-disjoint. The only way that two of these walks can intersect each other is if the connecting path from one shares an edge with an odd face of the other. Because of this, any given edge $e$ can appear in at most $2$ of the walks. So the average number of edges in an even walk is at most $2\card{E}/f$. Markov's inequality implies that at least $f/4$ of the even walks have at most $4\card{E}/f$ edges. Any of these even walks shares an edge with at most $4\card{E}/f$ other walks, since an edge appears in at most two walks.
	
	By \cref{lem:maximal}, we can find, in \NC, a maximal independent set of these short even walks (an independent set is a just a set of walks that are pairwise edge-disjoint). The number of walks in this independent set will be at least
	\[ \frac{f}{4}\cdot \frac{f}{4\card{E}}=\frac{f^2}{16\card{E}}. \]
	
\end{proof}

We now describe the missing part from the above proof.

\begin{lemma}
	\label{lem:tree-pairing}
	Consider a tree $T$ and an even number of tokens $o_1,\dots,o_f$ placed on the vertices of the tree. We can find, in \NC, a pairing of the tokens using the shortest path on the tree, so that no two paths share an edge.
\end{lemma}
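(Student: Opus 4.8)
The plan is to observe that an edge-disjoint pairing of the tokens along tree paths amounts to decomposing a canonical set of tree edges --- those whose removal separates an odd number of tokens from a fixed root --- into paths, and that this edge set is trivially computable in \NC. Concretely, root $T$ at an arbitrary vertex $r$. For each vertex $v$ let $\tau(v)\in\{0,1\}$ be the parity of the number of tokens in the subtree rooted at $v$; all the $\tau(v)$ are computed in \NC~from an Euler tour of $T$ together with prefix sums. Let $U$ be the set of tree edges $\{v,\mathrm{parent}(v)\}$ with $\tau(v)=1$. The key point is that in the forest $(V,U)$ a vertex has odd degree exactly when it carries a token: for $v\neq r$ we have $\deg_U(v)\equiv \tau(v)+\sum_{v'\text{ child of }v}\tau(v')\pmod 2$, while $\tau(v)\equiv [\,v\text{ is a token}\,]+\sum_{v'\text{ child of }v}\tau(v')\pmod 2$, so $\deg_U(v)\equiv[\,v\text{ is a token}\,]$; for $v=r$ we use that $f$ is even, so $\tau(r)=0$, and the same identity gives $\deg_U(r)\equiv[\,r\text{ is a token}\,]$.

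Next I would decompose $U$ into the desired paths. A connected multigraph with exactly $2k$ odd-degree vertices decomposes into $k$ edge-disjoint trails joining those vertices; since $U$ lies inside a tree it has no cycle, so each such trail is a simple path. Hence $(V,U)$ is an edge-disjoint union of $f/2$ simple paths, each joining two distinct token vertices, and each such path, being a path in the tree $T$, is the unique --- hence shortest --- path in $T$ between its endpoints. To realize this in \NC, at every vertex $v$ \emph{simultaneously} fix an arbitrary pairing of the $U$-edges incident to $v$, leaving exactly one of them unmatched when $\deg_U(v)$ is odd (equivalently, when $v$ is a token). This local transition system turns $U$ into a disjoint union of paths whose ends are exactly the unmatched edge-ends, one at each token; a traced path cannot close into a cycle (that would be a cycle inside $T$) and cannot start and end at the same token (which has only one unmatched end), so we obtain exactly $f/2$ paths, each between two distinct tokens.

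Finally, to output the pairing I would extract, for each path, its two token endpoints. View each path as a doubly linked list of its $U$-edges, linked through the matched pairs at shared vertices; $O(\log n)$ rounds of pointer jumping (list ranking) identify, for every edge, the two extreme edges of its path, and hence the token at each extreme (the vertex incident to the unmatched end). Emitting this pair of tokens for each path gives the required pairing, and the whole computation --- Euler tour, subtree parities, local pairings, list ranking --- runs in \NC. The only step that is not a completely routine tree computation is this last path extraction, and that is a textbook use of list ranking; the main content of the lemma is really the parity identity showing that $U$ has the token set as its set of odd-degree vertices.
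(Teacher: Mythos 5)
Your construction is essentially identical to the paper's: the edge set $U$ (the tree edges with an odd number of tokens on each side) is exactly the set of edges the paper keeps after discarding the even--even edges, the parity identity $\deg_U(v)\equiv(\text{token parity at }v)$ is the paper's key observation, and the local pairing of edge-ends at each vertex followed by pointer doubling to trace out the paths is the paper's final step. The one point you gloss over is that several tokens may sit on the same vertex (this genuinely occurs in the application, where each odd face drops a token on an arbitrary one of its vertices), so your indicator $[\,v\text{ is a token}\,]$ must be read as the parity of the number of tokens at $v$, and tokens sharing a vertex must first be paired among themselves by empty paths, leaving at most one per vertex --- a step the paper carries out explicitly.
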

\begin{proof}
	For each edge $e\in T$, we will count the number of tokens on either side of $T$ when $e$ is removed. Since there are an even number of tokens, this count must either be odd on both sides or even on both sides. We will do this in parallel for every edge. We then remove all of the edges whose token counts were even-even.
	
	After this operation, the degree of every vertex $v$ must have the same parity as the number of tokens on it. This is because for each edge $e$ adjacent to $v$, the number of tokens on the other side of $e$ is odd. So the number of tokens not on $v$ has the same parity as $\deg(v)$. But the total number of tokens on the entire tree is even, so this parity is also shared by the number of tokens on $v$.
	
	Now we do the following in parallel for each vertex $v$: We pair up all the tokens on $v$ in an arbitrary way until there is at most one token left. We will then pair the remaining token, if any, with one of the remaining edges; there must be at least one edge if there is at least one token. Now there are an even number of edges adjacent to $v$ that remain. We pair them up in an arbitrary way, so that whenever we use an edge in a pair to enter $v$ we exit using the other edge.
	
	Now by following the paths from each token to the edge it is assigned to, we will get to another token, and this gives us a pairing between tokens. Note that this path following does not have to be done sequentially, but can instead be done using the doubling trick to get an \NC~algorithm.
\end{proof}

We now have the ingredients needed to finish the proof of \cref{lem:evenwalks}.

\begin{proof}[Proof of \cref{lem:evenwalks}]
	We first find $\card{E}/288$ edge-disjoint faces by invoking \cref{lem:edge-disjoint-faces}. If at least half of these faces are even, we return this half. Otherwise we invoke \cref{lem:disjoint-walks}. We have $\card{E}/576$ odd faces, any by possibly dropping one of them we can supply an even number $f\geq \card{E}/576-1$ of odd faces to the algorithm described by \cref{lem:disjoint-walks} and obtain $(\card{E}/576-1)^2/16\card{E}=\Omega(\card{E})$ edge-disjoint even walks. This finishes the proof.
\end{proof}

\subsection{Finding Gomory-Hu trees and minimum odd cuts}
\label{sec:GH}
    
    In this section, we will give an \NC~algorithm for constructing a Gomory-Hu tree for a planar graph $G = (V, E)$ with
    edge weights given by $w : E \rightarrow \R_{\geq 0}$ and finding a minimum odd cut. We will crucially use the fact that an $s$-$t$ max-flow 
and min-cut can be computed
    in a planar graph in \NC~\cite{johnson}. For each pair of vertices $u, v \in V$, let $f(u, v)$ denote the weight of a minimum $u$-$v$
cut in $G$.

Note that if $(S,\overline{S})$ is a minimum $u-v$ cut, then $S$ must consist of a number of connected components of $G$ together with an internally connected subset of vertices. This is because if $S$ contains two disjoint sets $S_1,S_2$ that have no edges to each other, we can find a smaller $u-v$ cut by either taking $S-S_1$ or $S-S_2$ depending on which one still contains $u$. In any case, the graph obtained by shrinking $S$ in $G$ will always remain planar.
    
The sequential algorithm for constructing a Gomory-Hu tree has, at any point, a tree $T$ defined on a partition $S_1, \ldots, S_k$
of $V$, and a weight function $w'$ defined on the edges of $T$. The starting partition is simply $V$, with $T$ having no edges.
The partition and $T$ satisfy: 
\begin{itemize}
\item
For each edge $(S_i, S_j) \in T, \ \exists \ u \in S_i, \ v \in S_j$ such that $w'(S_i, S_j) = f(u, v)$.

\item
The removal of edge $(S_i, S_j)$ from $T$ disconnects $T$. This splits the partitions into two sets, and naturally defines a cut,
say $(S, \overline{S})$ in $G$. This cut must be a minimum $u$-$v$ cut in $G$.
\end{itemize}

In each iteration, the sequential algorithm refines the tree by {\em splitting} one of the partitions into two as follows. It
picks a partition having at least two vertices, say $S_i$. Let $u, v \in S_i$. Let $T_1, \ldots T_l$ be the subtrees of $T$ incident
at node $S_i$. By {\em shrinking} subtree $T_j$ we mean identifying all vertices in $T_j$ and replacing it by single vertex $t_j$.
All edges incident at vertices in $T_j$ from outside $T_j$ are now incident at $t_j$, with the same weight as before. 
Shrinking $T_1, \ldots T_l$ gives a graph on $S_i \cup \{t_1, \ldots, t_l \}$. Let this graph be $G'$; clearly it will
be planar. In $G'$, find a minimum $u$-$v$ cut.
It is easy to show that the weight of this cut will also be $f(u, v)$.

This cut will partition $S_i$ into two sets, say $S'$ and $S''$, with $u \in S'$ and $v \in S''$.
Replace $S_i$ by these two sets to obtain a partition on $k+1$ sets.
The new tree will contain the edge $(S', S'')$ with weight $w'(S', S'') = f(u, v)$. Next, among the subtrees $T_1, \ldots T_l$ take the ones 
on the $u$ side ($v$ side) of the cut and let them be incident at $S'$ ($S''$). The algorithm ends when each partition is a singleton
vertex. The tree so found will be a Gomory-Hu tree.

We now give our \NC~algorithm. The main difference lies in the way set $S_i$ is split. We first define the notion
of a central vertex for $S_i$. Pick a vertex $r \in S_i$ and for each remaining
vertex $v \in S_i$, find a mimimal minimum $r$-$v$ cut in the graph $G'$ defined above after shrinking subtrees incident to $S_i$.
Let $S_v$ denote this cut and let ${S'_v} = S_v \cap S_i$. We will say that $r$ is a {\em central vertex} for $S_i$ if
for each $v \in S_i, \ v \neq r$, $|{S'_v}| \leq |S_i|/2$. Let us first show that such a vertex exists.

\begin{lemma}
For any partition $S_i$, a central vertex $r$ exists for $S_i$.
\end{lemma}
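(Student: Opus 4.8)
The plan is to show that, for a fixed source, the minimal minimum cuts out of it form a laminar family, and then to run a centroid-style descent on the choice of source $r$. Write $n:=|S_i|$; all cuts are taken in the fixed graph $G'$, and $d(\cdot)$ denotes its (submodular) cut function. For $r\in S_i$ and $v\in S_i\setminus\{r\}$ let $S_v^{(r)}$ denote the minimal minimum $r$--$v$ cut, with $v$ on the chosen side; this is well defined because the intersection of two minimum $r$--$v$ cuts on the $v$-side is again one, by submodularity.

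The first step is to prove that $\mathcal{A}_r:=\{S_v^{(r)}\mid v\in S_i\setminus\{r\}\}$ is a laminar family. This is a routine uncrossing argument: given two members $S_v^{(r)},S_w^{(r)}$ that intersect, submodularity of $d(\cdot)$ applied to their union and intersection (or to their two set-differences), together with the minimality built into $S_v^{(r)}$ and $S_w^{(r)}$, forces one of them to contain the other; en route one also obtains the monotonicity $w\in S_v^{(r)}\Rightarrow S_w^{(r)}\subseteq S_v^{(r)}$. Hence the inclusion-maximal members of $\mathcal{A}_r$ are pairwise disjoint, and since $v\in S_v^{(r)}$ always, their traces on $S_i$ partition $S_i\setminus\{r\}$ into parts $P_1,\dots,P_p$; because $|S_v^{(r)}\cap S_i|\le|P_j|$ whenever $S_v^{(r)}$ sits inside the maximal set producing $P_j$, the vertex $r$ is central for $S_i$ precisely when $\max_j|P_j|\le n/2$.

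The second step is the descent. Let $\Phi(r):=\max_j|P_j|$. If $\Phi(r)\le n/2$ we are done; otherwise let $M$ be the heaviest maximal member of $\mathcal{A}_r$ --- unique, since maximal members are disjoint --- put $H:=M\cap S_i$ with $|H|>n/2$, and pick $u\in H$ with $S_u^{(r)}=M$ (such $u$ exists because $M$, being a maximal member of $\mathcal{A}_r$, equals $S_v^{(r)}$ for some $v$, and that $v$ lies in $M\cap S_i$). Re-root at $u$. I claim every maximal member $N$ of $\mathcal{A}_u$ satisfies $N\subseteq M$ or $N\subseteq M^c:=V(G')\setminus M$; in the first case $u\in M\setminus N$, so $|N\cap S_i|\le|H|-1$, and in the second $|N\cap S_i|\le n-|H|<n/2\le|H|-1$. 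Thus every part of $u$'s partition has size $<|H|$, i.e.\ $\Phi(u)<\Phi(r)$; since $\Phi$ is a positive integer bounded by $n-1$, finitely many re-rootings reach a central vertex.

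The crux --- and the step I expect to be the main obstacle --- is this dichotomy for the maximal members of $\mathcal{A}_u$. I would prove it by uncrossing a maximal member $N=S_w^{(u)}$ against $M$ and $M^c$, using that $M^c$ is itself a minimum $r$--$u$ cut: it is the $r$-side of the minimum $r$--$u$ cut $M=S_u^{(r)}$, so $d(M^c)=f(r,u)$ with $r\in M^c$ and $u\notin M^c$. For a maximal $N$ not containing $r$ (all but one, by disjointness), if its representative $w$ lies in $M$ then $w\in N\cap M$ separates $u$ from $w$ while $(M\cup N)^c$ separates $r$ from $u$, and submodularity forces $N\cap M$ to be a minimum $u$--$w$ cut inside $N$, so minimality of $N$ gives $N\subseteq M$; if instead $w\notin M$, the symmetric computation (with $M^c$ in the intersection) gives $N\subseteq M^c$. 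Finally, the unique maximal member of $\mathcal{A}_u$ containing $r$ (it contains $S_r^{(u)}\ni r$) must land in the second case, since landing in the first would put $r$ inside $M$. Only submodularity of cuts and the definition of minimal minimum cuts enter, so planarity is not needed for existence --- it enters only later, when these cuts are computed in \NC~via planar max-flow.
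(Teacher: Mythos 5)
Your overall plan (laminarity of the minimal minimum cuts out of a fixed root, plus a re-rooting descent with potential $\Phi$) is a genuinely different route from the paper, but the crux step is false as stated, and with it the strict decrease $\Phi(u)<\Phi(r)$. The problematic case is exactly the one you wave away at the end: the maximal member $N_0$ of $\mathcal{A}_u$ that contains the old root $r$ may have its defining vertex $w$ \emph{inside} $M$, and then neither of your two uncrossing computations applies (the first needs $r\notin N_0$ so that $(M\cup N_0)^c$ is an $r$--$u$ cut; the second needs $w\in M^c$), and the conclusion genuinely fails. Concrete counterexample: take the triangle on $\{r,u,w\}$ with capacities $c(ru)=1$, $c(rw)=2$, $c(uw)=3$, and $S_i=\{r,u,w\}$, $n=3$. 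The unique minimum $r$--$u$ cut has $u$-side $\{u,w\}$, so $M=S_u^{(r)}=S_w^{(r)}=\{u,w\}$ is the single maximal member of $\mathcal{A}_r$, $\Phi(r)=2>n/2$, and $u$ is a legitimate representative. Re-rooting at $u$: the unique minimum $u$--$w$ cut has $w$-side $\{w,r\}$, so $S_w^{(u)}=\{w,r\}$, which crosses $M$ (it is contained neither in $M$ nor in $M^c=\{r\}$), and $\Phi(u)=2=\Phi(r)$ while $u$ is not central. So your dichotomy lemma is false, the potential need not drop, and termination of the descent is unproven (your final sentence, ``must land in the second case, since landing in the first would put $r$ inside $M$,'' is circular: it assumes the dichotomy for the very member in question). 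The lemma itself is of course fine in this example ($w$ is central), but your argument does not establish it.

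For comparison, the paper avoids any local search: it takes the Gomory--Hu tree $T$ of the graph, observes that the nodes of $S_i$ induce a connected subtree $T'$, and picks a centroid $r$ of $T'$, i.e.\ a node all of whose incident subtrees have at most $\lvert S_i\rvert/2$ vertices. For each $v\in S_i$, the tree edge on the $r$--$v$ path supplies a minimum $r$--$v$ cut whose $v$-side meets $S_i$ inside a single subtree at $r$, hence in at most $\lvert S_i\rvert/2$ vertices; since the minimal minimum $r$--$v$ cut is contained in the $v$-side of every minimum $r$--$v$ cut, centrality follows immediately. If you want to salvage your approach, you would need a correct replacement for the dichotomy (handling the member of $\mathcal{A}_u$ containing $r$, whose trace can straddle $M$), or a different potential; as it stands the existence proof has a genuine gap.
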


\begin{proof}
Let $T$ be the eventual Gomory-Hu tree found by the sequential algorithm stated above. 
Remove all vertices not in $S_i$ from $T$. The resulting graph, say $T'$, will still be connected, since the finer partitions of $S_i$ 
always form a connected subtree of the tree on partitions at any stage of the algorithm. 
It is easy to see that there is a vertex $r \in T'$ such that each subtree of $T'$ incident at $r$ has at most $|T'|/2$ vertices.
Since $T$ is a Gomory-Hu tree, for each $v \in S_i, \ v \neq r$, a minimum $v$-$r$ cut is defined by one of the edges of 
$T$ that lies in $T'$. It follows that each such cut satisfies $|{S'_v}| \leq |S_i|/2$ and hence $r$ is a central vertex for $S_i$.
\end{proof}

A central vertex for $S_i$ can be found in \NC: For each vertex $r \in S_i$, test if it is a central vertex by finding, in parallel, a
minimal minimum $v$-$r$ in $G'$ for each vertex $v \in S_i, \ v \neq r$. From now on, let $r$ denote a central vertex for $S_i$.
The following fact is straightforward:    

    \begin{lemma}
    Let $r, u, v \in V$
    and let $S_u$ and $S_v$ be minimal minimum $u$-$r$ and $v$-$r$ cuts in $G$, respectively. Then $S_u$ and $S_v$ do not cross.
    \end{lemma}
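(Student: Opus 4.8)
The plan is to run the standard submodular ``uncrossing'' argument for minimum cuts sharing a common terminal $r$. Adopt the convention that $S_u$ (resp.\ $S_v$) is the side of the cut containing $u$ (resp.\ $v$), so $r\notin S_u$ and $r\notin S_v$, and that \emph{minimal} means $S_u$ is inclusion-minimal among the $u$-sides of minimum $u$-$r$ cuts (likewise $S_v$); with the opposite convention one would simply replace a set by its complement, which alters neither $\delta(\cdot)$ nor the crossing relation. Write $d(S):=\sum_{e\in\delta(S)}w_e$ for the cut function. It is symmetric, $d(S)=d(V\setminus S)$, and submodular, $d(A)+d(B)\ge d(A\cap B)+d(A\cup B)$; combining these two facts gives the posimodular inequality $d(A)+d(B)\ge d(A\setminus B)+d(B\setminus A)$ (apply submodularity to $A$ and $V\setminus B$ and use symmetry). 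By definition $d(S_u)=f(u,r)$ and $d(S_v)=f(v,r)$, and every $u$-$r$ cut has $d$-value at least $f(u,r)$.

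Suppose for contradiction that $S_u$ and $S_v$ cross, i.e.\ $S_u\cap S_v$, $S_u\setminus S_v$, and $S_v\setminus S_u$ are all nonempty. I would split into two cases according to whether the endpoints lie in the ``other'' cut. If $u\in S_v$ (the case $v\in S_u$ is symmetric under swapping $u\leftrightarrow v$), then $S_u\cap S_v$ contains $u$ but not $r$, so it is a $u$-$r$ cut and $d(S_u\cap S_v)\ge f(u,r)$; likewise $S_u\cup S_v$ contains $v\in S_v$ but not $r$, so it is a $v$-$r$ cut and $d(S_u\cup S_v)\ge f(v,r)$. Submodularity then forces $f(u,r)+f(v,r)=d(S_u)+d(S_v)\ge d(S_u\cap S_v)+d(S_u\cup S_v)\ge f(u,r)+f(v,r)$, so equality holds throughout and in particular $d(S_u\cap S_v)=f(u,r)$. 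Since $S_u\setminus S_v\ne\emptyset$ we have $S_u\cap S_v\subsetneq S_u$, so $S_u\cap S_v$ is the $u$-side of a minimum $u$-$r$ cut strictly inside $S_u$ — contradicting the minimality of $S_u$.

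In the remaining case $u\notin S_v$ and $v\notin S_u$, so $u\in S_u\setminus S_v$ and $v\in S_v\setminus S_u$, with $r$ in neither; thus $S_u\setminus S_v$ is a $u$-$r$ cut and $S_v\setminus S_u$ is a $v$-$r$ cut. Here I would use posimodularity: $f(u,r)+f(v,r)=d(S_u)+d(S_v)\ge d(S_u\setminus S_v)+d(S_v\setminus S_u)\ge f(u,r)+f(v,r)$, forcing $d(S_u\setminus S_v)=f(u,r)$. As $S_u\cap S_v\ne\emptyset$ we get $S_u\setminus S_v\subsetneq S_u$, again a minimum $u$-$r$ cut strictly inside $S_u$ and contradicting minimality. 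In every case we reach a contradiction, so $S_u$ and $S_v$ do not cross.

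The argument is routine and I do not expect a genuine obstacle; the only point requiring care is the case split on whether $u\in S_v$ and $v\in S_u$, since it determines whether the intersection/union pair (controlled by submodularity) or the difference pair (controlled by posimodularity) is the one that simultaneously avoids $r$ and separates the right terminals. It is also essential to invoke the \emph{minimality} of $S_u$ or $S_v$: two arbitrary minimum cuts separating a common vertex from two different vertices can cross, so the conclusion genuinely uses that $S_u$ and $S_v$ are the inclusion-minimal ones.
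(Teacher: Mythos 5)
Your proof is correct and complete; the case analysis (submodularity for the intersection/union pair when $u\in S_v$ or $v\in S_u$, posimodularity for the difference pair otherwise) is exhaustive, and each case correctly produces a strictly smaller minimum cut contradicting minimality. The paper states this lemma without proof, calling it ``straightforward,'' and your argument is exactly the standard uncrossing reasoning the authors are implicitly invoking.
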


\begin{corollary}
\label{cor.laminar}
Let $r, v_1, \ldots v_k \in V$
    and let $S_{v_1}, \ldots, S_{v_k}$ be minimal minimum $v_1$-$r$, ... $v_k$-$r$ cuts in $G$, respectively. Then $S_{v_1}, \ldots, S_{v_k}$
form a laminar family.
    \end{corollary}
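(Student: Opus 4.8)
The plan is to reduce the corollary to the pairwise statement already in hand. Recall that, by definition, a family of subsets of $V$ is laminar exactly when no two of its members cross. Hence it suffices to show that for every pair of indices $i\neq j$, the sets $S_{v_i}$ and $S_{v_j}$ do not cross. But this is precisely the content of the preceding lemma, applied with $u:=v_i$, $v:=v_j$, and the same reference vertex $r$: since $S_{v_i}$ is a minimal minimum $v_i$–$r$ cut and $S_{v_j}$ a minimal minimum $v_j$–$r$ cut, the lemma guarantees they do not cross. Running this over all $\binom{k}{2}$ pairs establishes laminarity.

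A few points I would make explicit so the reduction is clean. First, all of $S_{v_1},\dots,S_{v_k}$ are taken with respect to the \emph{same} vertex $r$ and with the \emph{same} convention for "minimal minimum cut" (the inclusion-wise smallest minimum-cut side separating $v_\ell$ from $r$; this is well defined since the intersection of two minimum $v_\ell$–$r$ cuts on the $v_\ell$ side is again a minimum cut by submodularity). This uniformity is what lets the pairwise non-crossing statements be combined into a single laminar family. Second, I would note that $r\notin S_{v_\ell}$ for every $\ell$, so $V\setminus\bigcup_\ell S_{v_\ell}\neq\emptyset$; consequently the notion of "crossing" relevant to laminarity coincides with the three-set crossing handled by the lemma, and no separate argument about the complements is needed.

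I do not anticipate a real obstacle here: the corollary is essentially a restatement of "pairwise non-crossing $\Rightarrow$ laminar," and the work is entirely in the lemma that precedes it. The only thing requiring a word of care is the consistency of the "minimal" choice across all $k$ cuts, which I would address as above.

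\begin{proof}
By definition, a family of subsets of $V$ is laminar if and only if no two of its members cross. Fix any $i\neq j$. Applying the previous lemma with $u:=v_i$ and $v:=v_j$ (the target vertex being $r$ in all cases), we conclude that the minimal minimum $v_i$–$r$ cut $S_{v_i}$ and the minimal minimum $v_j$–$r$ cut $S_{v_j}$ do not cross. Since this holds for every pair $i\neq j$, the family $S_{v_1},\dots,S_{v_k}$ is laminar.
\end{proof}
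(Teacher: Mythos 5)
Your proof is correct and matches the paper's approach: the corollary is stated there as an immediate consequence of the preceding lemma that two minimal minimum $u$-$r$ and $v$-$r$ cuts do not cross, and since the paper's notion of laminarity is exactly pairwise non-crossing, applying that lemma to every pair $(v_i, v_j)$ with the common vertex $r$ is all that is needed. Your added remarks about the uniform choice of $r$ and the well-definedness of the minimal minimum cut are sensible but not a departure from the paper's reasoning.
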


Let $r$ denote a central vertex for $S_i$ that is found by the algorithm. By \cref{cor.laminar}, the cuts $S_v$, for each vertex $v \in S_i, \ v \neq r$
form a laminar family. Let $M_1, \ldots , M_l$ be the maximal sets of this laminar family. Clearly, we can split $S_i$ into the
$l$ sets $M_1 \cap S_i,  \ldots , M_l \cap S_i$ and attach subtrees to appropriate sets as given by $M_1, \ldots , M_l$.
This can be done for all sets $S_i$ of the current partition, in parallel. This defines one iteration of our parallel algorithm.
Clearly, after each iteration, the cardinality of the largest set in the partition drops by a factor of 2 and therefore only $O(\log n)$
such iterations are needed. Hence we get:

\begin{theorem}
\label{thm:gomoryhu}
There is an \NC~algorithm for obtaining a Gomory-Hu tree for an edge-weighted planar graph.
\end{theorem}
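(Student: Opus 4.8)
The proof assembles the ingredients developed above into a single iterative procedure. The plan is to maintain, exactly as in the sequential Gomory--Hu algorithm, a tree $T$ on a partition $S_1,\dots,S_k$ of $V$ with edge weights $w'$, starting from the trivial partition $\{V\}$ and the edgeless tree, and preserving the two stated invariants: every tree edge $(S_a,S_b)$ carries weight $f(u,v)$ for some $u\in S_a$, $v\in S_b$, and removing it from $T$ induces a minimum $u$--$v$ cut of $G$. When every part is a singleton, $T$ is a Gomory--Hu tree. The only change from the sequential algorithm is how one \emph{split} round is performed: instead of splitting a single part against a single pair, we split every part with at least two vertices simultaneously, and within each part we split against one well-chosen source vertex all at once.

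Concretely, for each part $S_i$ with $|S_i|\ge 2$ I first form the planar graph $G'$ by shrinking each subtree of $T$ incident to $S_i$ into a single vertex; this keeps the graph planar because every such subtree is a union of previously formed connected pieces and contracting connected vertex sets preserves planarity, and it does not change the value of a minimum $u$--$v$ cut for $u,v\in S_i$. I then find a central vertex $r$ for $S_i$ in \NC, by trying every candidate $r\in S_i$ in parallel and, for each, computing in parallel a minimal minimum $v$--$r$ cut $S_v$ in $G'$ for every $v\in S_i\setminus\{r\}$ via a planar max-flow computation \cite{johnson}; the existence lemma above guarantees some candidate satisfies $|S_v\cap S_i|\le|S_i|/2$ for all $v$. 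Fixing such an $r$, \cref{cor.laminar} gives that the cuts $S_v$ form a laminar family; I take its maximal members $M_1,\dots,M_l$, replace $S_i$ by the parts $M_1\cap S_i,\dots,M_l\cap S_i$ together with the leftover singleton part $\{r\}$, add the corresponding tree edges with weights equal to the relevant minimum cut values, and reattach each old incident subtree to the unique new part on its side of the laminar family. All parts are processed in parallel, giving one round.

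For the running time, centrality gives $|M_j\cap S_i|\le|S_i|/2$ for every $j$, while $\{r\}$ is a singleton, so every new part has at most half the vertices of $S_i$; hence the size of the largest part drops by a factor of $2$ each round and $O(\log|V|)$ rounds suffice. Each round consists of polynomially many planar max-flow computations and laminar-family manipulations, all in \NC, so the whole algorithm is in \NC. For correctness one argues that a round transforms a valid intermediate Gomory--Hu state into another one: the simultaneous split of $S_i$ against $r$ is exactly what the sequential algorithm produces after it splits $S_i$ against $r$ and each vertex of $S_i\setminus\{r\}$ one at a time in an order consistent with the laminar family, since all those cuts share the source $r$ and therefore nest, the weights being correct by the shrinking fact above.

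I expect the delicate point to be this correctness of the \emph{simultaneous} split --- that doing all the splits inside $S_i$ against the single central vertex $r$ at once yields a tree every edge of which is still a genuine minimum cut of $G$, and that each incident subtree is reattached to the right new part. Laminarity of the cuts $S_v$ from \cref{cor.laminar} is what makes this possible, but one must match the nesting of the $S_v$'s with the refinement the sequential Gomory--Hu algorithm would produce, so that no tree edge ends up mis-weighted; this is the standard but somewhat technical Gomory--Hu nesting argument adapted to the parallel setting.
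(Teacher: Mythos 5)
Your proposal follows essentially the same route as the paper's proof: shrink the incident subtrees to keep planarity, find a central vertex $r$ for each part by testing all candidates in parallel with planar min-cut computations, use \cref{cor.laminar} to get a laminar family of minimal minimum $v$--$r$ cuts, split each part along the maximal members of that family, and repeat for $O(\log n)$ rounds since the largest part halves each time. The only difference is cosmetic --- you explicitly keep $\{r\}$ as a leftover singleton part and flag the correctness of the simultaneous split, both of which the paper leaves implicit.
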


Now we use Padberg and Rao's theorem that states that the Gomory-Hu tree of a graph must contain a minimum odd cut as one
of its edges \cite{PadbergR} to finish the proof of \cref{lem:minoddcut}.

\begin{proof}[Proof of \cref{lem:minoddcut}]
	We first find a Gomory-Hu tree, then try all of the cuts obtained by removing an edge of the tree. We return the minimum among cuts that split the vertices into odd pieces. Clearly all of this can be done in parallel, and hence the algorithm is in \NC.
\end{proof}

\begin{remark}
An alternative way of finding a minimum odd cut $S$ is to use the Pickard-Queyranne structure of minimum $s$-$t$ cuts
\cite{picard}. However, that method is more cumbersome to describe.
\end{remark}
	\section{Extensions}
\label{sec:extensions}

In this section we will build on the machinery established in the previous sections to prove two
generalizations of \cref{thm:main}.

\begin{theorem}
\label{thm:two}
	There is an \NC~algorithm which given an edge-weighted planar graph with polynomially bounded 
	weights, returns a minimum weight perfect matching in it.
\end{theorem}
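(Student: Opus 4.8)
The plan is to run the same divide-and-conquer as in the proof of \cref{thm:main}, but with every linear objective carrying the cost function. Fix the polynomially bounded integral weights $c\in\Z_{\geq 0}^E$, let $c^*$ be the minimum cost of a perfect matching, and fix a polynomially bounded integer $\Lambda$ (say $\Lambda=\card{V}^{5}$) larger than every ``structural'' weight that will appear, so that in any objective of the shape $\Lambda^2 c+\Lambda\,w'+w$ the $c$-part is lexicographically dominant over the $w'$-part, which in turn dominates $w$. Because $c$ is polynomially bounded, all such objectives stay polynomially bounded and hence representable in unary, so \cref{lem:avg}, \cref{lem:minoddcut} (through \cref{thm:gomoryhu}) and every other \NC~primitive apply verbatim.

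First I would redefine a \emph{viable} set to be an odd set $S$ such that some \emph{minimum-cost} perfect matching $M$ has $\card{M\cap\delta(S)}=1$, and modify \cref{alg:perfectmatching} by using $w\leftarrow\Lambda c+\1_{\delta(S)}$ in place of $\1_{\delta(S)}$. Then the vertices of $\PM(G,w)$ are exactly the minimum-cost perfect matchings that use a single edge of $\delta(S)$; this face is non-empty precisely because $S$ is viable; and for $x=\avg(\PM(G,w))$ any edge $e\in\delta(S)$ with $x_e>0$ lies in such a matching $N$. Splitting $N$ across $S$ shows $\mathrm{opt}(G_1)+\mathrm{opt}(G_2)+c_e\leq c^*$, while any perfect matchings of $G_1,G_2$ together with $e$ form a perfect matching of $G$, so $\mathrm{opt}(G_1)+\mathrm{opt}(G_2)+c_e\geq c^*$; hence the matchings returned recursively for $G_1,G_2$ (carrying the restriction of $c$), together with $e$, form a minimum-cost perfect matching of $G$, and the depth/time analysis is unchanged from the unweighted case.

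All that remains is to find a \emph{balanced} viable set in the new sense. Rather than physically contracting, I would keep everything inside $G_0$ and maintain a laminar family $\mathcal L$ of tight odd sets of the current graph (the ``virtual shrinks'', each with $\card{f^{-1}(\cdot)}<c_1\card{V_0}$) together with a reference point $x$ lying in a face of the form $\PM\!\bigl(G_0,\ \Lambda^2 c+\Lambda\sum_{L\in\mathcal L}\1_{\delta(L)}+w\bigr)$ for some structural $w$; then every $L\in\mathcal L$ is tight at $x$ and every perfect matching in the support of $x$ is minimum-cost and crosses each $\delta(L)$ exactly once — which both certifies that each $f^{-1}(L)$ is viable in the new sense and supplies the point the uncrossing machinery requires. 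Each round of \textsc{Reduce} then finds $\Omega(\card E)$ edge-disjoint even walks in the quotient $G_0/\mathcal L$ by \cref{lem:evenwalks} (the weighted \textsc{Preprocess} arranges the degree hypotheses on the quotient using $x$), lifts them to closed walks in $G_0$, rebuilds a fresh objective $\Lambda^2 c+\Lambda\sum_{L}\1_{\delta(L)}+w$ with $w$ the sum of first-edge indicators, recomputes $x=\avg$ of that face, deletes the edges $e$ with $x_e=0$, extracts a tight odd set blocking each surviving walk via \cref{lem:main} and \cref{lem:minoddcut}, and finally uncrosses $\mathcal L$ together with these new sets using \cref{thm:uncrossing} to obtain the next $\mathcal L$. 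The crucial point is that \cref{lem:rotation} and \cref{lem:main} are unaffected: for a lifted walk $W$, $\dotp{\Lambda^2 c+\Lambda\sum_L\1_{\delta(L)}+w,\ \chi_W}=\Lambda^2\dotp{c,\chi_W}+\Lambda\sum_L\ds[L]{\chi_W}+\dotp{w,\chi_W}$ cannot vanish once $\Lambda$ beats the lower-order terms, and we orient each walk so that this circulation is negative; the uncrossing of \cref{sec:uncross-all} is purely combinatorial in $x$ and carries over unchanged, so $\Lambda(\mathcal L\cup\{\text{new sets}\})$ does not shrink and each walk still has an edge inside some member of the new $\mathcal L$. After $O(\log\card{E_0})$ rounds the quotient has at most $1/c_1$ vertices and the largest $f^{-1}(L)$ is the desired balanced viable set.

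The main obstacle I anticipate is not a new idea but the invariant bookkeeping: one must verify that every step — the weighted \textsc{Preprocess} (its edge deletions, the ``make connected'' step, and the elimination of degree-$2$ vertices), the edge deletions and blocking-set extraction in \textsc{Reduce}, and the uncrossing and \textsc{CheckBalancedViable} steps of \cref{sec:oddsets} — preserves simultaneously (i) planarity of the quotient, (ii) the existence of a minimum-cost perfect matching of the current graph crossing every $L\in\mathcal L$ exactly once, and (iii) the bound $\card{f^{-1}(\cdot)}<c_1\card{V_0}$ on every set ever produced; and that the objectives stay polynomially bounded, which they do because only $O(1)$ powers of $\Lambda$ are ever combined and a fresh objective is built each round (no accumulation of scales). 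Granting these routine verifications, \cref{thm:two} follows, and the common generalization to bounded-genus graphs follows by the same substitutions.
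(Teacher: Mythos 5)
Your proposal is correct and is essentially the paper's own argument: the paper proves \cref{thm:two} by lexicographically composing the given cost with the algorithm's internal $0/1$ structural weights, setting $c_e = n\cdot W_e + w_e$ so that the cost occupies the high-order bits and then rerunning the entire machinery, which is exactly your $\Lambda$-scaling idea; since the composite weights remain polynomially bounded, \cref{lem:avg}, \cref{lem:minoddcut}, and the rest apply verbatim. The only difference is one of packaging: the paper keeps the physical contractions and simply substitutes the composite weights wherever a structural weight was used (leaving implicit the compatibility-through-shrinking invariant you flag), rather than introducing your extra middle tier for the laminar family.
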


\begin{proof}
For a fixed integer $k$, assume that we are given a weight function on the edges of planar graph 
$G = (V, E)$,
\[W: E \rightarrow \{0, \ldots , n^k \}, \] 
and we wish to find a minimum weight perfect matching in $G$.
Recall that in \cref{sec:avg}, for the purpose of finding a perfect matching in $G$,
we had defined $0/1$ weights on edges given by function $w$.
Now, define the following composite weight function, $c$; its least significant $\log n$ bits
correspond to $w$ and the rest of the bits correspond to $W$.
\[ c_e = (n \cdot W_e) + w_e .\]

Clearly $c$ is polynomially bounded and can be used in place of $w$ to carry out 
the \NC~algorithm given in the previous sections.
The algorithm will return a minimum weight perfect matching w.r.t. weights given by $c$.
Since $w$ is $0/1$, for any perfect matching, the sum of weights according to $w$ is at most $n/2$.
Therefore in computing the weight of a perfect matching according to $c$, there will be no carry 
over from the least significant $\log n$ bits to the rest.
Hence the minimum weight perfect matching w.r.t. $c$ will also be a 
minimum weight perfect matching w.r.t. $W$.
\end{proof}

\begin{theorem}
\label{thm:three}
	There is an \NC~algorithm which given a bounded-genus graph, 
	 returns a perfect matching in it, if it has one.
\end{theorem}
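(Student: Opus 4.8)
The plan is to rerun \cref{alg:perfectmatching} together with all of its subroutines essentially unchanged, after first observing that planarity was used in only three places: \cref{lem:avg} (computing $\avg(\PM(G,w))$), \cref{lem:evenwalks} (extracting $\Omega(\card{E})$ edge-disjoint even walks), and \cref{lem:minoddcut} (finding a minimum odd cut, via the Gomory--Hu construction of \cref{sec:GH}). Every other ingredient — the divide-and-conquer skeleton, \textsc{Preprocess}, \textsc{Reduce}, \textsc{DisjointOddSets}, and all of the uncrossing machinery of \cref{sec:uncross-all} — is purely combinatorial. The only operations performed on the graph are deleting an edge and contracting an odd set which, as analyzed in \cref{sec:balanced-viable}, is always an internally connected piece together with some entire connected components; neither operation increases the genus (the stray components can be re-embedded inside faces incident to the piece, with the rotation system updated in \NC). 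So if the input has genus at most $g$, then so does every graph encountered in the recursion, and it suffices to prove bounded-genus versions of those three lemmas. Throughout, I treat $g$ as a fixed constant and assume a genus-$g$ cellular embedding is supplied with the input; all hidden constants (including $c_1$) are now allowed to depend on $g$.

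To generalize \cref{lem:avg}, note that the only planar ingredient in \cref{sec:avg} is the ability to count, in \NC, the minimum weight perfect matchings of $G$ and those through a fixed edge. I would replace this with the higher-genus analogue of Kasteleyn's theorem: for a graph cellularly embedded on an orientable surface of genus $g$, the perfect-matching generating polynomial $\sum_M \prod_{e\in M} z_e$ is an integer linear combination, with coefficients depending only on $g$, of the Pfaffians of $4^g$ orientations of $G$ (see \cite{K33} and the references therein). Substituting $z_e = y^{w_e}$ and reading off the coefficient of the lowest-degree monomial — first for $G$, and then after zeroing out the two entries of a fixed edge $e$, exactly as in the current proof of \cref{lem:avg} — recovers $\#G_w$ and $\#G^e_w$. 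Each of the $4^g = O(1)$ Pfaffians is a univariate polynomial of polynomially bounded degree and can be computed in \NC~(the Pfaffian of an integer matrix is in \NC; alternatively one evaluates at polynomially many points and interpolates), so $\avg(\PM(G,w))$ remains computable in \NC.

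To generalize \cref{lem:evenwalks}, observe that the proof of \cref{lem:edge-disjoint-faces} uses Euler's formula only; on a surface of genus $g$ it reads $\card{V}-\card{E}+\card{F}=2-2g$, so the same degree counting gives $\card{F}\geq \tfrac16\card{E}-2g$, and the planar dual still satisfies $\sum_v \deg(v) = 2\card{E} \leq 12\card{F} + O(g)$, hence has average degree $O(1)$. Since (as in the proof of \cref{lem:reduction}) $c_1$ can be chosen small enough that $\card{E}$ exceeds any prescribed constant whenever \textsc{Reduce} is invoked, the genus term is absorbed and \cref{lem:maximal} again produces $\Omega(\card{E})$ edge-disjoint faces; the remaining steps — pairing odd faces along a spanning tree in \cref{lem:disjoint-walks,lem:tree-pairing} — use no planarity at all. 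Thus \cref{lem:evenwalks} holds for bounded-genus graphs.

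The step I expect to require the most care is \cref{lem:minoddcut}. The Padberg--Rao fact that a minimum odd cut lies on a Gomory--Hu tree is insensitive to the graph class, and the construction of \cref{sec:GH} is built entirely out of minimum $s$-$t$ cut computations (both the splitting step and the central-vertex search). So what is actually needed is an \NC~algorithm for minimum $s$-$t$ cuts — equivalently maximum flows — in bounded-genus graphs; this is the single point at which the planar argument is not reused verbatim, and it calls for either a direct \NC~surface-flow procedure or a careful reduction of bounded-genus flow to a bounded number of planar flow instances (cutting the embedding along $O(g)$ non-separating curves and reconciling the flow across the duplicated boundaries). Granting this, \cref{thm:gomoryhu} and \cref{lem:minoddcut} go through verbatim. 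With all three lemmas available for bounded genus, every line of \cref{alg:perfectmatching} and of its subroutines executes exactly as before and in \NC, and the correctness and depth analyses are unchanged, which gives \cref{thm:three}.
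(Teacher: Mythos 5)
Your overall decomposition is exactly the paper's: planarity enters only through \cref{lem:avg}, \cref{lem:evenwalks}, and the minimum-odd-cut computation, and your bounded-genus versions of the first two are essentially identical to the paper's (the $4^g$-Pfaffian linear combination \`a la Galluccio--Loebl/Mahajan--Varadarajan for counting, and Euler's formula with $2-2g$ for the face count, with the additive $O(g)$ absorbed because $\card{E}$ can be assumed large). Those two parts are fine.

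The genuine gap is in the third ingredient, which you yourself flag as ``the step I expect to require the most care'' and then dispose of with ``granting this.'' An \NC~algorithm for (minimal) minimum $s$-$t$ cuts in bounded-genus graphs is not an off-the-shelf black box, and your two suggested routes are only gestures: a ``direct \NC~surface-flow procedure'' is not cited or constructed, and ``cutting along $O(g)$ non-separating curves and reconciling the flow across the duplicated boundaries'' is precisely where all the difficulty lives --- flow reconciliation across the cut boundaries is the hard part of surface max-flow algorithms and does not reduce to independent planar instances in any obvious way. The paper instead goes through minimum \emph{cuts} rather than flows: it adapts Borradaile et al., using the fact that a minimum $s$-$t$ cut in a genus-$g$ graph is a union of dual cycles all but one of which lie in homology classes that can be enumerated ($2^{O(g^2)}$ choices), performs the corresponding surgeries (each computable in \NC~via Erickson--Nayyeri's $\Z_2$-homology covers) to produce $2^{O(g^2)}$ planar instances, and runs the planar Gomory--Hu construction on each. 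Crucially, there is a second obstacle your proposal never touches: Borradaile et al.\ perturb edge weights randomly to make minimum cuts unique so that the cuts from different planarizations assemble into a single Gomory--Hu tree, and randomness is unavailable in \NC. The paper circumvents this by arguing that some minimum \emph{odd} cut must survive one of the surgery sequences intact, so it suffices to build a Gomory--Hu tree in each planarization separately and take the best odd cut found --- no global Gomory--Hu tree of the genus-$g$ graph is ever constructed. Without an argument of this kind (or an actual \NC~min-cut procedure for surfaces), your proof of \cref{lem:minoddcut} for bounded genus, and hence of \cref{thm:three}, is incomplete.
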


\begin{proof}
	In order to derive our algorithm for planar graphs, we used planarity in exactly three ways, and here we will show how one can obtain the same results for graphs embeddable on orientable surfaces of bounded genus.
	\begin{enumerate}
		\item $\Omega(\card{E})$ edge-disjoint even walks (\cref{lem:evenwalks}): We used planarity to extract $\Omega(\card{E})$ edge-disjoint faces and then argued that by pairing up the faces we get $\Omega(\card{E})$ edge-disjoint even walks.
		\item Counting perfect matchings (\cref{lem:avg}): We used planarity to argue that we can count the number of minimum weight perfect matchings and hence get a point inside a face of the matching polytope $\PM(G, w)$ when $w$ is polynomially bounded.
		\item Finding the minimum odd cut (using \cref{thm:gomoryhu}): We used the fact that minimal minimum $s$-$t$ cuts can be computed in \NC~for planar graphs \cite{johnson}, in order to prove that we can construct Gomory-Hu trees and find the minimum odd cut.
	\end{enumerate}
	
	Note that given a graph, one can find an embedding onto a surface of genus $g=O(1)$ in \NC~if one exists \cite{EK14}. So from now on, we assume this embedding is given to us. We now address how each of \cref{lem:evenwalks,lem:avg,thm:gomoryhu} can be proved for bounded genus graphs.

	For \cref{lem:evenwalks}, note that we simply need to obtain $\Omega(\card{E})$ edge-disjoint cycles in our graph. Pairing up odd cycles can be done as before using a spanning tree. In planar graphs these cycles were obtained from the faces, and we used Euler's formula $\card{V}-\card{E}+\card{F}=2$ to argue that in graphs without degree $1$ vertices and with at most half of the vertices having degree $2$, there must be $\Omega(\card{E})$ faces. We still have an Euler's formula in the case of bounded genus graphs, but with $2$ replaced by a (negative) constant. The proof still works as before and one can show that $\card{F}\geq a\card{E}-b$ for some $a>0$, which implies that $\card{F}=\Omega(\card{E})$.
	
	For \cref{lem:avg}, it is enough to be able to count matchings. To be more precise, given weights $w$ over the edges of the graph, we simply need to compute the perfect matching generating function
	\[ \sum_{M\text{ is perfect matching}} \prod_{e\in M}w_e, \]	
	in \NC~as long as the bit complexity of $w$ is polynomially bounded. Mahajan and Varadarajan showed how this can be done in \NC~by slightly modifying an algorithm of Gallucio and Loebl \cite{MahajanV, GL99}. Their method reduces computing the matching generating function to taking a linear combination of Pfaffians over planar graphs.
	
	Finally, for finding minimum odd cuts in bounded genus graphs: We will show that we can use the methods of Borradaile et al.~to find the minimum odd cut in \NC~\cite{BENW14}. The algorithm of Borradaile et al.~allows one to find minimum cuts between all pairs of vertices in bounded genus graphs in nearly linear time, however we will show that a slight modification of it runs in \NC. This almost shows that one can find the minimum odd cut in \NC, because every minimum odd cut is also a minimum $s$-$t$ cut for some $s$ and $t$. However, one still needs to be careful about cases where there can be multiple minimum $s$-$t$ cuts.
	
	The main idea behind the algorithm of Borradaile et al.~is that a minimum cut separating vertices $s$ and $t$ is composed of dual cycles (of which there are at most $2^{O(g)}$), all but one of which can be chosen from certain homology classes without regards to the pair $s$ and $t$. They use this observation to reduce the problem to finding minimum cuts in $2^{O(g^2)}$ planar graphs, where $g$ is the genus of the original graph. Roughly speaking, they enumerate all possible homology classes for all but one the cycles, and one by one, from each homology class they find the shortest possible cycle in the chosen homology class and perform some surgery on the graph and its embedding. These surgeries reduce the genus, until the embedding becomes planar. The surgeries are easy to perform in \NC, as long as the cycles are found in \NC.
	
	In order to define the homology classes, and also find shortest cycles given the homology class, Borradaile et al. use the results of Erikson and Nayyeri \cite{EN11}. Their algorithm again can be implemented in \NC. At a high level the ideas involved are construction of a spanning forest in the dual graph in order to define $\Z_2$-homology signatures, and then constructing a $\Z_2$-homology cover of the bounded genus graph and finding shortest paths between pairs of vertices in the cover. All of these can be parallelized and are in particular in \NC as long as the genus is bounded.
	
	We now point out the main technicality needed to adapt the algorithm of Borradaile et al. Although minimum cuts between all pairs of vertices can be found from the minimum cuts in the $2^{O(g^2)}$ planarizations, it is not guaranteed that these cuts are nicely uncrossed from each other and form a Gomory-Hu tree. In fact, Borradaile et al.~perturb the weights in order to have unique minimum cuts, and in order to be able to merge the Gomory-Hu trees from $2^{O(g^2)}$ planar graphs into one Gomory-Hu tree for the original graph. We cannot afford to perturb the weights however because we do not have access to random bits. Instead we argue that a minimum odd cut can be directly found in one of the planarizations. Therefore one can produce the planarizations in \NC~and then construct a Gomory-Hu tree from each and find the minimum odd cut amongst them.
	
	Note that if the weights of the graph were slightly perturbed, then the minimum odd cut would have been the {\em unique} minimum $s$-$t$ cut for some pair $s$ and $t$, and we would have been able to find the unique cut in one of the planarizations. But this shows that even if the edges were not perturbed, a minimum odd cut must survive the surgeries performed on the graph for some sequence of fixed homology signatures. In other words, a minimum odd cut must be comprised of dual cycles, all but one of which are the minimum cycles from given homology classes. So a minimum odd cut must survive one of the sequences of surgeries performed on the graph, and found at the end by our algorithm for finding Gomory-Hu trees in planar graphs.
\end{proof}

We remark that all of the subroutines mentioned in the previous proof still remain in \NC~for genus 
up to $O(\sqrt{\log n})$, except for the subroutine that finds the embedding of the graph. 
Hence, \cref{thm:three} can be slightly 
strengthened to handle graphs of genus $O(\sqrt{\log n})$, as long as the input graph is given along 
with its embedding. Finally observe that the common generalization of \cref{thm:two,thm:three} 
easily follows.

	\section{Discussion}

The main open problem of course is to go beyond bounded genus graphs and obtain an \NC~perfect matching algorithm for general, or even bipartite, graphs. Below we state some more easily accessible 
open problems.

We note that $K_{3,3}$-free graphs may have genus as high as $O(n)$. Counting the number of perfect matchings for this class of graphs is in \NC~\cite{K33}.
Can our algorithm be extended to obtain an \NC~algorithm for the search version? We note that an \NC~algorithm for finding an $s$-$t$ min-cut in $K_{3,3}$-free graphs 
would give this result.

An interesting problem defined by Papadimitriou and Yannakakis \cite{PY}, called Exact Matching, is the following: Given a graph $G$ with a subset of the edges marked red and
an integer $k$, find a perfect matching with exactly $k$ red edges. This problem is known to be in \RNC~\cite{MVV}, even though it is not
yet known to be in \P. For the case of planar graphs, the decision version of this problem is known to be in \NC, though the search version is not (it is easy to
check that the search version is in \P).
Can our techniques be used to obtain an \NC~algorithm for the search version?

\begin{remark}
	Very recent work \cite{EV} has resolved the open problem stated above about $K_{3,3}$-free graphs. \cite{EV} go further to give \NC~algorithms for finding a perfect matching,
	a minimum weight perfect matching if the weights are polynomially bounded, 
	and an $s$-$t$ min-cut in one-crossing-minor-free graphs. 
\end{remark}

	\section{Acknowledgements}

We wish to thank David Eppestein, László Lovász, and Satish Rao for valuable discussions.
	\printbibliography
\end{document}